\definecolor{ao}{rgb}{0.0, 0.5, 0.0}
\newtheorem{remark}{Remark}
\newtheorem{theorem}{Theorem}
\newtheorem{proposition}{Proposition}
\newtheorem{corollary}{Corollary}
\newtheorem{definition}{Definition}
\newtheorem{lemma}{Lemma}
\def\0{\mbox{\tiny $0$}}
\def\1{\mbox{\tiny $1$}}
\def\2{\mbox{\tiny $2$}}
\def\3{\mbox{\tiny $3$}}
\def\4{\mbox{\tiny $4$}}
\def\5{\mbox{\tiny $5$}}
\def\6{\mbox{\tiny $6$}}
\def\7{\mbox{\tiny $7$}}
\def\8{\mbox{\tiny $8$}}
\def\9{\mbox{\tiny $9$}}
\def\k{k_{_{B}}}
\def\r{\rangle}
\def\l{\langle}
\def\m{\bar{m}}
\def\p{\bar{p}}
\def\q{\bar{q}}
\def\pp{p_{12}}
\def\ppp{p_{11}}
\def\qq{q_{12}}
\def\b{\beta^{'}}
\def\bg{\beta_G^{(1)}}
\def\bgg{\beta_G^{(2)}}
\def\bsk{\beta_{SK}^{(1)}}
\def\bskk{\beta_{SK}^{(2)}}
\newcommand{\SOMMA}[2]{\displaystyle\sum\limits_{#1}^{#2}}
\newcommand{\sommaSigma}[1]{\displaystyle\sum\limits_{\lbrace#1\rbrace}}
\newcommand{\pder}[2]{\ensuremath{\dfrac{\partial #1}{\partial #2}}}
\long\def \beq#1\eeq {\begin{equation} #1 \end{equation}}
\long\def \beaq#1\eeaq {\begin{equation}\begin{aligned} #1 \end{aligned}\end{equation}}
\long\def \bes#1\ees {\begin{equation}\begin{split} #1 \end{split} \end{equation}}
\long\def \bea#1\eea {\begin{eqnarray} #1 \end{eqnarray}}
\long\def \bse[#1]#2\ese {\begin{subequations}\label{#1}\begin{align} #2 \end{align}\end{subequations}}
\newcommand{\dt}{\frac{\partial}{\partial t}}
\title{Replica symmetry breaking in dense neural networks}
\author[a,b,c]{Linda Albanese,}
\author[a,b]{Francesco Alemanno}
\author[a,c]{Andrea Alessandrelli,}
\author[a,b]{Adriano Barra,}
\affiliation[a]{Dipartimento di Matematica e Fisica,  Universit\`a  del Salento, Via per Arnesano, 73100, Lecce, Italy}
\affiliation[b]{Istituto Nazionale di Fisica Nucleare, Campus Ecotekne, Via Monteroni, 73100, Lecce,Italy}
\affiliation[c]{Scuola Superiore ISUFI, Campus Ecotekne, Via Monteroni, 73100, Lecce, Italy}
\abstract{Understanding the glassy nature of neural networks is pivotal both for theoretical and computational advances in Machine Learning and Theoretical Artificial Intelligence. Keeping the focus on dense associative Hebbian neural networks (i.e. Hopfield networks with polynomial interactions of even degree $P >2$), the purpose of this paper is two-fold: at first we develop rigorous mathematical approaches to address properly a statistical mechanical picture of the phenomenon of {\em replica symmetry breaking} (RSB) in these networks, then -deepening results stemmed via these routes- we aim to inspect the {\em glassiness} that they hide.
\newline 
In particular, regarding the methodology, we provide two techniques: the former (closer to mathematical physics in spirit) is an adaptation of the transport PDE to the case, while the latter (more probabilistic in its nature) is an extension of Guerra's interpolation breakthrough. Beyond coherence among the results,  either in replica symmetric and in the one-step replica symmetry breaking level of description, we prove the  Gardner's picture (heuristically achieved through the replica trick) and we identify the maximal storage capacity by a ground-state analysis in the Baldi-Venkatesh high-storage regime.
\newline
In the second part of the paper we investigate the glassy structure of these networks: at difference with the replica symmetric scenario (RS), RSB actually stabilizes the spin-glass phase. We report huge differences w.r.t. the standard pairwise Hopfield limit: in particular, it is known that it is possible to express the free energy of the Hopfield neural network  (and, in a cascade fashion, all its properties) as a linear combination of the free energies of an hard spin glass (i.e. the Sherrington-Kirkpatrick model) and a soft spin glass (the Gaussian or ''spherical'' model). While this continues to hold also in the first step of RSB for the Hopfield model, this is no longer true when interactions are more than pairwise (whatever the level of description, RS or RSB). For dense networks  solely the free energy of the hard spin glass survives. As the Sherrington-Kirkpatrick spin glass  is full-RSB (i.e. Parisi theory holds for that model), while the Gaussian spin-glass is replica symmetric, these different representation theorems prove a huge diversity in the underlying glassiness of associative neural networks.}
\begin{document}

\maketitle


\section*{Introduction}

As  the raise of Artificial Intelligence (AI) keeps spreading neural networks and learning algorithms in  countless meanders of society and scientific research, a rationale behind such an empirical progress continues to be a urgent priority in the agendas of theoreticians worldwide:
en route toward a Theory for AI (where all the spontaneous information processing skills that neural networks and learning machines enjoy would be somehow expected and no longer surprising)  {\em statistical mechanics of complex systems} (namely, Parisi {\em spin glass theory}) is a longstanding pillar. 
Glassy statistical mechanics has been indeed the main methodological approach allowing a post-winter pioneering -but exhaustive- picture of the Hopfield associative memory, achieved by Amit-Gutfreund-Sompolinksy (AGS) in the eighties \cite{Amit,AGS}: since the AGS milestone, it became evident that spin glasses and neural networks were intimately related and progresses in Computer Science arose from this relation quickly enlarged to computational complexity \cite{Ksat},  machine learning \cite{NoiConLenka},  combinatorial optimization \cite{postino}, error correcting codes \cite{Nishimori} and much more (see e.g. \cite{MezardMontanari,Steven}). A main reward in the usage of  glassy statistical mechanics, beyond a good comprehension of machine operational modes (welcome in eXplainable AI, XAI)   lies in painting phase diagrams for the neural architecture under inspection: this is a main route toward Optimized AI (OAI) as we briefly explain. Phase diagrams are plots in the space of the tunable parameters of the machine where its different operational modes naturally emerge and are split  by {\em phase transitions} much similar in spirit to those phase transitions that split the three different macroscopic behaviors of a glass of water in its phase diagram in Physics (i.e. the three regions: vapour, ice and liquid, in the space of its control parameters, namely pressure, temperature and volume). The knowledge of the phase diagram constitutes precious information as it allows setting the network in the desired operational regime {\em a priori}, before training and energy consumption. 
\newline
Glassy statistical mechanics is thus  the methodological leitmotif of the paper,  while the subject of the investigation  are dense Hebbian networks, i.e. generalizations of the Hopfield model where neurons -rather than interacting pairwisely-    interact in P-ples (such that when $P=2$ the Hopfield reference is recovered). Indeed dense neural networks \cite{denseHop1} are now taking hold, due to the fact that they have excellent properties of pattern recognition and image detection, remaining robust against adversarial attacks \cite{zhang, denseHop2,DAM-C}.  
\newline
As it is clearly emerging in these years by a  plethora of investigations (see e.g. \cite{ZecchinaOld, Zecchina-New,Kuhn,Remi,Gavin}), Replica Symmetry Breaking (RSB) is by far a crucial mechanism that should be better understood in modern information processing networks: despite working under the Parisi's replica symmetry breaking (RSB) scheme is notoriously challenging \cite{Talabook}, due to a series of breakthroughs that  Guerra obtained in their mathematical treatment in the past two decades (see e.g. \cite{Guerra}), times are ripe for such investigations, at least at the first step of RSB (that is the solely addressed here).
\newline 
Before we start reporting our results, we highlight that there are two -rather different- storage scalings (that results in manifestly different operational regimes) that these networks can hold: the Baldi$\&$Venkatesh {\em high storage regime} \cite{baldi} and  a new {\em high resolution regime} discovered last year \cite{PRLNN}.  
\begin{itemize}
\item Regarding the former, since the pioneering analyses by Baldi $\&$ Venkatesh \cite{baldi}, Bovier $\&$  Niederhauser \cite{Bovier} and Elisabeth Gardner \cite{gardner}, it became clear that the maximal storage capacity for these systems satisfies the following scaling:  calling $K$ the amount of patterns to store and $N$ the neurons in the network  $P$-wisely interacting, at most these network face a storage $K = \gamma N^{P-1}$ -for some positive $\gamma$  (indeed, for the Hopfield model -that  is recovered when $P=2$-  AGS theory predicts that $K \leq \gamma_c N^1$, with $\gamma_c \sim 0.138$). In this high storage regime -that we call the {\em Baldi $\&$ Venkatesh regime}- dense networks perform  standard signal-to-noise detection, namely if the pattern to be retrieved has magnitude $O(1)$, the noise can not be larger than the signal. 
\item Regarding the latter, in 2020 the existence of a completely different operational mode has been proved for these networks \cite{PRLNN}: these can sacrifice memory storage to lower their threshold for signal detection. For instance, a dense network whose neurons interact $4$-wisely (hence $P=4$) -forced to store just $K \propto N^1$ patterns (hence far from the Baldi $\&$ Venkatesh regime $K \propto N^3$)- can detect  a pattern whose intensity is $O(1)$ even when corrupted by a noise $O(\sqrt{N})$ in the large $N$ limit \cite{FrancAlberto,AgliariDeMarzo}.
\end{itemize}
We will deepen replica symmetry breaking in this {\em high-resolution regime} in a forthcoming paper, while in the present one we focus on dense networks solely in the {\em high-storage regime}.
\newline
\newline
The paper is structured as follows and presents the following results: 
\newline
Once introduced these networks, we adapt two mathematical methods for tackling their statistical mechanics description at the first step of replica symmetry breaking. At first, framing the present research within the plethora of methodologies that are raising as alternatives to the celebrated replica trick \cite{MPV} (see e.g. \cite{NPD,deepTransport,Alemannation1,barrabecca,Albert2,Antonio,Antonio2,Antonio3,Barbier,Bates1,Bates2,Bates3,Gavin,Marullo,Murrat1,MurratPanchenko,Pax,Subag1,Subag2}),  driven by calculus and analysis, in Section \ref{Sezione2}  we work out a PDE-based theory where it is possible to obtain the phase diagrams of these models  by solving suitable transport equations in the space of the control parameters,  then, grabbing from probability theory, in Section \ref{Sezione3} we adapt the celebrated Guerra's broken-replica interpolation \cite{Guerra} to the case. Beyond coherence among the results, we  also re-obtain both the Gardner picture and the  Baldi $\&$ Venkatesh  scaling,  beyond a number of new results useful for understanding the glassy nature of these neural networks, that we inspect in the second part of the paper.
\newline
By a straight comparison of the replica symmetric and broken replica symmetry phase diagrams, while the critical storage is mildly affected by the RSB phenomenon,  the glassy region -that shrinks close to disappearing in the replica symmetric description- gets actually stable by a step of replica symmetry breaking: this is discussed in Section \ref{Sezione4}. Further, in Section \ref{Sezione5}, we prove a series of representation theorems, that allow to decompose Hebbian networks into combinations of pure spin glasses, whose significance can be summarized as follows:
\begin{itemize}
\item at the replica symmetric (RS) level, the standard ($P=2$) Hopfield model (technically speaking its free energy) can be described as a linear combination of (the free energies of) two spin-glasses, the former a standard Sherrington-Kirkpatrick spin glass (that is full-RSB and where Parisi theory is exact \cite{Guerra,Talagrand}), the latter is a Gaussian (or ''spherical'' \cite{Bovier,Dembo}) spin glass (that is solely replica symmetric in the pairwise case \cite{soffice,Crisanti}).
\item at one step of replica symmetry breaking (1-RSB), the standard ($P=2$) Hopfield model (technically speaking its free energy) can still be described  by the above decomposition in terms of a hard and a soft spin glass.
\item at the replica symmetric level (RS), the dense ($P>2$) Hebbian network (technically speaking its free energy) is no longer a linear combination of (the free energies of) two spin glass, rather solely the hard part survives, namely that pertaining to a Sherrington-Kirkpatrick model with $P$-wise interactions.
\item  at one step of replica symmetry breaking (1-RSB), the dense ($P>2$) Hebbian network (technically speaking its free energy) is still no longer a linear combination of (the free energies of) two spin glass and solely the hard part survives.
\end{itemize}
The whole contribute to highlight the different glassy nature of neural networks that, in turn, helps understanding the structure and organization of the valleys in the free energy landscape where information is stored by the Hebbian mechanism (that ultimately implies a better understanding of information processing by these networks). 

\section{Generalities}\label{HopfieldSection}

In this section we provide details on the neural networks we aim to study.  We focus on Hebbian networks whose $N$ digital neurons  (i.e. Ising spins) lie on the nodes of a fully connected network and interact P-wisely via a suitable tensorial generalization of the standard Hebbian storing rule, where $K$ patterns  $\xi^{\mu}$,  $\mu \in (1,...,P)$  -all of the same length $N$, are stored. It is useful to define as control parameters $\beta$ and $\gamma$, where
\begin{align}
\begin{cases}
\beta &= \displaystyle{\frac{1}{T}} \notag \\
\gamma &= \displaystyle{\lim_{N \to \infty}} \displaystyle{\frac{K}{N^{P-1}}},
\end{cases}
\end{align}
while $\beta \in \mathbb{R}^+$  (i.e. the {\em inverse} of the temperature T in Physics) tunes the fast noise  in the network   
such that, while for $\beta \to 0$ the neural dynamics of the network becomes an uncorrelated random walk in the configuration space,   for $\beta \to \infty$ it approaches a steepest descent to the closest minimum of the cost function, that plays as a Lyapounov function in this limit (and the probability distribution $P(\sigma|\xi)$ drifts from a uniform distribution in the first case to be sharply peaked at the minima of the energy function \eqref{eq:hop_hbare} in the opposite noiseless limit).

\begin{definition} 
\label{def:pspinham} 
Set $\gamma \in \mathbb{R}^+$, $a \in \mathbb{N}$, $P \in \mathbb{N}$ even and let $\boldsymbol \sigma \in \{- 1, +1\}^N$ be a configuration of $N$ binary neurons. Given $K=\gamma N^a$ random patterns $\{\boldsymbol \xi^{\mu}\}_{\mu=1,...,K}$, each made of $N^{P/2}$ i.i.d. digital entries drawn from probability $P(\xi_{i_{_1}\cdots \,i_{_{P/2}}}^{\mu}=+1)=P(\xi_{i_{_1}\cdots \,i_{_{P/2}}}^{\mu}=-1)=1/2$, for $i=1,...,N$, the cost-function (or Hamiltonian to preserve a physical jargon) of the dense Hebbian network (DHN) is defined as
	\beq
	\label{eq:hop_hbare}
	H_N^{(P)}(\boldsymbol \sigma| \boldsymbol \xi) \coloneqq -\frac{1}{P!\,N^{P-1}}\sum_{\mu=1}^{K}\left(\sum_{i_{1},\cdots,i_{P/2}=1}^{N,\cdots,N}\xi_{i_{_1}\cdots i_{_{P/2}}}^{\mu}\sigma_i\cdots\sigma_{i_{_{P/2}}}\right)^{2}  -\frac{\gamma}{P!}\,N^{a+1-P/2}.
	\eeq
	\label{hop_hbare}
\end{definition}
Note that the last term at the r.h.s. is due to the subtraction of the diagonal term (as we wrote the summations without restrictions in the cost function itself). The normalization factor $1/N^{P-1}$ ensures the linear extensivity of the Hamiltonian, in the volume of the network $N$, as expected.
\newline
Note that we select the Hebbian structure for the  tensor accounting for the synaptic couplings  in the factorized form $\xi^1_{\boldsymbol i}\equiv\xi^1_{i_1}\cdots\xi^1_{i_{P/2}}$.
\begin{definition}The partition function related to the Hamiltonian of the DHN given by \eqref{eq:hop_hbare} reads as 
	\begin{align}
	\label{eq:hop_BareZ}
	\mathcal Z_N(\beta,\boldsymbol \xi) &\coloneqq \sum_{ \boldsymbol \sigma }^{2^N} \exp \left[ -\beta \left(H_N^{(P)}(\boldsymbol \sigma | \boldsymbol \xi)\right)\right] \notag \\
	&=\sum_{ \boldsymbol \sigma } \exp \left[ \frac{\beta}{P!\,N^{P-1}}\sum_{\mu=1}^{K}\left(\sum_{i_{1},\cdots,i_{P/2}=1}^{N,\cdots,N}\xi_{i_{_1}\cdots i_{_{P/2}}}^{\mu}\sigma_i\cdots\sigma_{i_{_{P/2}}}\right)^{2}-\dfrac{\beta\gamma}{P!}N^{a-P/2}\right],
	\end{align}
\end{definition} 

For an arbitrary observable $O(\boldsymbol \sigma)$, we introduce the \emph{Boltzmann average} induced by the partition function (\ref{eq:hop_BareZ}), denoted with $\omega_{\boldsymbol \xi}$, defined as
	\begin{equation}
	\omega_{\boldsymbol \xi} (O (\boldsymbol \sigma)) : = \frac{\sum_{\boldsymbol \sigma} O(\boldsymbol \sigma) e^{- \beta H_N(\boldsymbol \sigma| \boldsymbol \xi)}}{Z_N(\beta, \boldsymbol \xi)}.
	\end{equation}
	This can be further averaged over the realization of the $\xi_{\boldsymbol i}^{\mu}$'s (also referred to as \emph{quenched average}) to get
	\beq
	\langle O(\boldsymbol \sigma) \rangle \coloneqq \mathbb{E} \omega_{\boldsymbol \xi} (O(\boldsymbol \sigma)).
	\eeq
\begin{definition}The intensive quenched statistical pressure of the DHN (\ref{eq:hop_hbare}) is defined as
\begin{equation}
\label{PressureDef}
\mathcal A_N(\beta,\gamma) \coloneqq \frac{1}{N} \mathbb{E} \ln \mathcal Z_N(\beta, \boldsymbol \xi),
\end{equation}
and its thermodynamic limit, assuming its existence, is referred to as $\mathcal A(\beta,\gamma) \coloneqq \lim_{N \to \infty} \mathcal A_N(\beta,\gamma)$.
\end{definition}
Focusing on pure state retrieval, we assume without loss of generality \cite{AABF-NN2020,FrancAlberto,Beppe} that the candidate pattern to be retrieved  -say $\boldsymbol \xi^1$-  is a Boolean vector, while $\boldsymbol \xi^{\mu}$, $\mu=2,...,K$ are real vectors whose entries are drawn from i.i.d. standard Gaussians. Accordingly, the average $\mathbb{E}$ acts as a Boolean average over $\boldsymbol \xi^1$ and as a Gaussian average over $\boldsymbol \xi^2 \cdots \boldsymbol \xi^{K}$.
\begin{definition} The order parameters required to describe the macroscopic behavior of the model are the standard ones \cite{Amit,Coolen,FrancAlberto,barrabecca}, namely, the Mattis magnetization 
	\begin{equation}
	m \coloneqq \frac{1}{N}\sum_{i=1}^{N} \xi_i^1 \sigma_i
	\end{equation}
	necessary to quantify the retrieval capabilities of the network
	and the two-replica overlap in the $\boldsymbol \sigma$'s variables
\begin{equation}
\label{q}
q_{12} \coloneqq \frac{1}{N}\sum_{i=1}^N \sigma_i^{(1)}\sigma_i^{(2)}
\end{equation}
required to quantify the level of slow noise the network must cope with (when performing pattern recognition).
Further, as an additional set of variables $\{\tau_{\mu}\}_{\mu=1,...,P-1}$ shall be introduced (vide infra), we accordingly define their related two-replica overlaps
\begin{equation}
\label{p}
p_{11} \coloneqq  \frac{1}{P} \sum_{\mu=1}^P \tau^{(1)}_\mu \tau^{(1)}_\mu, ~~~ p_{12} \coloneqq  \frac{1}{P} \sum_{\mu=1}^P \tau^{(1)}_\mu \tau^{(2)}_\mu
\end{equation}
for mathematical convenience.
\label{hop_orderparameters}
\end{definition}

\section{First approach: transport PDE}\label{Sezione2}
As stated in the introduction, a purpose of our investigation is to paint phase diagrams for the networks in the space of the tunable parameters, en route toward an Optimized AI: to reach this goal the prescription is to obtain an explicit expression of the quenched statistical pressure in terms of the order parameters and than extremize the former over the latter. This procedure returns a system of coupled self-consistent equations that trace the evolution of the order parameters in the space of the control parameters, whose inspection ultimately allows such a desired painting. We approach this picture by providing two mathematical alternatives, the former based on mathematical physics methods -as we deepen hereafter- and the latter more grounded on a probabilistic setting (as we will see in the next section). For both the approaches we work out in full detail both the replica symmetric  and the first-step of replica symmetry breaking scenarios and compare their findings.
\newline 
\newline
In this section -at work with PDE theory- the strategy is to introduce an interpolating pressure $\mathcal A^{(P)}_N(\bold{r},t)$ living in an enlarged fictitious space-time  $(\bold{r},t)$ that actually reduces to the intensive quenched statistical pressure $\mathcal A^{(P)}_N$ of the original model in a specific point of this space-time (namely for  $(\bold{r}=0,t=1)$, i.e. $\mathcal A^{(P)}_N(\bold{r},t)=\mathcal A^{(P)}_N(\beta,\gamma)$) the plan is thus to work out explicitly the derivative of the interpolating pressure w.r.t. the space-time and to show that they fulfills a transport PDE in such a way that the solution of the statistical mechanical problem is recast in the solution of a partial differential equation, converting a problem of statistical mechanics of neural networks into a typical problem of mathematical physics.
The purpose of next two subsections (on for the RS and the other for the RSB) is thus to solve for the quenched free energies (or quenched statistical pressures) of these dense associative network through transport equation's method (whose idea has been already introduced in \cite{AABF-NN2020} for the replica symmetric scenario and in \cite{lindaRSB} for the broken replica symmetry  scenario dealing just with the classic Hopfield network). 

\subsection{RS approximation}
In this section we solve for the quenched statistical pressure of the dense associative network at the replica symmetric level of description. 

\begin{definition}
    \label{defn: RSassumption}
	Under the replica-symmetry assumption, in the thermodynamic limit the order parameters self-average around their mean values (denoted with a bar), i.e., their distributions get delta-peaked, independently of the replica considered, namely
	\begin{eqnarray}
	\label{eq:m_ter}
	\lim_{N\to \infty} \langle (m - \bar m)^2 \rangle = 0 &\Rightarrow& \lim_{N\to \infty}  \langle m \rangle = \bar m\\
	\lim_{N\to \infty} \langle (q_{12} - \bar q)^2 \rangle = 0 &\Rightarrow& \lim_{N \to \infty}  \langle q_{12} \rangle = \bar q\\
	\lim_{N\to \infty} \langle (p_{12} - \bar p)^2 \rangle = 0 &\Rightarrow& \lim_{N \to \infty}  \langle p_{12} \rangle = \bar p.
	\end{eqnarray}
	Note that, for the generic order parameter $X$, the above concentration can be rewritten as $\langle (\Delta X)^2 \rangle \overset{N\to\infty}{\longrightarrow}0$,
	where
	$$
	\Delta X \coloneqq  X - \bar{X},
	$$
	and, clearly, the RS approximation also implies that, in the thermodynamic limit, $\langle \Delta X \Delta Y \rangle = 0$ for any generic pair of order parameters $X,Y$ as well as  $\langle (\Delta X)^k \rangle \rightarrow 0$ for $k \geq 2$.
	
\end{definition}

\begin{definition} 
Given the interpolating parameter $t, x, y, z, w$, and $J_i$, $\tilde{J\,}_\mu \sim \mathcal{N}(0,1)$  standard i.i.d. Gaussian variables, the partition function in its integral representation is given by  
\begin{equation}
\footnotesize
\begin{array}{lll}
     \mathcal{Z}^{(P)}_N(t, \bm r) &\coloneqq& \sommaSigma{\boldsymbol \sigma} \displaystyle\int \mathcal{D} \bm \tau\exp{}\Bigg[t\dfrac{\beta\,'\,N}{2} m^P(\boldsymbol \sigma)+w N\,m(\boldsymbol \sigma)+\sqrt{t}\sqrt{\dfrac{\beta\,' }{N^{P-1}}}\SOMMA{\mu>1}{K}\,\left(\SOMMA{i_1,\cdots,i_{_{P/2}}=1}{N,\cdots,N}\xi^{\mu}_{i_1\cdots,i_{_{P/2}}}\sigma_{i_1}\cdots\sigma_{i_{P/2}}\right)\tau_{\mu}
     \\\\
     & &+\sqrt{N^{1-P/2}x}\SOMMA{\mu>1}{K} \tilde{J}_{\mu}\tau_{\mu}+\sqrt{y}\SOMMA{i=1}{N} J_i\sigma_i+\dfrac{zN^{1-P/2}}{2}\SOMMA{\mu>1}{K}\,{\tau^2_{\mu}}-\dfrac{\beta'\gamma}{2}N^{a-P/2}\Bigg]\,,
     \label{def:partfunct_transpRS}
\end{array}
\end{equation}
where, for any $\mu=1,...,K$, $\tau_{\mu} \sim \mathcal N [0, 1]$ and $\mathcal{D} \bm \tau \coloneqq \prod\limits_{\mu=1}^K \frac{e^{- \tau_{\mu}^2/2}}{\sqrt{2\pi} } d\tau_\mu$ 
is the related measure and we set $\beta'=2 \beta/P!$.
\end{definition}

\begin{definition} The interpolating pressure for the Dense Hebbian Network (DHN) (\ref{eq:hop_hbare}), at finite $N$, is introduced as
\begin{eqnarray}
\mathcal{A}^{(P)}_N(t, \bm r) &\coloneqq& \frac{1}{N} \mathbb{E} \left[  \ln \mathcal{Z}^{(P)}_N(t, \bm r)  \right],
\label{hop_GuerraAction}
\end{eqnarray}
where the expectation $\mathbb E$ is now meant over $\boldsymbol \xi$, $\boldsymbol J$, and $\boldsymbol{\tilde{J}}$ and, in the thermodynamic limit,
\begin{equation}
\mathcal{A}^{(P)}(t, \bm r) \coloneqq \lim_{N \to \infty} \mathcal{A}^{(P)}_N(t, \bm r).
\label{hop_GuerraAction_TDL}
\end{equation}
By setting $t=1$ and $\bold{r}=0$ the interpolating pressure recovers the original one (\ref{PressureDef}), that is $\mathcal A^{(P)}_N (\beta,\gamma) = \mathcal{A}^{(P)}_N(t=1, \bm r = \bm 0)$.
\end{definition}
\begin{remark}
	The interpolating structure implies an interpolating measure, whose related Boltzmann factor reads as
	\begin{equation}
	\mathcal B (\boldsymbol \sigma, \boldsymbol \tau; t, \bm r )\coloneqq  \exp \left[ \beta \mathcal H (\boldsymbol \sigma, \boldsymbol \tau; t, \bm r) \right];
	\end{equation}
In this way $\mathcal Z_N(t, \bm r) = \int \mathcal{D} \bm \tau \sum_{\boldsymbol \sigma} \mathcal B (\boldsymbol \sigma, \bm \tau; t, \bm r)$ and a generalized average is coupled to this generalized measure as
\beq
	\omega_{t, \boldsymbol r} (O (\boldsymbol \sigma, \boldsymbol \tau )) \coloneqq  \int \mathcal{D} \bm \tau  \sum_{\boldsymbol \sigma} O (\boldsymbol \sigma , \boldsymbol \tau) \mathcal B (\boldsymbol \sigma, \boldsymbol \tau; t)
	\eeq
	and
\beq
\langle O (\boldsymbol \sigma, \bm \tau ) \rangle_{t, \bm r}  \coloneqq \mathbb E [ \omega_{t, \bm r} (O (\boldsymbol \sigma, \bm \tau )) ].
\eeq
Of course, when $t=1$ the standard Boltzmann measure and related averages are recovered.
\newline
Hereafter, in order to lighten the notation, we drop the sub-indices $t, \bm r$.
\end{remark}

\begin{lemma} 
The partial derivatives of the interpolating pressure (\ref{hop_GuerraAction}) w.r.t. $t,x,y,z,w$ give the following expectation values:
	\bea
	\label{hop_expvalsa}
	\frac{\partial \mathcal{A}^{(P)}_N}{\partial t} &=& \dfrac{\beta '}{2}\l m^P \r +\dfrac{\beta '}{2 N^{P/2}}K\Big(\l\ppp\r-\l\pp\qq^{P/2}\r\Big),
	\\
	\label{hop_expvalsmiddle}
	\frac{\partial \mathcal{A}^{(P)}_N}{\partial x}  &=& \dfrac{K}{2N^{P/2}}\Big(\l\ppp\r-\l\pp\r\Big),\\
	\frac{\partial \mathcal{A}^{(P)}_N}{\partial y}  &=& \dfrac{1}{2}\Big(1-\l\qq\r\Big),\\
	\frac{\partial \mathcal{A}^{(P)}_N}{\partial z}  &=& \frac{K}{2N^{P/2}}  \l \ppp\r,\\
	\frac{\partial \mathcal{A}^{(P)}_N}{\partial w}  &=&  \l m\r.
	\label{hop_expvalsb}
	\eea
\end{lemma}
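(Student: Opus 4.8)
The plan is to obtain all five identities by differentiating $\mathcal{A}^{(P)}_N=\tfrac{1}{N}\mathbb{E}\ln\mathcal{Z}^{(P)}_N$ term by term and reducing each resulting expression to a quenched Boltzmann average of the order parameters; the only non-elementary tool needed is Gaussian (Wick--Stein) integration by parts, $\mathbb{E}[\,J f(J)\,]=\mathbb{E}[\,\partial_J f(J)\,]$, to be applied to every interpolating parameter that multiplies a centered Gaussian disorder variable. Since $\partial_\bullet\ln\mathcal{Z}=\mathcal{Z}^{-1}\partial_\bullet\mathcal{Z}=\omega(\partial_\bullet(\text{exponent}))$, the parameters \emph{not} coupled to noise are immediate: $w$ multiplies $N m(\boldsymbol\sigma)$, giving $\partial_w\mathcal{A}^{(P)}_N=\langle m\rangle$, and the piece $t\,\tfrac{\beta'N}{2}m^{P}(\boldsymbol\sigma)$ of the exponent contributes $\tfrac{\beta'}{2}\langle m^{P}\rangle$ at once. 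Likewise $z$ multiplies $\tfrac{N^{1-P/2}}{2}\sum_{\mu>1}\tau_\mu^{2}$ and needs no integration by parts, giving directly $\partial_z\mathcal{A}^{(P)}_N=\tfrac{N^{-P/2}}{2}\sum_{\mu>1}\langle\tau_\mu^{2}\rangle$, which is $\tfrac{K}{2N^{P/2}}\langle p_{11}\rangle$ once the self-overlap of Definition \ref{hop_orderparameters} is recognized.

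For the Gaussian-coupled parameters the recipe is uniform. After differentiating one is left with $\mathbb{E}[\,J\,\omega(B)\,]$, where $J$ is the Gaussian coupled to the observable $B$ in the exponent ($\tilde J_\mu$ to $\tau_\mu$ for $x$, $J_i$ to $\sigma_i$ for $y$, and $\xi^{\mu}_{i_1\cdots i_{P/2}}$ to $\sigma_{i_1}\cdots\sigma_{i_{P/2}}\tau_\mu$ for the $\mu>1$ part of $t$); Stein's lemma converts this into $(\text{coupling})\times\mathbb{E}[\,\omega(B^{2})-\omega(B)^{2}\,]$, a connected two-point function. The term $\omega(B^{2})$ reduces to $\omega(\tau_\mu^{2})$ for $x$ and to $1$ for $y$ (using $\sigma_i^{2}=1$), while $\omega(B)^{2}$ is a two-replica object $\omega^{\otimes 2}(B^{(1)}B^{(2)})$. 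Summing over the relevant index and using $\tfrac1N\sum_i\omega(\sigma_i)^2=\omega^{\otimes2}(q_{12})$ and the analogous identity for the auxiliary variables then yields $\partial_y\mathcal{A}^{(P)}_N=\tfrac12(1-\langle q_{12}\rangle)$ and $\partial_x\mathcal{A}^{(P)}_N=\tfrac{K}{2N^{P/2}}(\langle p_{11}\rangle-\langle p_{12}\rangle)$. For $\partial_t$ the same steps apply to $\xi^{\mu}_{i_1\cdots i_{P/2}}$, which for $\mu>1$ is a single standard Gaussian per multi-index; the cancellation $(\sigma_{i_1}\cdots\sigma_{i_{P/2}})^{2}=1$ removes the diagonal contribution, and the residual $\sum_{i_1,\dots,i_{P/2}}\omega(\sigma_{i_1}\cdots\sigma_{i_{P/2}}\tau_\mu)^{2}$ factorizes over the $P/2$ site-indices into $\omega^{\otimes2}\!\big(\tau_\mu^{(1)}\tau_\mu^{(2)}(N q_{12})^{P/2}\big)$; collecting the prefactors $N^{1-P/2}$, $N^{P-1}$, and $\beta'$ produces precisely $\tfrac{\beta'}{2}\langle m^{P}\rangle+\tfrac{\beta'}{2N^{P/2}}K\big(\langle p_{11}\rangle-\langle p_{12}q_{12}^{P/2}\rangle\big)$.

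I expect the only delicate point to be the $\partial_t$ bookkeeping: one must track the $P/2$-fold tensorial Hebbian structure so that the sum over $\boldsymbol i=(i_1,\dots,i_{P/2})$ becomes a $P/2$-th power and the slow-noise overlap appears as $q_{12}^{P/2}$; one must use that the noise patterns $\xi^{\mu}$, $\mu>1$, are genuine i.i.d.\ standard Gaussians so that Stein's lemma applies cleanly, whereas the retrieved pattern $\xi^{1}$ never enters directly but only through $m(\boldsymbol\sigma)$; and one must match the normalizations of $p_{11},p_{12}$ in Definition \ref{hop_orderparameters} with the raw sums $\sum_{\mu>1}\tau_\mu^{(1)}\tau_\mu^{(2)}$ that the computation returns. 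The $t,x,y,z,w$-independent constant $-\tfrac{\beta'\gamma}{2}N^{a-P/2}$ plays no role, dropping out of every derivative.
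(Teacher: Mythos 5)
Your proposal is correct and follows essentially the same route as the paper: differentiate the interpolating pressure term by term, apply Stein's lemma to each Gaussian-coupled disorder variable to produce connected two-point functions, and identify the resulting sums with the overlaps $p_{11},p_{12},q_{12}$ and the magnetization. In fact you spell out all five derivatives, whereas the paper computes only $\partial_t\mathcal{A}^{(P)}_N$ explicitly and declares the remaining cases analogous; your bookkeeping of the $P/2$-fold factorization giving $q_{12}^{P/2}$ matches the paper's computation exactly.
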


\begin{proof}
Since the procedures for the derivatives w.r.t. each parameter are analogous, we prove only the derivative w.r.t. $t$. The  partial  derivative  of  the  interpolating  quenched  pressure  with respect to $t$ reads as 
\begin{equation}
    \begin{array}{lll}
           \dfrac{\partial \mathcal{A}^{(P)}_N}{\partial t}&=&\dfrac{1}{N}\mathbb{E}\left[\dfrac{\beta'N}{2}\omega(m^P)\right]+ \dfrac{1}{2N\sqrt{t}}\sqrt{\dfrac{\beta'}{N^{P-1}}}\sum\limits_{\mu}\sum\limits_{\boldsymbol{i}}\mathbb{E}\left[\xi^{\mu}_{i_1\cdots,i_{_{P/2}}}\omega(\sigma_{i_1}\cdots\sigma_{i_{P/2}}\tau_{\mu})\right]\,.
    \end{array}
\end{equation}
Now, on standard Gaussian variable $\xi_{\boldsymbol{i}}^{\mu>1}$ we apply the Stein's lemma (also known as Wick's theorem), namely  
\begin{align}
\mathbb{E} \left( J f(J)\right)= \mathbb{E} \left( \frac{\partial f(J)}{\partial J}\right)
\label{eqn:gaussianrelation2}
\end{align}
to compute the derivative w.r.t. $t$ as
\begin{equation}
    \begin{array}{lll}
           \dfrac{\partial \mathcal{A}^{(P)}_N}{\partial t}=&\dfrac{\beta'}{2}\l m^P\r+ \dfrac{\beta' K}{2N^{^{P/2}}}\left(\mathbb{E}\left[\omega((\sigma_{i_1}\cdots\sigma_{i_{P/2}}\tau_{\mu})^2)\right]-\mathbb{E}\left[\omega(\sigma_{i_1}\cdots\sigma_{i_{P/2}}\tau_{\mu})^2\right]\right)
           \\\\
           \textcolor{white}{\dfrac{\partial \mathcal{A}_N}{\partial t}} =&\dfrac{\beta'}{2}\l m^P\r+ \dfrac{\beta' K}{2N^{^{P/2}}}\left(\l\ppp\r-\l\pp\qq^{P/2}\r\right)\,.
    \end{array}
\end{equation}
\end{proof}

\begin{remark}
In the next computations, we can use the following relations
\begin{align}
    \langle m_1^P \rangle &= \sum_{k=2}^P \begin{pmatrix}P\\k\end{pmatrix} \langle (m_1-\bar{m})^k \rangle \bar{m}^{P-k} + \bar{m}^P (1-P) + P\bar{m}^{P-1}\langle m_1 \rangle\,, \label{potential_m}
    \\
    \langle p_{12}q_{12}^{P/2}\rangle &= \sum_{k=1}^{P/2} \begin{pmatrix}\frac{P}{2}\\k\end{pmatrix}\bar{q}^{P/2-k} \langle (p_{12}-\bar{p})(q_{12} - \bar{q})^k \rangle + \sum_{k=2}^{P/2} \begin{pmatrix}\frac{P}{2}\\k\end{pmatrix} \bar{q}^{P/2-k} \bar{p}\langle (q_{12} - \bar{q})^k \rangle+ \notag
    \\
    &\textcolor{white}{=}+\q^{P/2}\l\pp\r+\dfrac{P}{2}\q^{P/2-1}\p\l\qq\r-\dfrac{P}{2}\q^{P/2}\p\,,
    \label{potential_pq}
\end{align}
which can be proved trivially by brute force. 
\end{remark}

\begin{proposition} \label{prop:interp_transp_RS}
The interpolating pressure \eqref{def:partfunct_transpRS} at finite size $N$ obeys the following transport-like partial differential equation:
	\begin{equation}
	\frac{d \mathcal{A}^{(P)}_N}{dt} =  \pder{\mathcal{A}_N^{(P)}}{t} + \dot{x} \pder{\mathcal{A}^{(P)}_N}{x} + \dot y \pder{\mathcal{A}^{(P)}_N}{y} + \dot z \pder{\mathcal{A}^{(P)}_N}{z} +\dot w \pder{\mathcal{A}^{(P)}_N}{w}= S(t, \bm r)+V_N(t, \bm r),
	\label{hop_GuerraAction_DE}
	\end{equation}
	where we set 
	\begin{equation}
	    \begin{array}{lll}
	         \dot x = -\beta ' \q^{P/2}\,,&&  \dot y = -\dfrac{P}{2}\beta ' \gamma \p\q^{P/2-1}\,,
	         \\\\
	         \dot z = - \beta' (1-\q^{P/2})\,, && \dot w = -\dfrac{P}{2}\beta '\m^{P-1}
	    \end{array}
	    \label{eq:deriv_r_trasp_RS}
	\end{equation}
	and the source $S$ and the potential $V$ read (respectively) as
	\begin{equation}
	\begin{array}{lll}
	     S(t, \bm r) &\coloneqq& -\dfrac{P-1}{2}\beta ' \m^P-\beta '\gamma\dfrac{P}{4}\p\q^{P/2-1}(1-\q),
	\end{array}
	\end{equation}
	\begin{equation}
	\begin{array}{lll}
	     V_N(t, \bm r) &\coloneqq& \dfrac{\beta '}{2}\SOMMA{k=2}{P}\begin{pmatrix}
	    P\\k
	\end{pmatrix} \m^{P-k}\l(\Delta m)^k\r-\dfrac{\beta '\gamma}{2 N^{P/2-a}}\SOMMA{k=2}{P/2}\begin{pmatrix}
	    \frac{P}{2}\\k
	\end{pmatrix} \p\q^{P/2-k}\l(\Delta q)^k\r+
	\\\\
	&&-\dfrac{\beta '\gamma }{2 N^{P/2-a}}\SOMMA{k=2}{P/2}\begin{pmatrix}
	    \frac{P}{2}\\k
	\end{pmatrix} \q^{P/2-k}\l\Delta p(\Delta q)^k\r.
	\end{array}
	\label{potenziale-RS-Hopfield}
	\end{equation}
	\end{proposition}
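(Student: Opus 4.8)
The plan is to obtain \eqref{hop_GuerraAction_DE} by differentiating $\mathcal A_N^{(P)}(t,\bm r)$ along the characteristic curve $\bm r(t)=(x(t),y(t),z(t),w(t))$ and then \emph{choosing} the velocities so that the output collapses onto a source plus a fluctuation remainder. First I would write the chain rule
\[
\frac{d\mathcal A_N^{(P)}}{dt}=\partial_t\mathcal A_N^{(P)}+\dot x\,\partial_x\mathcal A_N^{(P)}+\dot y\,\partial_y\mathcal A_N^{(P)}+\dot z\,\partial_z\mathcal A_N^{(P)}+\dot w\,\partial_w\mathcal A_N^{(P)},
\]
and substitute the five partial derivatives furnished by the preceding Lemma, eqs.~\eqref{hop_expvalsa}--\eqref{hop_expvalsb}. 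This turns the right-hand side into a linear combination of the averages $\langle m\rangle$, $\langle m^P\rangle$, $\langle q_{12}\rangle$, $\langle p_{11}\rangle$, $\langle p_{12}\rangle$ and $\langle p_{12}q_{12}^{P/2}\rangle$, with coefficients built out of the still-free velocities, of the barred parameters $\bar m,\bar q,\bar p$, and of the ratio $K/N^{P/2}=\gamma N^{a-P/2}$.

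Next I would feed in the two algebraic identities of the Remark, \eqref{potential_m} and \eqref{potential_pq}, to reduce $\langle m^P\rangle$ and $\langle p_{12}q_{12}^{P/2}\rangle$: each of them decomposes the corresponding average into (i) a polynomial in $\bar m,\bar q,\bar p$ alone, (ii) terms proportional to $\langle m\rangle$, $\langle q_{12}\rangle$ or $\langle p_{12}\rangle$, and (iii) centred-moment remainders $\langle(\Delta m)^k\rangle$, $\langle(\Delta q)^k\rangle$, $\langle\Delta p\,(\Delta q)^k\rangle$ (plus a single cross term $\propto\langle\Delta p\,\Delta q\rangle$). At this point the velocities get fixed by demanding that the whole block (ii) and the $\langle p_{11}\rangle$ term disappear: vanishing of the $\langle p_{11}\rangle$ coefficient forces $\beta'+\dot x+\dot z=0$; vanishing of the $\langle p_{12}\rangle$ coefficient forces $\dot x=-\beta'\bar q^{P/2}$, hence $\dot z=-\beta'(1-\bar q^{P/2})$; vanishing of the $\langle m\rangle$ coefficient forces $\dot w=-\tfrac{P}{2}\beta'\bar m^{P-1}$; and vanishing of the $\langle q_{12}\rangle$ coefficient pins down $\dot y$. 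These are exactly the assignments \eqref{eq:deriv_r_trasp_RS}.

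After these cancellations only blocks (i) and (iii) survive. The leftover (i)-terms -- namely $-\tfrac{P-1}{2}\beta'\bar m^P$ coming from \eqref{potential_m}, together with the barred residue of $\dot y\,\partial_y\mathcal A_N^{(P)}$ and of $-\tfrac{\beta' K}{2N^{P/2}}\langle p_{12}q_{12}^{P/2}\rangle$, which recombine into $-\tfrac{P}{4}\beta'\gamma\,\bar p\,\bar q^{P/2-1}(1-\bar q)$ -- constitute the source $S(t,\bm r)$, while the leftover (iii)-terms, once the binomial prefactors are collected and $K/N^{P/2}$ is rewritten as $\gamma/N^{P/2-a}$, constitute $V_N(t,\bm r)$ in the form \eqref{potenziale-RS-Hopfield}. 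The only step that is not pure algebra is the disposal of the lone $k=1$ cross term $\propto\bar q^{P/2-1}\langle\Delta p\,\Delta q\rangle$ generated by \eqref{potential_pq}, which is absorbed by the replica-symmetric concentration of Definition~\ref{defn: RSassumption} (for which $\langle\Delta X\,\Delta Y\rangle\to 0$). I expect this -- together with the careful tracking of the many binomial factors and powers of $N$ needed to make everything land precisely on the stated $S$ and $V_N$ -- to be the main, essentially bookkeeping, obstacle.
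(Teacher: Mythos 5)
Your proposal follows the paper's proof essentially verbatim: chain rule along the characteristics, substitution of the Lemma's partial derivatives \eqref{hop_expvalsa}--\eqref{hop_expvalsb}, reduction of $\langle m^P\rangle$ and $\langle p_{12}q_{12}^{P/2}\rangle$ via \eqref{potential_m}--\eqref{potential_pq}, and identification of the velocities so that the $\langle m\rangle,\langle q_{12}\rangle,\langle p_{12}\rangle,\langle p_{11}\rangle$ blocks are reabsorbed into $\dot x\,\partial_x+\dot y\,\partial_y+\dot z\,\partial_z+\dot w\,\partial_w$, leaving exactly $S$ and $V_N$. Your remark about the stray $k=1$ term $\propto\bar q^{P/2-1}\langle\Delta p\,\Delta q\rangle$ is, if anything, more careful than the paper itself, which silently folds that term into $V_N$ even though the stated sum in \eqref{potenziale-RS-Hopfield} begins at $k=2$ (at finite $N$ it should properly be retained, and it only vanishes in the thermodynamic limit under the RS concentration of Definition~\ref{defn: RSassumption}).
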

	
	\begin{proof}
	Starting to evaluate explicitly $\dt \mathcal{A}^{(P)}_N$ by using \eqref{hop_expvalsa}-\eqref{hop_expvalsmiddle} and \eqref{potential_m}-\eqref{potential_pq}, we write
\begin{equation}
\footnotesize
    \begin{array}{lll}
         \dfrac{\partial}{\partial t}\mathcal{A}^{(P)}_N=\dfrac{\beta '}{2}\left(\SOMMA{k=2}{P} \begin{pmatrix}P\\k\end{pmatrix} \langle (m_1-\bar{m})^k \rangle \bar{m}^{P-k} + \bar{m}^P (1-P) + P\bar{m}^{P-1}\langle m_1 \rangle\right) 
         \\\\
         +\dfrac{\beta '}{2 N^{P/2}}K\Big(\l\ppp\r\Big)-\dfrac{\beta '}{2 N^{P/2}}K\Big(\q^{P/2}\l\pp\r+\dfrac{P}{2}\q^{P/2-1}\p\l\qq\r-\dfrac{P}{2}\q^{P/2}\p \Big)
         \\\\
         -\dfrac{\beta '}{2 N^{P/2}}K\Big(\SOMMA{k=1}{P/2} \begin{pmatrix}\frac{P}{2}\\k\end{pmatrix}\bar{q}^{P/2-k} \langle (p_{12}-\bar{p})(q_{12} - \bar{q})^k \rangle + \SOMMA{k=2}{P/2} \begin{pmatrix}\frac{P}{2}\\k\end{pmatrix} \bar{q}^{P/2-k} \bar{p}\langle (q_{12} - \bar{q})^k \rangle \Big)
         \\\\
         =V_N(t,\boldsymbol{r})+S(t,\boldsymbol{r}) +\beta '\gamma\dfrac{P}{4}(N^{^{a-P/2}}\p)\q^{P/2-1}+\dfrac{\beta '}{2}P\bar{m}^{P-1}\langle m_1 \rangle 
         \\\\
         +\dfrac{\beta '\gamma N^{a-P/2}}{2}\l\ppp\r-\dfrac{\beta '\gamma N^{a-P/2}}{2}\Big(\q^{P/2}\l\pp\r+\dfrac{P}{2}\q^{P/2-1}\p\l\qq\r \Big)
         \\\\
         =V_N(t,\boldsymbol{r})+S(t,\boldsymbol{r}) +\dfrac{\beta '}{2}P\bar{m}^{P-1}\Big(\frac{\partial \mathcal{A}^{(P)}_N}{\partial w} \Big)+\beta ' (1-\q^{P/2})\Big(\frac{\partial \mathcal{A}^{(P)}_N}{\partial z}\Big)+
         \\\\
         +\beta '\q^{P/2}\Big(\frac{\partial \mathcal{A}^{(P)}_N}{\partial x} \Big) +\dfrac{\beta '\gamma P\,N^{a-P/2}}{2}\p\q^{P/2-1}\Big(\frac{\partial \mathcal{A}^{(P)}_N}{\partial y}\Big) 
    \end{array}
\end{equation}
Thus, by placing $\dot{\boldsymbol{r}}=(\dot{x},\dot{y},\dot{z},\dot{w})$ as in \eqref{eq:deriv_r_trasp_RS} and $N^{a-P/2}\p$ as $\p$, we reach the thesis.
\end{proof}

\begin{remark}
In the thermodynamic limit and under the assumption of replica symmetry the potential $V_N(t, \bold{r})\to 0$ (this simplifies considerably the resolution of the transport equation). 
\end{remark}

\begin{theorem}
\label{cor_carmassimoRS}
In the thermodynamic limit and under the assumption of replica symmetry, the maximum storage that the network can handle is $K \propto N^{P-1}$ -namely the Baldi-Vekatesh storage \cite{baldi})- that is achieved for 
\begin{equation}
    a=P-1.
    \label{eq:load_of_P_Spin}
\end{equation}
In this regime the quenched statistical pressure for $P\geq 4$  of the DHN becomes
\begin{equation}
\begin{array}{lll}
      \mathcal{A}^{(P)} (\gamma, \beta) & \coloneqq & \ln{2}+\left\langle\ln{\cosh{\left[\dfrac{P}{2}\beta '  \m^{P-1}+Y\sqrt{\beta ' \gamma   \dfrac{P}{2}\, \p \q^{^{P/2-1}}} \right]}}\right\rangle_Y -\dfrac{P-1}{2} \beta ' \m^P \\\\
      & &-\beta ' \gamma\dfrac{P}{4}  \, \p \q^{P/2-1}(1-\q)+\dfrac{1}{4} \gamma\beta '^2\left(1 - \q^P\right)\,.
\end{array}
\label{eq:pressure_GuerraRS}
\end{equation}
\end{theorem}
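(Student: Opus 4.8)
The plan is to integrate the transport equation of Proposition~\ref{prop:interp_transp_RS} along its characteristics, reducing the computation of $\mathcal{A}^{(P)}(\beta,\gamma)=\lim_{N\to\infty}\mathcal{A}^{(P)}_N(t=1,\bm r=\bm 0)$ to a Cauchy datum at $t=0$ plus a line integral of the source $S$. Since the velocities \eqref{eq:deriv_r_trasp_RS} depend only on the replica-symmetric averages $\bar m,\bar p,\bar q$, they are constant in $(t,\bm r)$, so the characteristic through $(t=1,\bm r=\bm 0)$ is the straight segment $\bm r(t)=-\dot{\bm r}\,(1-t)$, whose endpoint at $t=0$ has coordinates $x(0)=\beta'\bar q^{P/2}$, $y(0)=\tfrac{P}{2}\beta'\gamma\,\bar p\,\bar q^{P/2-1}$, $z(0)=\beta'(1-\bar q^{P/2})$, $w(0)=\tfrac{P}{2}\beta'\bar m^{P-1}$. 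Along this segment Proposition~\ref{prop:interp_transp_RS} reads $\frac{d}{dt}\mathcal{A}^{(P)}_N\!\big(t,\bm r(t)\big)=S+V_N$; invoking the Remark after Proposition~\ref{prop:interp_transp_RS} ($V_N\to 0$ in the thermodynamic limit under the ansatz of Definition~\ref{defn: RSassumption}) and noting that $S$ is actually $(t,\bm r)$-independent, integration over $t\in[0,1]$ gives
\begin{equation}
\mathcal{A}^{(P)}(\beta,\gamma)=\lim_{N\to\infty}\mathcal{A}^{(P)}_N\big(0,\bm r(0)\big)+S,\qquad S=-\tfrac{P-1}{2}\beta'\bar m^{P}-\tfrac{P}{4}\beta'\gamma\,\bar p\,\bar q^{P/2-1}(1-\bar q).
\end{equation}
Hence the source already supplies the $\bar m^{P}$ and $\bar p\,\bar q^{P/2-1}(1-\bar q)$ contributions of \eqref{eq:pressure_GuerraRS}, and only the Cauchy datum remains.

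At $t=0$ the coupling of the $\boldsymbol\sigma$'s to the $\boldsymbol\xi^{\mu>1}$'s disappears, so $\mathcal{Z}^{(P)}_N(0,\bm r)$ factorizes into a product over the neurons $\sigma_i$ and a product of $K-1$ one-dimensional Gaussian integrals over the $\tau_\mu$'s. Carrying out each $\tau_\mu$-integral and averaging over the $\tilde J_\mu$'s gives, per mode, a contribution $-\tfrac12\ln\!\big(1-z(0)N^{1-P/2}\big)+\tfrac{x(0)N^{1-P/2}}{2\,(1-z(0)N^{1-P/2})}$; multiplying by $\tfrac{K-1}{N}=\gamma N^{a-1}(1+o(1))$ and expanding in the small quantity $N^{1-P/2}$ --- here the hypothesis $P\geq 4$ is used, as it guarantees $N^{1-P/2}\to 0$ --- produces a leading term of order $\gamma N^{a-P/2}$ and a first subleading term of order $\gamma N^{a+1-P}$. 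The subleading (finite) piece is bounded and nonvanishing precisely when $a=P-1$, i.e. exactly the Baldi--Venkatesh scaling $K=\gamma N^{P-1}$ of \cite{baldi,Bovier,gardner}: for $a>P-1$ it diverges so no thermodynamic limit exists (hence $a=P-1$ is maximal), while for $a<P-1$ the dependence on $\gamma$ washes out. The divergent $O(N^{a-P/2})$ term is precisely the diagonal contribution already removed in the definition \eqref{eq:hop_hbare} of the Hamiltonian, so it does not appear in the final pressure.

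It remains to evaluate the two sectors at $a=P-1$. In the neuron sector one writes $w(0)N m+\sqrt{y(0)}\sum_i J_i\sigma_i=\sum_i\big(w(0)\,\xi^1_i+\sqrt{y(0)}\,J_i\big)\sigma_i$ and sums over $\boldsymbol\sigma\in\{-1,1\}^N$, obtaining $\prod_{i=1}^N 2\cosh\!\big(w(0)\xi^1_i+\sqrt{y(0)}J_i\big)$; taking $\tfrac{1}{N}\mathbb{E}\ln$ and using $\xi^1_i=\pm1$ together with the parity of $\cosh$ and the symmetry of the $J_i$'s yields $\ln 2+\big\langle\ln\cosh\big[\tfrac{P}{2}\beta'\bar m^{P-1}+Y\sqrt{\beta'\gamma\,\tfrac{P}{2}\,\bar p\,\bar q^{P/2-1}}\big]\big\rangle_Y$ with $Y\sim\mathcal N(0,1)$, i.e. the $\ln 2+\langle\ln\cosh(\cdot)\rangle_Y$ part of \eqref{eq:pressure_GuerraRS}. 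In the $\tau$ sector, the $O(1)$ coefficient of $\gamma$ left over from the expansion above equals $\tfrac14 z(0)^2+\tfrac12 x(0)z(0)$, which --- substituting $x(0)=\beta'\bar q^{P/2}$, $z(0)=\beta'(1-\bar q^{P/2})$ and using $(1-\bar q^{P/2})^2+2\bar q^{P/2}(1-\bar q^{P/2})=1-\bar q^{P}$ --- collapses to $\tfrac14\beta'^{2}(1-\bar q^{P})$, so this sector contributes $\tfrac14\gamma\beta'^{2}(1-\bar q^{P})$, the last term of \eqref{eq:pressure_GuerraRS}. Adding $S$ yields the claimed identity, with $\bar m,\bar p,\bar q$ fixed self-consistently as a stationary point of the right-hand side (equivalently, via the streaming relations \eqref{hop_expvalsa}--\eqref{hop_expvalsb} at $t=1,\bm r=\bm 0$).

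The main obstacle, and the only place where the replica-symmetric hypothesis is genuinely used, is the claim $V_N\to 0$: this is not an algebraic identity but follows from the assumed concentration $\langle(\Delta X)^k\rangle\to 0$ of Definition~\ref{defn: RSassumption}, and the actual work is to control these vanishing moments against the growing prefactor $N^{a-P/2}$ multiplying them in \eqref{potenziale-RS-Hopfield}. A secondary technical point is the careful bookkeeping of powers of $N$ in the Gaussian $\tau$-integral, needed to separate cleanly the divergence absorbed by the diagonal term from the finite remainder; this is also where $P\geq 4$ is indispensable, since for $P=2$ the factor $N^{1-P/2}$ is of order one, the entire $-\tfrac12\ln(1-z(0))$ survives, and one recovers the additional ``spherical'' soft spin-glass term that, as anticipated in the Introduction, is absent in the dense case.
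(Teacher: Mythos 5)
Your proposal follows essentially the same route as the paper's own proof in Appendix A: integration of the transport equation along the (constant-velocity) characteristics ending at $(t=1,\bm r=\bm 0)$, evaluation of the factorized Cauchy datum at $t=0$ (Ising sum plus Gaussian $\tau$-integrals), and a large-$N$ expansion of the $\tau$-sector whose finiteness forces $a=P-1$ and whose $O(1)$ coefficient $\tfrac14 z_0^2+\tfrac12 x_0 z_0=\tfrac14\beta'^2(1-\bar q^P)$ reproduces the last term of \eqref{eq:pressure_GuerraRS}. The details check out, and your closing remark correctly identifies the one step the paper assumes rather than proves, namely that the vanishing central moments in $V_N$ beat the growing prefactor $N^{a-P/2}$.
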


\begin{remark}
We stress that using $P=2$ and $a=1$ in the quenched pressure \eqref{eq:pressure_GuerraRS_noAPP} we recover the AGS picture \cite{AGS}.
\end{remark}

Extremizing the statistical pressure given in \eqref{eq:pressure_GuerraRS} w.r.t. the order parameters we find the following

\begin{corollary}
	The self-consistency equations ruling the evolution of the order parameters are
\begin{equation}
    \label{eq:self_GuerraRS}
    \begin{array}{lll}
         \m=\left\langle\tanh{\left[\dfrac{P}{2}\beta '  \m^{P-1}+x\sqrt{\beta ' \gamma \frac{P}{2} \p\q^{^{P/2-1}}}\,\right]}\right\rangle_x \,,
         \\\\
         \q=\left\langle\tanh{}^2{\left[\dfrac{P}{2}\beta '  \m^{P-1}+x\sqrt{\beta ' \gamma \frac{P}{2} \p \q^{^{P/2-1}}}\,\right]}\right\rangle_x \,,
         \\\\
        \p=\beta ' \q^{^{P/2}}\,.
    \end{array}
\end{equation}
\end{corollary}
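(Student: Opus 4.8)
The plan is to obtain the three self-consistency equations in \eqref{eq:self_GuerraRS} by extremizing the replica-symmetric quenched pressure \eqref{eq:pressure_GuerraRS} with respect to the three order parameters $\m$, $\q$, and $\p$, i.e.\ by imposing the stationarity conditions
\begin{equation}
\pder{\mathcal{A}^{(P)}}{\m}=0,\qquad \pder{\mathcal{A}^{(P)}}{\q}=0,\qquad \pder{\mathcal{A}^{(P)}}{\p}=0.
\label{eq:stationarity}
\end{equation}
Since the expression \eqref{eq:pressure_GuerraRS} is an explicit function of $\m,\q,\p$ (with the remaining dependence sitting inside the single Gaussian average $\langle\,\cdot\,\rangle_Y$ over $Y\sim\mathcal N(0,1)$), each of these conditions is a routine differentiation, the only care being the chain rule inside the $\cosh$ term.

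First I would compute $\pder{\mathcal{A}^{(P)}}{\p}$, which is the cleanest because the effective field $g\coloneqq \frac{P}{2}\beta'\m^{P-1}+Y\sqrt{\beta'\gamma\frac{P}{2}\,\p\,\q^{P/2-1}}$ depends on $\p$ only through the square-root prefactor of $Y$. Differentiating the $\langle\ln\cosh g\rangle_Y$ term brings down a factor $\langle Y\tanh(g)\rangle_Y$ times $\partial_{\p}\sqrt{\cdots}$; I would then apply Stein's lemma \eqref{eqn:gaussianrelation2} to rewrite $\langle Y\tanh(g)\rangle_Y=\sqrt{\beta'\gamma\frac{P}{2}\,\p\,\q^{P/2-1}}\,\langle 1-\tanh^2(g)\rangle_Y=\sqrt{\cdots}\,(1-\q)$, using the definition $\q=\langle\tanh^2 g\rangle_Y$ that will be established in the next step. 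Setting the derivative to zero and cancelling the common factor $\q^{P/2-1}$ yields directly $\p=\beta'\q^{P/2}$, the third equation.

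Next I would compute $\pder{\mathcal{A}^{(P)}}{\m}$. The $\m$-dependence enters both through the explicit $-\frac{P-1}{2}\beta'\m^P$ term, whose derivative is $-\frac{P(P-1)}{2}\beta'\m^{P-1}$, and through $g$ via $\partial_{\m}g=\frac{P(P-1)}{2}\beta'\m^{P-2}$, giving a contribution $\langle\tanh(g)\rangle_Y\cdot\frac{P(P-1)}{2}\beta'\m^{P-2}$. Imposing stationarity and factoring out $\frac{P(P-1)}{2}\beta'\m^{P-2}$ leaves $\langle\tanh(g)\rangle_Y-\m=0$, which is the first equation. For the second equation I would differentiate with respect to $\q$; here the algebra is heavier because $\q$ appears in the $Y$-prefactor of $g$, in the explicit $-\beta'\gamma\frac{P}{4}\p\q^{P/2-1}(1-\q)$ term, and in $\frac{1}{4}\gamma\beta'^2(1-\q^P)$. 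After again using Stein's lemma to convert $\langle Y\tanh g\rangle_Y$ into $\sqrt{\cdots}(1-\langle\tanh^2 g\rangle_Y)$, I expect the explicit terms to cancel against the field-derivative term once the relation $\p=\beta'\q^{P/2}$ is substituted, collapsing everything to the identification $\q=\langle\tanh^2 g\rangle_Y$.

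The main obstacle is the bookkeeping in the $\q$-derivative: the self-consistency for $\q$ is not read off from a single term but emerges only after the explicit polynomial-in-$\q$ pieces cancel against the Stein-lemma output of the $\cosh$ term, and this cancellation relies on feeding back $\p=\beta'\q^{P/2}$. I would therefore carry out the three stationarity conditions as a coupled system rather than independently, treating the $\p$-equation as an input to simplify the $\q$-equation. A useful consistency check at the end is to specialize to $P=2$, where $g=\beta'\m+Y\sqrt{\beta'\gamma\,\p}$ and the equations must reduce to the classical AGS saddle-point conditions, confirming both the differentiation and the cancellations.
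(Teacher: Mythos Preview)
Your approach—imposing stationarity of \eqref{eq:pressure_GuerraRS} with respect to $\m$, $\q$, $\p$ and using Stein's lemma on the Gaussian average—is exactly what the paper does (it simply states ``Extremizing the statistical pressure \ldots\ we find the following'' without further detail). One minor bookkeeping point: when you actually carry out the derivatives you will find that $\partial_{\p}\mathcal A^{(P)}=0$ yields the $\q$-equation $\q=\langle\tanh^2 g\rangle_Y$, while $\partial_{\q}\mathcal A^{(P)}=0$ (after feeding back $\q=\langle\tanh^2 g\rangle_Y$) is what produces $\p=\beta'\q^{P/2}$—i.e., the roles of the $\p$- and $\q$-derivatives are swapped relative to your plan, but since you already intend to treat the three conditions as a coupled system this does not affect the outcome.
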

\begin{remark}
We stress that the self-consistence equations obtained through our method are the same obtained by Gardner in \cite{gardner} via heuristic techniques (i.e. the replica trick).
\end{remark}
\begin{remark}
Note that the above equations are rather different w.r.t. those of the Hopfield model, in particular the equation for the overlap $\bar{q}$ does not have a denominator at the r.h.s. (as typical for pairwise models as AGS theory revealed). Actually the self-consistency for the two-replica overlap in the DHN coincides 
with the self-consistency of the two-replica overlap in the hard P-spin-glass: this suggests that the glassy structure of the dense neural networks is different w.r.t. the glassy structure of the Hopfield model. We will deepen the glassy nature of these networks in the second part of the paper (see Section \ref{sec:glassy}).
\end{remark}
By the inspection of the self-consistency, we can find regions in the space of the control parameters $\beta$ and $\gamma$ -as $P$ is varied- where the networks is ergodic (e.g. when both $\bar{m}=0$ and $\bar{q}=0$), where the network is a pure spin glass (e.g. when $\bar{m}=0$ but $\bar{q} \sim 1$) and, the most important, where the network works as an associative memory and performs spontaneously pattern recognition (e.g. when both $\bar{m} \sim 1$ and $\bar{q} \sim 1$): these phase diagrams are shown in Figure \ref{fig:diagRS0} and deepened in Figure \ref{fig:trend_mag_RS0}.  In particular, if we visually follow the red line (the boundary of the retrieval region) starting from above, we see that the curve has a point of inflection at a value of $\gamma$ that we call $\gamma_{max}$ (and then recesses to smaller critical values for $\gamma$): that flex is the point where replica symmetry gets unstable. We can quantify the evolution of this instability as $P$ grows by plotting $1-\frac{\gamma (\beta\to \infty)}{\gamma_{max}}$ (see Figure \ref{fig:trend_mag_RS0}, left panel). It is interesting to note that, for larger and larger values of $P$, the instability regions gets smaller and smaller suggesting a milder role for RSB in very dense networks: this is further corroborated by the inspection of the values of the magnetization at $\gamma_c$ that approach one as $P\to \infty$ (see Figure \ref{fig:trend_mag_RS0}, right panel) and justifies why we investigated solely the first step of RSB in the following subsection.

\begin{figure}[h!]
    \centering
    \includegraphics[scale=0.7]{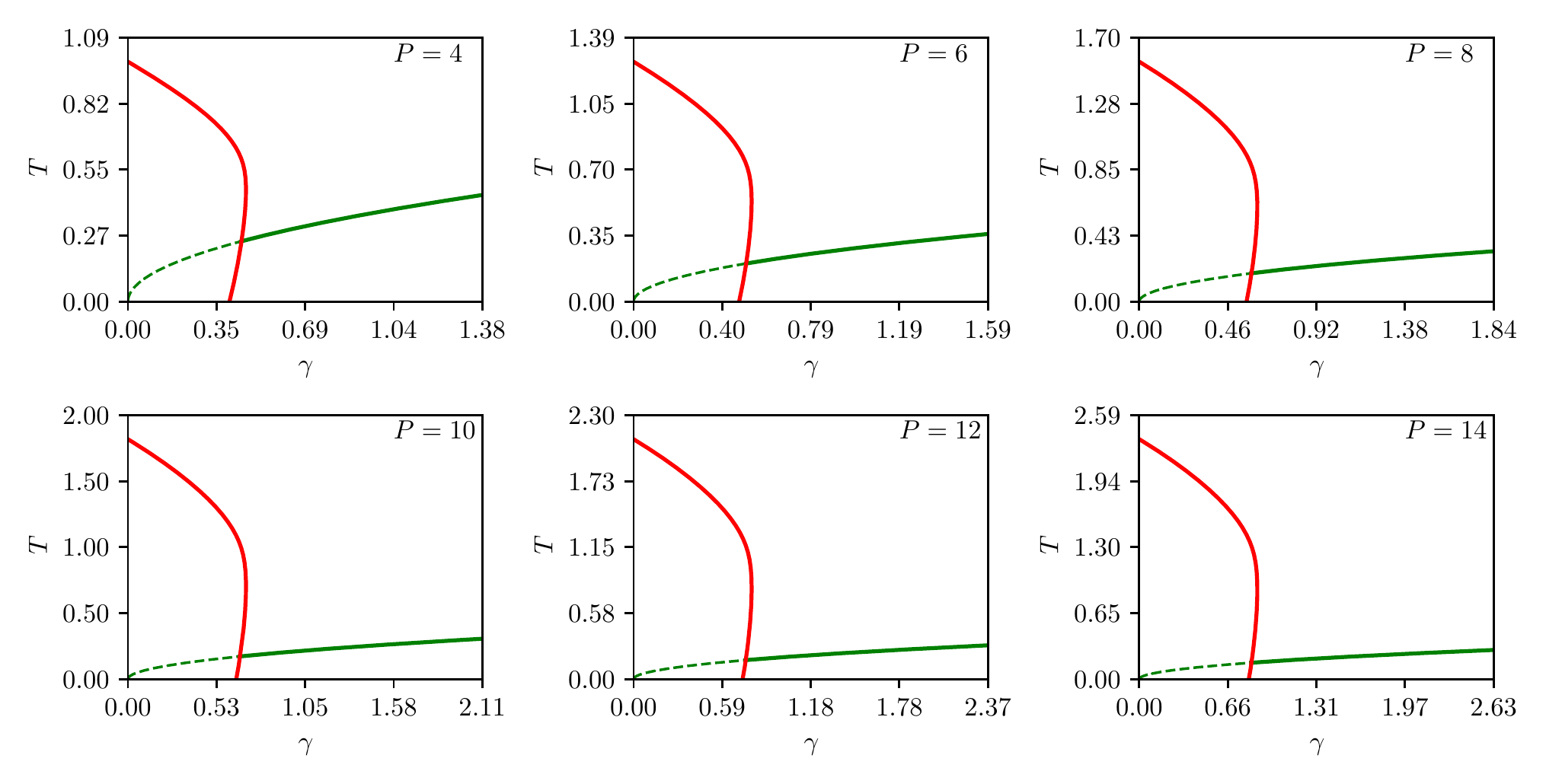}
    \caption{Replica symmetric (RS) phase diagram of the dense associative network at different values of $P$. The red curve identifies the phase transition splitting the retrieval region (on the left) from the spin glass phase (on the right), while the green curve identifies the boundary of the spin glass region (down) from the ergodic region (above). We stress that as P grows the spin glass region shrinks, as quantified in Figure \ref{fig:trend_mag_RS0} (left), further the pure spin glass solution -within the retrieval region- is always unstable and it is depicted by the dotted green curve: we call this region {\em instability region} and we inspect its evolution with $P$ in in Figure \ref{fig:trend_mag_RS0} (right).}
    \label{fig:diagRS0}
\end{figure}

\begin{figure}[h!]
     \begin{subfigure}[b]{0.5\textwidth}
         \centering
         \includegraphics[width=\textwidth]{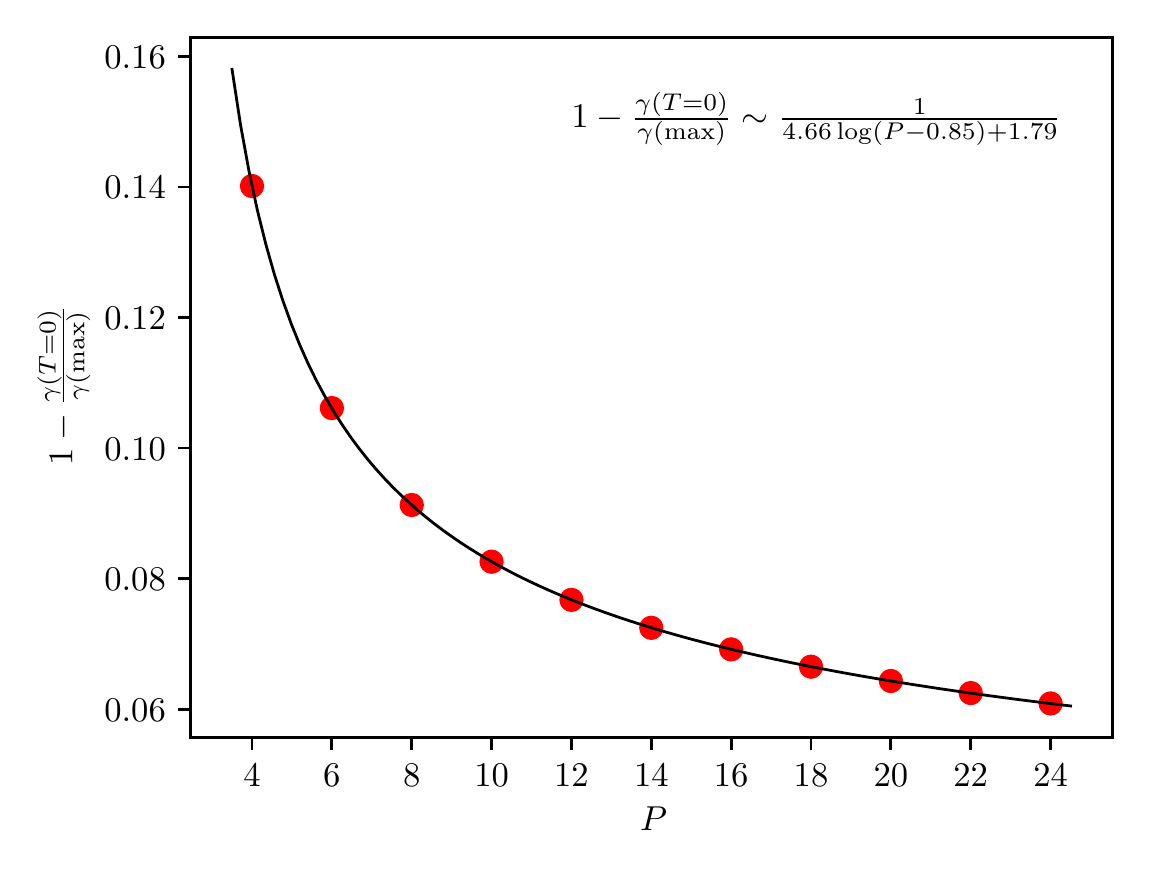}
     \end{subfigure}
     \begin{subfigure}[b]{0.5\textwidth}
         \centering
         \includegraphics[width=\textwidth]{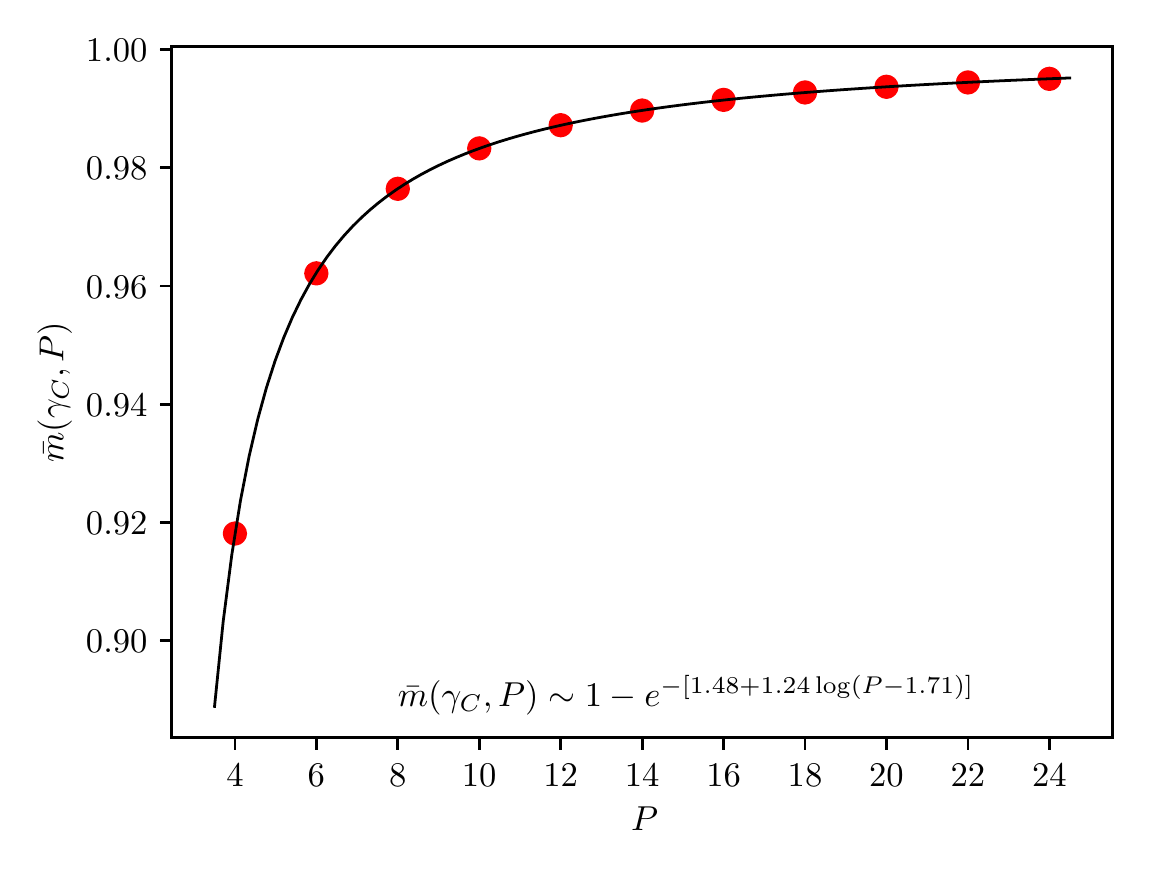}
     \end{subfigure}
        \caption{Left: Instability region w.r.t. $P$; we notice a strong reduction when $P$ increases. Right: Values of magnetization $\bar{m}$ w.r.t. P when we consider the critical capacity $\gamma_C$; we show that $\bar{m}$ reaches $1$ as P increases.}
        \label{fig:trend_mag_RS0}
\end{figure}

\subsection{1-RSB approximation}

In this subsection we turn to the solution of the quenched statistical pressure of the dense associative networks under the first step of replica symmetry breaking (1-RSB).
\newline
In the  1-RSB setting the probability distributions of the two overlaps $q$ and $p$ (see eqs. (\ref{limforq2}) and (\ref{limforp2}) respectively) display an analogous multi-modal structure as captured by the next

\begin{definition} \label{def:HM_RSB}
In the first step of replica-symmetry breaking (1-RSB), the distribution of the two-replica overlap $q$, in the thermodynamic limit, displays two delta-peaks at the equilibrium values, referred to as $\bar{q}_1,\ \bar{q}_2$, and the concentration on the two values is ruled by $\theta \in [0,1]$, namely
\begin{equation}
\lim_{N \rightarrow + \infty} P'_N(q) = \theta \delta (q - \bar{q}_1) + (1-\theta) \delta (q - \bar{q}_2). \label{limforq2}
\end{equation}
Similarly, for the overlap $p$, denoting with $\bar{p}_1,\ \bar{p}_2$ the equilibrium values, we have
\begin{equation}
\lim_{N \rightarrow + \infty} P''_N(p) = \theta \delta (p - \bar{p}_1) + (1-\theta) \delta (p - \bar{p}_2). \label{limforp2}
\end{equation}
The Mattis magnetization $m$ still self-averages at $ \bar{m}$ as in (\ref{eq:m_ter}).
\end{definition}
Note that, strictly speaking, the above ansatz for the overlaps is not the original Parisi one  (that holds for pure spin glasses, e.g. the Sherrington-Kirkpatrick model \cite{Guerra,Talagrand}), but its straightforward generalization, named {\em ziqqurat ansatz} for obvious reasons in \cite{Ziguli1,Ziguli2}. 

Following the same route pursued in the previous sections, we need an interpolating partition function $\mathcal{Z}$ and an interpolating quenched pressure $\mathcal{A}^{(P)}$,  that are defined hereafter.
\begin{definition}
Given the interpolating parameters $\bm r = (x^{(1)}, x^{(2)}, y^{(1)}, y^{(2)}, w,z), t$ and the i.i.d. auxiliary fields $\{J_i^{(1)}, J_i^{(2)}\}_{i=1,...,N}$, with $J_i^{(1,2)} \sim \mathcal N(0,1)$ for $i=1, ..., N$ and $\{\tilde J_{\mu}^{(1)}, \tilde J_{\mu}^{(2)}\}_{\mu=2,...,P}$, with $J_{\mu}^{(1,2)} \sim \mathcal N(0,1)$ for $\mu=2,...,P$, we can write the 1-RSB interpolating partition function $\mathcal Z_N(t, \boldsymbol r)$ for the dense associative network (\ref{eq:hop_hbare}) recursively, starting by

\begin{equation}
\label{eqn:Z2}
\begin{array}{lll}
     \mathcal{Z}^{(P)}_2(t, \bm r) &:=& \sommaSigma{\boldsymbol \sigma} \displaystyle\int \mathcal{D} \bm \tau\exp{}\Bigg[t\dfrac{\beta\,'\,N}{2} m^P(\boldsymbol \sigma)+wN\psi\,m(\boldsymbol \sigma)
     \\\\
     &\textcolor{white}{s} &+\sqrt{t}\sqrt{\dfrac{\beta\,' }{N^{P-1}}}\SOMMA{\mu>1}{K}\,\left(\SOMMA{i_1,\cdots,i_{_{P/2}}=1}{N,\cdots,N}\xi^{\mu}_{i_1\cdots,i_{_{P/2}}}\sigma_{i_1}\cdots\sigma_{_{P/2}}\right)\tau_{\mu} -\dfrac{\beta'\gamma}{2}N^{a-P/2}
     \\\\
     &\textcolor{white}{s} &+\SOMMA{a=1}{2}\left(\sqrt{N^{^{1-P/2}}x^{(a)}}\SOMMA{\mu>1}{K} \tilde{J\,}_{\mu}^{(a)}\tau_{\mu}+\sqrt{y^{(a)}}\SOMMA{i=1}{N} J_i^{(a)}\sigma_i\right)+\dfrac{zN^{1-P/2}}{2}\SOMMA{\mu>1}{K}\,\tau^2_{\mu}\Bigg]\,,
\end{array}
\end{equation}
where  the $\xi^{\mu}_{i_1\cdots,i_{_{P/2}}}$'s are i.i.d. standard Gaussians. 
Averaging out the fields recursively, we define
\begin{align}
\label{eqn:Z1_trans}
\mathcal Z_1^{(P)}(t, \bm r) \coloneqq& \mathbb E_2 \left [ \mathcal Z_2^{(P)}(t, \bm r)^\theta \right ]^{1/\theta} \\
\label{eqn:Z0_trans}
\mathcal Z_0^{(P)}(t, \bm r) \coloneqq&  \exp \mathbb E_1 \left[ \ln \mathcal Z_1^{(P)}(t, \bm r) \right ] \\
\mathcal Z_N^{(P)}(t, \boldsymbol r) \coloneqq& \mathcal Z_0^{(P)}(t, \bm r) ,
\end{align}
where with $\mathbb E_a$ we mean the average over the variables $J_i^{(a)}$'s and $\tilde J_\mu^{(a)}$'s, for $a=1, 2$, and with $\mathbb{E}_0$ we shall denote the average over the variables $\xi^{\mu}_{i_1\cdots,i_{_{P}}}$'s.
\end{definition}

\begin{definition}
\label{def:interpPressRSB}
The 1-RSB interpolating pressure of the DHN, at finite volume $N$, is introduced as
\begin{equation}\label{AdiSK1RSB}
\mathcal A_N^{(P)} (t) \coloneqq \frac{1}{N}\mathbb E_0 \big[ \ln \mathcal Z_0^{(P)}(t) \big],
\end{equation}
and, in the thermodynamic limit $\mathcal A^{(P)} (t) \coloneqq \lim_{N \to \infty} \mathcal A^{(P)}_N (t)$.
\newline
Note that by setting $t=1$, the interpolating pressure recovers the standard pressure (\ref{PressureDef}), that is, $A_N(\beta, \gamma) = \mathcal A^{(P)}_N (t =1)$.
\end{definition}

\begin{remark}
\label{rem:medie}
In order to lighten the notation, hereafter we use the following 
\begin{align}
\label{eq:unouno_a}
\langle m \rangle=& \mathbb E_0  \mathbb E_1  \mathbb E_2 \left[\mathcal W_2\frac{1}{N}\sum_{i=1}^N \omega(\xi_i \sigma_i) \right] \\
 \langle p_{11} \rangle=& \mathbb E_0  \mathbb E_1  \mathbb E_2 \left[ \mathcal W_2\frac{1}{P}\sum_{\mu=1}^P \omega(\tau_\mu^2) \right]\\
 \langle p_{12} \rangle_1=& \mathbb E_0  \mathbb E_1   \left[ \frac{1}{P}\sum_{\mu=1}^P \left( \mathbb E_2  \left [\mathcal W_2\omega(\tau_\mu)\right] \right)^2 \right] \\
\langle p_{12} \rangle_2=& \mathbb E_0  \mathbb E_1  \mathbb E_2 \left[ \mathcal W_2\frac{1}{P}\sum_{\mu=1}^P \omega(\tau_\mu)^2 \right] \\
\label{eqn:q121_a}
 \langle q_{12} \rangle_1=&\mathbb E_0  \mathbb E_1  \left[\frac{1}{N} \sum_{i=1}^N \left( \mathbb E_2 \left[\mathcal W_2\omega(\sigma_i)\right] \right)^2 \right] \\
\label{eqn:q122_a}
\langle q_{12} \rangle_2=& \mathbb E_0  \mathbb E_1  \mathbb E_2 \left [ \mathcal W_2\frac{1}{N}\sum_{i=1}^N \omega(\sigma_i)^2 \right] 
\end{align}
where the weight $\mathcal W_2$ is defined as
\begin{equation}
\mathcal W_2 :=\frac{{{\mathcal {Z}}_2^{(P)}}^\theta}{\mathbb E_2 \left [{{\mathcal {Z}}_2^{(P)}}^\theta\right]}.
\end{equation}
Furthermore, we define the Boltzmann factor $\mathcal B (\bm \sigma, \bm \tau ; t, \bm r)$ similarly to RS assumption. 
\end{remark}

The next step is building a transport equation for the interpolating quenched pressure, for which we preliminary need to evaluate the related partial derivatives, as discussed in the next
\begin{lemma} \label{lemma:4}
The partial derivative of the interpolating quenched pressure with respect to a generic variable $\rho$ reads as
%
%
%
\begin{equation}
\label{eqn:partialrA}
\frac{\partial }{\partial \rho} \mathcal A_N^{(P)}(t, \bm r)=\frac{1}{N} \mathbb E_0  \mathbb E_1  \mathbb E_2 \left[\mathcal W_2 \omega \big( \partial_\rho \mathcal B(\bm \sigma,  \bm \tau; t, \bm r) \big)\right].
\end{equation}
In particular,
\begin{align}
\label{eqn:partialtA}
\frac{\partial }{\partial t} \mathcal A^{(P)}_N =&\frac{\beta^{'}}{2}\langle m_1^P \rangle+ \frac{\beta^{'} K}{2N^{P/2}}\big ( \langle p_{11} \rangle -(1-\theta)\langle p_{12}q_{12}^{\frac{P}{2}} \rangle_2-\theta\langle p_{12}q_{12}^{\frac{P}{2}} \rangle_1 \big) \\
\label{eqn:partialx1A}
\frac{\partial }{\partial x^{(1)}} \mathcal A^{(P)}_N =& \frac{K}{2N^{P/2}}\big ( \langle p_{11} \rangle  -(1-\theta)\langle p_{12} \rangle_2-\theta\langle p_{12} \rangle_1 \big) \\
\label{eqn:partialx2A}
\frac{\partial }{\partial x^{(2)}} \mathcal A^{(P)}_N =& \frac{K}{2N^{P/2 }}\big ( \langle p_{11} \rangle  -(1-\theta)\langle p_{12} \rangle_2 \big) \\
\label{eqn:partialy1A}
\frac{\partial }{\partial y^{(1)}} \mathcal A^{(P)}_N =& \frac{1}{2}\big ( 1-(1-\theta)\langle q_{12} \rangle_2-\theta\langle q_{12} \rangle_1 \big) \\
\label{eqn:partialy2A}
\frac{\partial }{\partial y^{(2)}} \mathcal A^{(P)}_N =& \frac{1}{2}\big ( 1 -(1-\theta)\langle p_{12} \rangle_2
\big)\\
\label{eqn:partialzA}
\frac{\partial }{\partial z} \mathcal A^{(P)}_N =& \frac{K}{2N^{P/2}} \langle p_{11} \rangle \\
\label{eqn:partialwA}
\frac{\partial }{\partial w}  \mathcal A^{(P)}_N =&  \langle m_1 \rangle
\end{align}
\end{lemma}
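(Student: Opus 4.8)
The plan is in two steps: first derive the ``master formula'' \eqref{eqn:partialrA} by differentiating through the two nested averages \eqref{eqn:Z1_trans}--\eqref{eqn:Z0_trans}, and then specialise it to each interpolating variable \(\rho\in\{t,x^{(1)},x^{(2)},y^{(1)},y^{(2)},z,w\}\) by reading off \(\partial_\rho\mathcal B\) from \eqref{eqn:Z2} and reducing the resulting expectations with Gaussian integration by parts.

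For the master formula I would start from \(\mathcal A^{(P)}_N=\tfrac1N\mathbb E_0\ln\mathcal Z_0^{(P)}=\tfrac1N\mathbb E_0\mathbb E_1\ln\mathcal Z_1^{(P)}=\tfrac1{N\theta}\mathbb E_0\mathbb E_1\ln\mathbb E_2\big[(\mathcal Z_2^{(P)})^{\theta}\big]\), the last two steps being \eqref{eqn:Z0_trans} and \eqref{eqn:Z1_trans}. At finite \(N\) each of \(\mathcal Z_0^{(P)},\mathcal Z_1^{(P)},\mathcal Z_2^{(P)}\) is a smooth, strictly positive function of \((t,\bm r)\), so standard dominated-convergence arguments justify exchanging \(\partial_\rho\) with \(\mathbb E_0,\mathbb E_1,\mathbb E_2\); carrying it through gives \(\partial_\rho\mathcal A^{(P)}_N=\tfrac1N\mathbb E_0\mathbb E_1\big[\mathbb E_2[(\mathcal Z_2^{(P)})^{\theta-1}\partial_\rho\mathcal Z_2^{(P)}]/\mathbb E_2[(\mathcal Z_2^{(P)})^{\theta}]\big]\). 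Writing \(\partial_\rho\mathcal Z_2^{(P)}=\mathcal Z_2^{(P)}\,\omega(\partial_\rho\mathcal B)\) (\(\omega\) acting on the \(\rho\)-derivative of the exponent of \(\mathcal B\)) and recognising \((\mathcal Z_2^{(P)})^{\theta}/\mathbb E_2[(\mathcal Z_2^{(P)})^{\theta}]=\mathcal W_2\) gives \eqref{eqn:partialrA} at once.

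The second step is bookkeeping on \eqref{eqn:Z2}. Since \(z,w\) couple to disorder-free terms, \eqref{eqn:partialrA} returns \eqref{eqn:partialzA} and \eqref{eqn:partialwA} directly, and the same mechanism produces the \(\tfrac{\beta'}2\langle m_1^P\rangle\) piece of \eqref{eqn:partialtA}. The remaining, disorder-linear contributions --- the \(\xi\)-linear term of \(\partial_t\mathcal B\) (with \(\xi^{\mu}_{i_1\cdots i_{P/2}}\) integrated by \(\mathbb E_0\)) and \(\partial_{x^{(a)}}\mathcal B,\partial_{y^{(a)}}\mathcal B\) (linear in \(\tilde J^{(a)}_\mu,J^{(a)}_i\), integrated by \(\mathbb E_a\)) --- I would handle by Stein's lemma \eqref{eqn:gaussianrelation2}. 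The square-root parametrisation makes the two prefactors combine (e.g. into \(\beta'/(2N^{P-1})\) for \(\partial_t\)); the ``diagonal'' self-contraction collapses through \(\sigma_i^2=1\) and \(\sum_{i_1,\dots,i_{P/2}}\sigma^{(1)}_{i_1}\sigma^{(2)}_{i_1}\cdots\sigma^{(1)}_{i_{P/2}}\sigma^{(2)}_{i_{P/2}}=\big(\sum_i\sigma^{(1)}_i\sigma^{(2)}_i\big)^{P/2}=(Nq_{12})^{P/2}\), together with the definitions of \(p_{11},p_{12}\) (Definition \ref{hop_orderparameters}), onto the announced overlap monomials with the \(K/(2N^{P/2})\)-type prefactors. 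The one new feature compared with the replica-symmetric case is that each Gaussian field also sits inside the weight \(\mathcal W_2=(\mathcal Z_2^{(P)})^{\theta}/\mathbb E_2[(\mathcal Z_2^{(P)})^{\theta}]\): differentiating the numerator \((\mathcal Z_2^{(P)})^{\theta}\) brings down a factor \(\theta\) and a second replica under the \emph{same} \(\mathbb E_2\)-block (the \(\langle\,\cdot\,\rangle_2\) averages of Remark \ref{rem:medie}), while for the level-one fields \(x^{(1)},y^{(1)}\) and for \(\xi\) --- which are \emph{not} integrated by \(\mathbb E_2\), so that the denominator \(\mathbb E_2[(\mathcal Z_2^{(P)})^{\theta}]\) depends on them too --- the derivative of that denominator, once \(\mathbb E_1\) (or \(\mathbb E_0\)) is applied, produces the square of an \(\mathbb E_2\)-reweighted average (the \(\langle\,\cdot\,\rangle_1\) averages). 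Collecting signs gives the combination \(-(1-\theta)\langle\,\cdot\,\rangle_2-\theta\langle\,\cdot\,\rangle_1\) of \eqref{eqn:partialtA}, \eqref{eqn:partialx1A} and \eqref{eqn:partialy1A}; for the level-two fields \(x^{(2)},y^{(2)}\), integrated only by \(\mathbb E_2\), the \(\mathbb E_2\)-normalisation is field-independent and the \(\mathbb E_1\)-reweighting untouched, so only the \(\langle\,\cdot\,\rangle_2\) pieces survive, which is \eqref{eqn:partialx2A} and \eqref{eqn:partialy2A}.

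I expect the main obstacle to be precisely this last accounting: one must keep track of the \emph{three} places a Gaussian variable can occur --- inside \(\omega\), inside \((\mathcal Z_2^{(P)})^{\theta}\) in the numerator of \(\mathcal W_2\), and (for the level-one and the quenched variables) inside the \(\mathbb E_2\)-normalisation --- so that the integration by parts yields the correct signed mixture of the two distinct two-replica overlaps with the right \(\theta\)-weights. Everything else is a routine repetition of the replica-symmetric derivation with additional pattern and site indices.
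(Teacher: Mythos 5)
Your proposal is correct and follows essentially the same route as the paper: the master formula \eqref{eqn:partialrA} is obtained by the identical chain-rule computation through the nested averages \eqref{eqn:Z1_trans}--\eqref{eqn:Z0_trans}, and the specialisations are then done via Stein's lemma with the same three-way bookkeeping (derivative hitting $\omega$, the numerator $(\mathcal Z_2^{(P)})^{\theta}$ of $\mathcal W_2$, and the $\mathbb E_2$-normalisation), which is precisely the $D_1+D_2+D_3$ decomposition the paper carries out in Appendix~\ref{app2} for the $t$-derivative. Your observation that the level-two fields see a field-independent normalisation, so that only the $\langle\,\cdot\,\rangle_2$ term survives, correctly reproduces the distinction between \eqref{eqn:partialx1A}--\eqref{eqn:partialy1A} and \eqref{eqn:partialx2A}--\eqref{eqn:partialy2A}.
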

\begin{proof}
The proof is pretty lengthy and basically requires just standard calculations, so it is left for the Appendix \ref{app2}. Here we just prove that, in  complete generality
\begin{align}
\label{eqn:partialrA1}
\frac{\partial }{\partial \rho}  \mathcal A^{(P)}_N(t, \bm r)=& \frac{1}{N} \mathbb E_0 \mathbb E_1 \bigg [\partial_\rho \ln\mathcal Z_1^{(P)} \bigg] \nonumber \\
=&\frac{1}{N} \mathbb E_0 \mathbb E_1 \bigg[\frac{1}{\theta}\frac{1}{\mathcal Z_1^{(P)}} \big[ {{\mathcal {Z}}_2^{(P)}}^\theta\big]^{1/\theta-1} \mathbb E_2 \big[\partial_\rho{{\mathcal {Z}}_2^{(P)}}^\theta \big] \bigg] \nonumber \\
=&\frac{1}{N} \mathbb E_0 \mathbb E_1 \mathbb E_2\bigg[\frac{{{\mathcal {Z}}_2^{(P)}}^\theta}{\mathbb E_2 {{\mathcal {Z}}_2^{(P)}}^\theta }\frac{\partial_\rho {{\mathcal {Z}}_2^{(P)}}}{\mathcal Z_2^{(P)}} \bigg]  \nonumber\\
=&\frac{1}{N} \mathbb E_0 \mathbb E_1 \mathbb E_2 \bigg[\mathcal W_2\frac{\partial_\rho \mathcal Z_2^{(P)}}{\mathcal Z_2^{(P)}} \bigg].
\end{align}
\end{proof}

\begin{remark}
As in replica symmetric case, in the next computations we can use the following relations for $a=1,2$
\begin{align}
    \langle m_1^P \rangle &= \sum_{k=2}^P \begin{pmatrix}P\\k\end{pmatrix} \langle (m_1-\bar{m})^k \rangle \bar{m}^{P-k} + \bar{m}^P (1-P) + P\bar{m}^{P-1}\langle m_1 \rangle\,, \label{potential_m_1RSB}
    \\
    \langle p_{12}q_{12}^{P/2}\rangle_a &= \sum_{k=1}^{P/2} \begin{pmatrix}\frac{P}{2}\\k\end{pmatrix}\bar{q}_a^{P/2-k} \langle (p_{12}-\bar{p}_a)(q_{12} - \bar{q}_a)^k \rangle_a + \sum_{k=2}^{P/2} \begin{pmatrix}\frac{P}{2}\\k\end{pmatrix} \bar{q}_a^{P/2-k} \bar{p}_a\langle (q_{12} - \bar{q}_a)^k \rangle_a+ \notag
    \\
    &\textcolor{white}{=}+\q_a^{P/2}\l\pp\r_a+\dfrac{P}{2}\q_a^{P/2-1}\p_a\l\qq\r_a-\dfrac{P}{2}\q_a^{P/2}\p_a\,;
    \label{potential_pq_1RSB}
\end{align}
\end{remark}

\begin{proposition}
\label{prop:9}
The $t$-streaming of the 1-RSB interpolating pressure obeys, at finite volume $N$, a standard transport equation, that reads as
\begin{align}
\label{eqn:transportequation}
\frac{d\mathcal A^{(P)}}{dt}=&\partial_t \mathcal A^{(P)}+\dot x^{(1)}\partial_{x_1} \mathcal A^{(P)} +\dot x^{(2)}\partial_{x_2} \mathcal A^{(P)} +\dot y^{(1)} \partial_{y_1} \mathcal A^{(P)} +\dot y^{(2)} \partial_{y_2} \mathcal A^{(P)} \notag \\
&+\dot z \partial_{z} \mathcal A^{(P)}+\dot w \partial_{w} \mathcal A^{(P)} = S(t, \bm r) + V_N(t, \bm r),
\end{align}
where the source $S(t, \bm r)$ and the potential $V(t, \bm r)$ read as
\begin{align}
\label{eqn:f}
S(t, \bm r)  \coloneqq& \frac{\beta^{'} \bar m^P (1-P)}{2}-{\beta^{'}\gamma (\theta-1)}\frac{P}{2}\bar{p}_2 \bar{q}_2^{P/2} +{\beta^{'}\gamma \theta}\frac{P}{2} \bar{p}_1 \bar{q}_1^{P/2} - {\beta^{'}\gamma }\frac{P}{2}\bar{p}_2 \bar{q}_2^{P/2-1} \\
\label{eqn:V}
V_N(t, \bm r) \coloneqq&  \frac{\beta^{'}K }{2N^{P/2}}\left\{ (\theta -1) \left[\sum_{k=2}^{P/2} \binom{P/2}{k} \bar{q}_2^{P/2 - k} \langle (p_{12} - \bar{p}_2)(q_{12} - \bar{q}_2)^k \rangle_2 + \right. \right. \notag \\
&\left. \left. +\sum_{k=2}^{P/2} \binom{P/2}{k} \bar{q}_2^{P/2 - k} \bar{p}_2 \langle (q_{12} - \bar{q}_2)^k \rangle_2 \right] - \theta \left[\sum_{k=2}^{P/2} \binom{P/2}{k} \bar{q}_1^{P/2 - k} \langle (p_{12} - \bar{p}_1)(q_{12} - \bar{q}_1)^k \rangle_1 \right. \right.\notag \\
&\left. \left.+ \sum_{k=2}^{P/2} \binom{P/2}{k} \bar{q}_1^{P/2 - k} \bar{p}_1 \langle (q_{12} - \bar{q}_1)^k \rangle_1 \right]\right\} + \frac{\beta^{'}}{2}\sum_{k=2}^P \binom{P}{k} \langle (m_1 - \bar{m})^k \rangle
\end{align}
\normalsize
\end{proposition}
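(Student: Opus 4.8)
The proof runs in close parallel with that of Proposition~\ref{prop:interp_transp_RS}, the only genuine novelty being that every overlap now carries a channel label $a\in\{1,2\}$ with weights $\theta$ and $1-\theta$ inherited from the ziqqurat structure \eqref{eqn:Z1_trans}--\eqref{eqn:Z0_trans}. The plan is to take the explicit $t$-derivative \eqref{eqn:partialtA} of Lemma~\ref{lemma:4} as the starting point and to substitute into it the two algebraic identities \eqref{potential_m_1RSB}--\eqref{potential_pq_1RSB}, which split $\langle m_1^P\rangle$ and each $\langle p_{12}q_{12}^{P/2}\rangle_a$ into (i) \emph{fluctuation terms} --- the $k\ge 2$ sums weighted by $\langle(\Delta m)^k\rangle$, $\langle(\Delta q)^k\rangle_a$ and $\langle\Delta p\,(\Delta q)^k\rangle_a$ --- (ii) \emph{pure monomials} in $\bar{m}$, $\bar{p}_a$, $\bar{q}_a$, and (iii) \emph{linear-response terms} proportional to $\langle m_1\rangle$, $\langle p_{12}\rangle_a$ and $\langle q_{12}\rangle_a$. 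Group (i) is, by definition, exactly the content of $V_N(t,\bm r)$ in \eqref{eqn:V}; group (ii), once the innocuous rescaling $N^{a-P/2}\bar{p}_a\mapsto \bar{p}_a$ already used in the RS proof is performed, collapses onto the source $S(t,\bm r)$ in \eqref{eqn:f}.

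The remaining group (iii), together with the $\tfrac{\beta' K}{2N^{P/2}}\langle p_{11}\rangle$ piece of \eqref{eqn:partialtA}, is a fixed linear combination of $\langle m_1\rangle$, $\langle p_{11}\rangle$, $\langle p_{12}\rangle_{1,2}$ and $\langle q_{12}\rangle_{1,2}$. Each of the spatial derivatives \eqref{eqn:partialx1A}--\eqref{eqn:partialwA} produces precisely one such combination (for instance $\partial_{x^{(1)}}\mathcal A^{(P)}_N$ gives $\langle p_{11}\rangle-(1-\theta)\langle p_{12}\rangle_2-\theta\langle p_{12}\rangle_1$, while $\partial_{w}\mathcal A^{(P)}_N$ gives $\langle m_1\rangle$), so imposing that the full $t$-streaming $\tfrac{d\mathcal A^{(P)}}{dt}$ contain no residual expectation value \emph{forces} the values of the velocities $\dot x^{(1)},\dot x^{(2)},\dot y^{(1)},\dot y^{(2)},\dot z,\dot w$. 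Matching coefficients gives immediately $\dot w=-\tfrac{P}{2}\beta'\bar{m}^{P-1}$; the other five are pinned by solving the small triangular system induced by the $\theta$/$(1-\theta)$ weights --- channel-$1$ overlaps enter only through $\partial_{x^{(1)}}$ and $\partial_{y^{(1)}}$, fixing $\dot x^{(1)},\dot y^{(1)}$ first, then $\langle p_{12}\rangle_2$, $\langle q_{12}\rangle_2$ fix $\dot x^{(2)},\dot y^{(2)}$, and finally $\langle p_{11}\rangle$ fixes $\dot z$. Substituting these velocities back produces \eqref{eqn:transportequation} with $S$ and $V_N$ exactly as claimed.

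The step I expect to be the main obstacle is the bookkeeping of these two channels: one must check that the $\theta$-powers generated by differentiating $\mathcal Z_2^{(P)\,\theta}$ inside $\mathcal Z_1^{(P)}$ --- the mechanism already displayed in \eqref{eqn:partialrA1} --- reproduce exactly the asymmetric weights appearing in \eqref{eqn:partialtA}--\eqref{eqn:partialwA}, and that after inserting \eqref{potential_pq_1RSB} the constant pieces $-\tfrac{P}{2}\bar{q}_a^{P/2}\bar{p}_a$ together with the ``$1$'' surviving from the $y$-derivatives recombine into $S$ with the correct signs and the correct $(1-\theta)$-versus-$\theta$ labelling. Apart from this, the argument is a finite, mechanical rearrangement valid at fixed $N$: exactly as in Proposition~\ref{prop:interp_transp_RS}, $V_N$ is retained explicitly and no concentration of the order parameters is invoked, any thermodynamic-limit simplification being postponed to the subsequent analysis.
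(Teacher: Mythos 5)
Your proposal is correct and follows essentially the same route as the paper's own proof in the appendix: start from the partial derivatives of Lemma \ref{lemma:4}, insert the identities \eqref{potential_m_1RSB}--\eqref{potential_pq_1RSB} to split each average into fluctuation terms (yielding $V_N$), pure monomials (yielding $S$ after the rescaling $N^{a-P/2}\bar{p}_a\mapsto\bar{p}_a$), and linear-response terms that are cancelled by choosing the velocities $\dot{\bm r}$ to match the coefficients of the spatial derivatives \eqref{eqn:partialx1A}--\eqref{eqn:partialwA}. Your observation that the channel-$1$ overlaps appear only through $\partial_{x^{(1)}}$ and $\partial_{y^{(1)}}$, making the matching system triangular, is precisely the bookkeeping the paper carries out when fixing \eqref{eqn:dotx1}--\eqref{eq:dotw}.
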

The proof of the Proposition is provided in Appendix \ref{Appendix-Sua1}.

\begin{remark} \label{r:above}
In the thermodynamic limit, in the 1-RSB scenario, we have
\begin{align}
\label{eqn:thlimaveragem}
 \lim_{N\rightarrow \infty} \langle (m - \bar m)^2 \rangle =& 0\\
 \lim_{N\rightarrow \infty} \langle (q_{12}-\bar q_i)^2 \rangle_i=& 0; \: \: \: i=1,2\\
\label{eqn:thlimaveragep}
 \lim_{N\rightarrow \infty} \langle( p_{12}-\bar p_i)^2\rangle_i =& 0; \: \: \: i=1,2
\end{align}
Similar to the RS approximation, in the thermodynamic limit we have that the central moments greater than two tend to zero such that
\begin{equation} \label{eq:V0_HRSB}
\lim_{N \to \infty} V_N(t, \bm r) = 0.
\end{equation}
\end{remark}

Similar to Theorem \ref{cor_carmassimoRS}, we have the following
\begin{theorem}\label{Susy1}
In the thermodynamic limit, under one-step of replica symmetry breaking, the maximum storage of the dense Hebbian network scales as $K \propto N^{P-1}$, i.e. $a=P-1$.
\newline
In this regime of maximal storage, i.e. in the Baldi-Venkatesh limit, the quenched statistical pressure for even $P\geq 4$ becomes
\begin{align}
    &\mathcal{A}^{(P)}=\ln 2 + \frac{1}{\theta}\mathbb{E}_1 \ln \mathbb{E}_2 \cosh^\theta g (\bm J, \bar{m}) -\dfrac{\gamma\b}{4}\q_2^{P/2-1}\p_2\Big(P-(P-1)\q_2\Big) \notag \\
    &+ \frac{\beta^{'}}{2}\bar{m}^P (1-P) -\theta(P-1)\dfrac{\b\gamma}{4}(\q_2^{P/2}\p_2-\q_1^{P/2}\p_1) + \frac{1}{4}{\beta^{'}}^2 \gamma
    \label{A_1RSB_finalissima_trans}
\end{align}
where 
\begin{align}
    g(\bm J, \bar{m})&= \frac{\beta^{'}P}{2}\bar{m}^{P-1} + J^{(1)}\sqrt{\frac{\beta^{'}}{2}\gamma \bar{p}_1P \bar{q}_1^{P/2-1}} +J^{(2)}\sqrt{ \frac{\beta^{'}}{2} P\gamma \left[ \bar{p}_2 \bar{q}_2^{P/2-1} -\bar{p}_1 \bar{q}_1^{P/2-1}  \right]}
\end{align}
\end{theorem}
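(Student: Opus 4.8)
The plan is to mirror the proof of Theorem~\ref{cor_carmassimoRS}, now using the 1-RSB transport equation of Proposition~\ref{prop:9}. Concretely: (i) fix the characteristic velocities $\dot{\bm r}=(\dot x^{(1)},\dot x^{(2)},\dot y^{(1)},\dot y^{(2)},\dot z,\dot w)$ so that the left-hand side of \eqref{eqn:transportequation} is the genuine total $t$-derivative of $\mathcal A_N^{(P)}$ along the flow; (ii) pass to the thermodynamic limit, where $V_N\to 0$ by \eqref{eq:V0_HRSB}, so that $\tfrac{d}{dt}\mathcal A^{(P)}=S$ along characteristics; (iii) observe that $S$ in \eqref{eqn:f} is built only from the $t$-independent equilibrium values $\bar m,\bar q_1,\bar q_2,\bar p_1,\bar p_2,\theta$, hence is constant along each characteristic, so that integrating from $t=0$ to $t=1$ at the target point $\bm r=\bm 0$ gives
\begin{equation}
\mathcal A^{(P)}(1,\bm 0)=\mathcal A^{(P)}(0,\bm r(0))+S,\qquad \bm r(0)=-\dot{\bm r},
\end{equation}
with $\bm r(0)$ the foot of the characteristic through $(\bm 0,1)$; (iv) evaluate the Cauchy datum $\mathcal A^{(P)}(0,\bm r)$ in closed form and substitute $\bm r(0)=-\dot{\bm r}$.

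For step (i) the velocities are read off from the derivative identities \eqref{eqn:partialtA}--\eqref{eqn:partialwA} exactly as in the RS case: after the $\theta$-weighted resummations \eqref{potential_m_1RSB}--\eqref{potential_pq_1RSB}, one picks $\dot x^{(1)},\dot x^{(2)},\dot y^{(1)},\dot y^{(2)}$ to absorb the $\langle p_{12}q_{12}^{P/2}\rangle_{1,2}$ and $\langle p_{12}\rangle_{1,2}$ combinations appearing in $\partial_t\mathcal A_N$ (up to what is assigned to $S$ and $V_N$), and $\dot z,\dot w$ for the $\langle p_{11}\rangle$ and $\langle m_1\rangle$ pieces; this is precisely the bookkeeping carried out in Appendix~\ref{Appendix-Sua1}. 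For step (iv), at $t=0$ the interpolating partition function \eqref{eqn:Z2} factorizes over sites and over patterns: the $\bm\sigma$-sum becomes a product of $2\cosh$ of a local field carrying the $y^{(a)}$-Gaussians and the $w$-field, while the $\bm\tau$-integral is Gaussian with quadratic part $zN^{1-P/2}$ and linear drift from the $x^{(a)}$-Gaussians. Performing the Gaussian $\bm\tau$-integration and then propagating the result through the nested $\mathbb E_2$, $[\,\cdot\,]^\theta$, $\mathbb E_1$, $\exp\mathbb E_1\ln[\,\cdot\,]$, $\mathbb E_0$ operations reproduces the standard 1-RSB one-body term $\ln 2+\tfrac1\theta\mathbb E_1\ln\mathbb E_2\cosh^\theta g(\bm J,\bar m)$; the split of the noise inside $g(\bm J,\bar m)$ into a $J^{(1)}$-part of variance $\tfrac{\beta'}{2}\gamma\bar p_1P\bar q_1^{P/2-1}$ and a $J^{(2)}$-part of the complementary variance is exactly what $y^{(a)}(0)=-\dot y^{(a)}$ and $w(0)=-\dot w$ dictate, while the $x^{(a)}$'s together with the diagonal subtraction give the Gaussian piece that, after collecting powers of $N$, reduces to the finite term $\tfrac14\beta'^2\gamma$.

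The scaling $a=P-1$ is pinned exactly as in Theorem~\ref{cor_carmassimoRS}: tracking the powers of $N$ carried by the diagonal term $-\tfrac{\beta'\gamma}{2}N^{a-P/2}$, by the Gaussian $\bm\tau$-integration over the $K=\gamma N^{a}$ patterns, and by the factorized one-body contribution, one sees that the $\gamma$-dependent (slow-noise) part of $\mathcal A^{(P)}$ has a finite, non-vanishing thermodynamic limit only at the critical load $a=P-1$ (it diverges for $a>P-1$ and is washed out for $a<P-1$); substituting $a=P-1$ into the collected expression yields \eqref{A_1RSB_finalissima_trans}.

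The delicate step is (iv): carrying the nested averages through the $[\,\cdot\,]^\theta$ and $\exp\mathbb E_1\ln[\,\cdot\,]$ operations while simultaneously doing the Gaussian $\bm\tau$-integral and bookkeeping the $N$-powers, so that the divergent pieces cancel and the surviving Gaussian contribution is \emph{exactly} $\tfrac14\beta'^2\gamma$ — note the absence of the $(1-\bar q^P)$ factor present in the RS dense case \eqref{eq:pressure_GuerraRS} and in both Hopfield cases, which is the analytic signature of ``only the hard spin glass surviving''. The remaining checks are routine: $S$ evaluated at the equilibrium overlaps, combined with the $\bm r(0)=-\dot{\bm r}$ substitution, must reproduce the $-\tfrac{\gamma\beta'}{4}\bar q_2^{P/2-1}\bar p_2\big(P-(P-1)\bar q_2\big)$, the $\tfrac{\beta'}{2}\bar m^P(1-P)$ and the $-\theta(P-1)\tfrac{\beta'\gamma}{4}(\bar q_2^{P/2}\bar p_2-\bar q_1^{P/2}\bar p_1)$ terms of \eqref{A_1RSB_finalissima_trans}; and extremizing \eqref{A_1RSB_finalissima_trans} over $\bar m,\bar q_{1,2},\bar p_{1,2},\theta$ delivers the associated 1-RSB self-consistency equations (not needed for the statement itself).
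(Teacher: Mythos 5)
Your proposal is correct and follows essentially the same route as the paper's proof in Appendix \ref{SusyBreak}: solve the 1-RSB transport equation of Proposition \ref{prop:9} by characteristics with $V_N\to 0$, evaluate the factorized one-body Cauchy datum at $t=0$ with $\bm r(0)=-\dot{\bm r}$, and pin $a=P-1$ by requiring the $\gamma$-dependent terms to survive with an intensive, non-divergent thermodynamic limit. One small caveat: your parenthetical claim that the absence of a $(1-\bar q^P)$ factor in the final $\tfrac14\beta'^2\gamma$ term distinguishes the 1-RSB case from the RS dense case \eqref{eq:pressure_GuerraRS} is misleading — the two expressions differ only by how the $\bar q^P$ piece is grouped and agree once $\bar p=\beta'\bar q^{P/2}$ is used, so this is not the signature of the hard-spin-glass reduction (which holds equally at RS and 1-RSB) — but this aside does not affect the validity of the argument.
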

The proof is provided in Appendix \ref{SusyBreak}.
  
\begin{remark}
The above 1-RSB quenched statistical pressure, with $P=2$ and $a=1$ in \eqref{pressfinaleGuerraRSB} -namely the solution of standard Hopfield model under one step of replica symmetry breaking, coincides with that predicted heuristically by  Crisanti, Amit and Gutfreund \cite{Crisanti2, lindaRSB}.
\end{remark}
\begin{figure}[h!]
    \centering
    \includegraphics[scale=0.7]{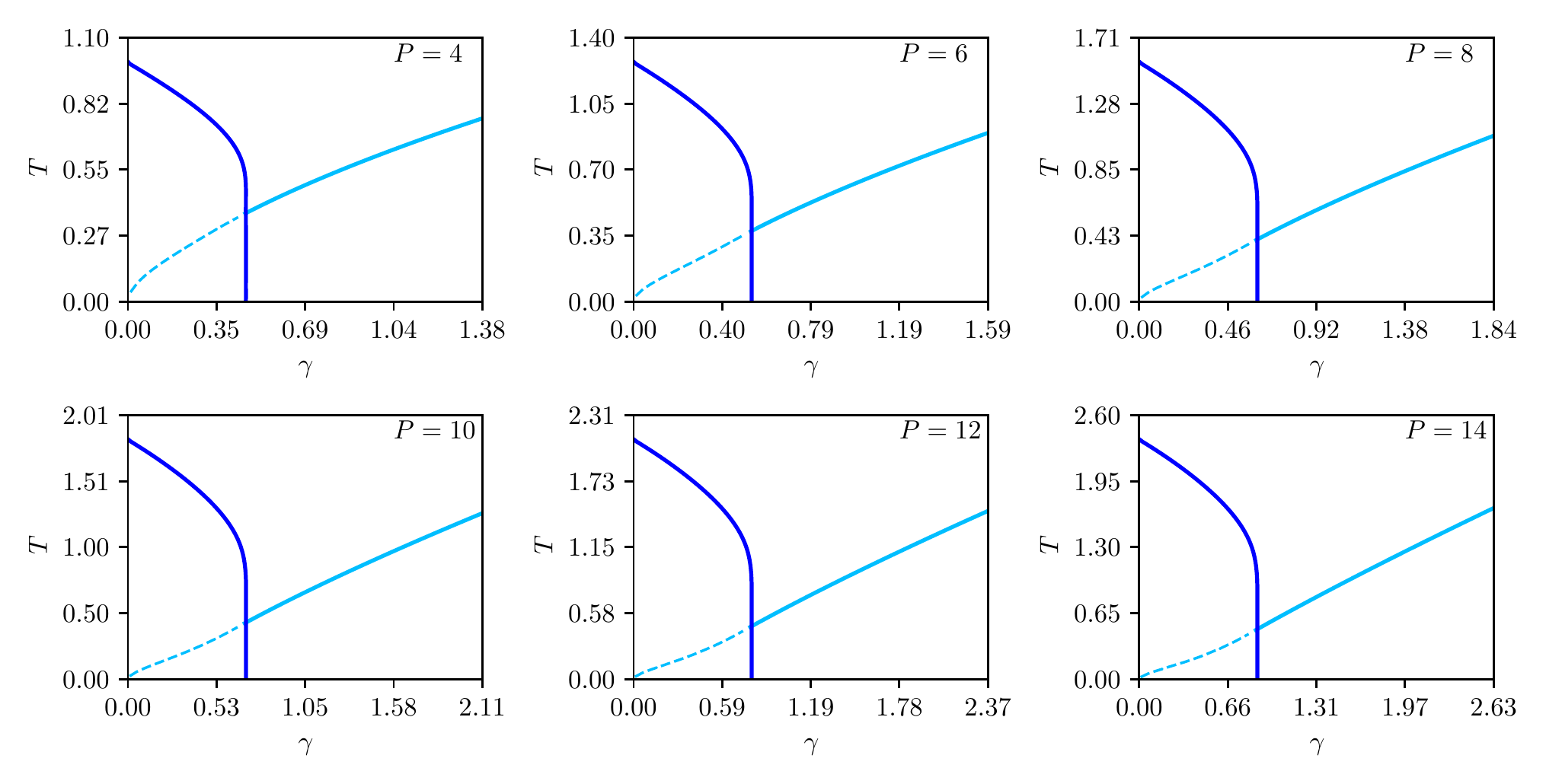}
    \caption{Broken replica symmetry (1-RSB) phase diagram of the dense associative network as different values of $P$. The dark blue phase transition identifies the retrieval region, while the light blue identifies the spin-glass region.  We stress that -outside the retrieval region- as P grows the spin-glass region gets stable in the RSB picture (while it shrinks to zero in the RS scenario). Inside the retrieval region the pure spin glass solutin is always unstable and it is detached by  a light blue dotted line.}
    \label{fig:diagRS}
\end{figure}

\begin{figure}[h!]
     \begin{subfigure}[b]{0.5\textwidth}
         \centering
         \includegraphics[width=\textwidth]{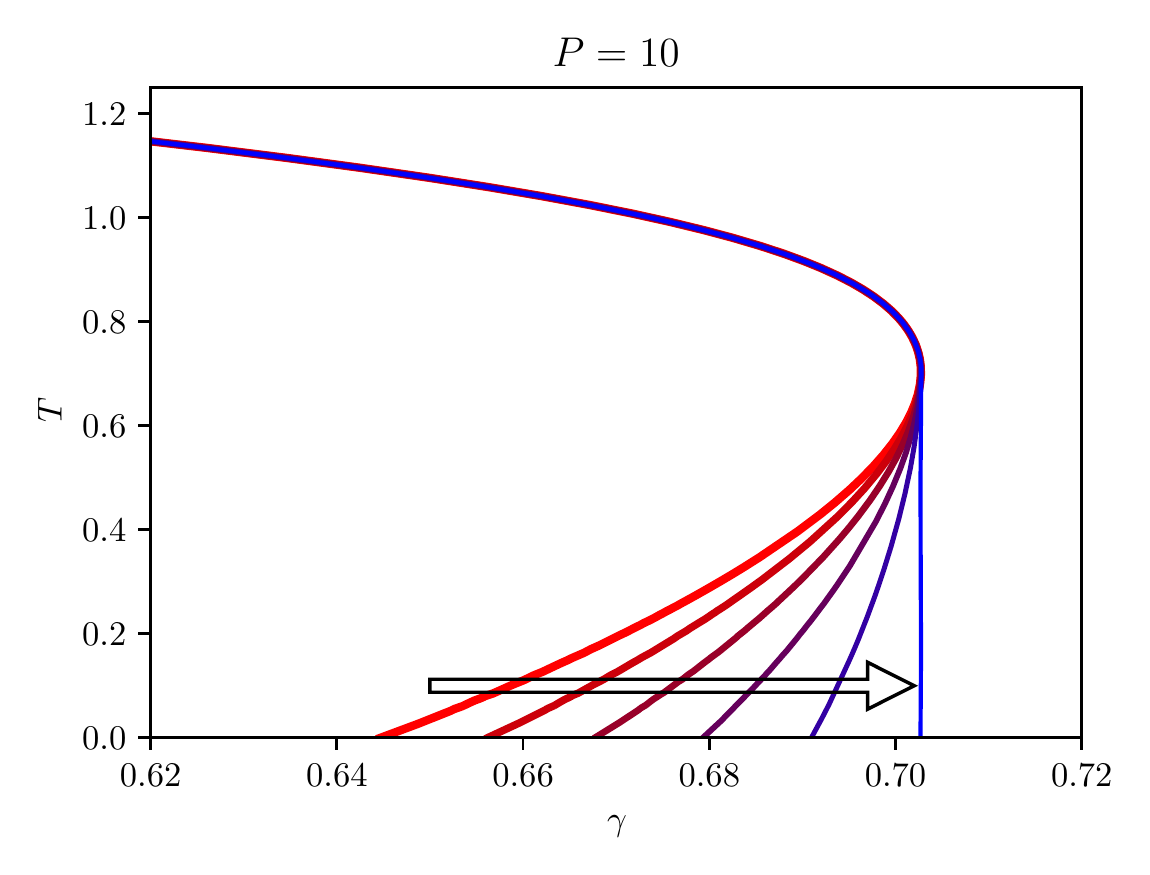}
     \end{subfigure}
     \begin{subfigure}[b]{0.5\textwidth}
         \centering
         \includegraphics[width=\textwidth]{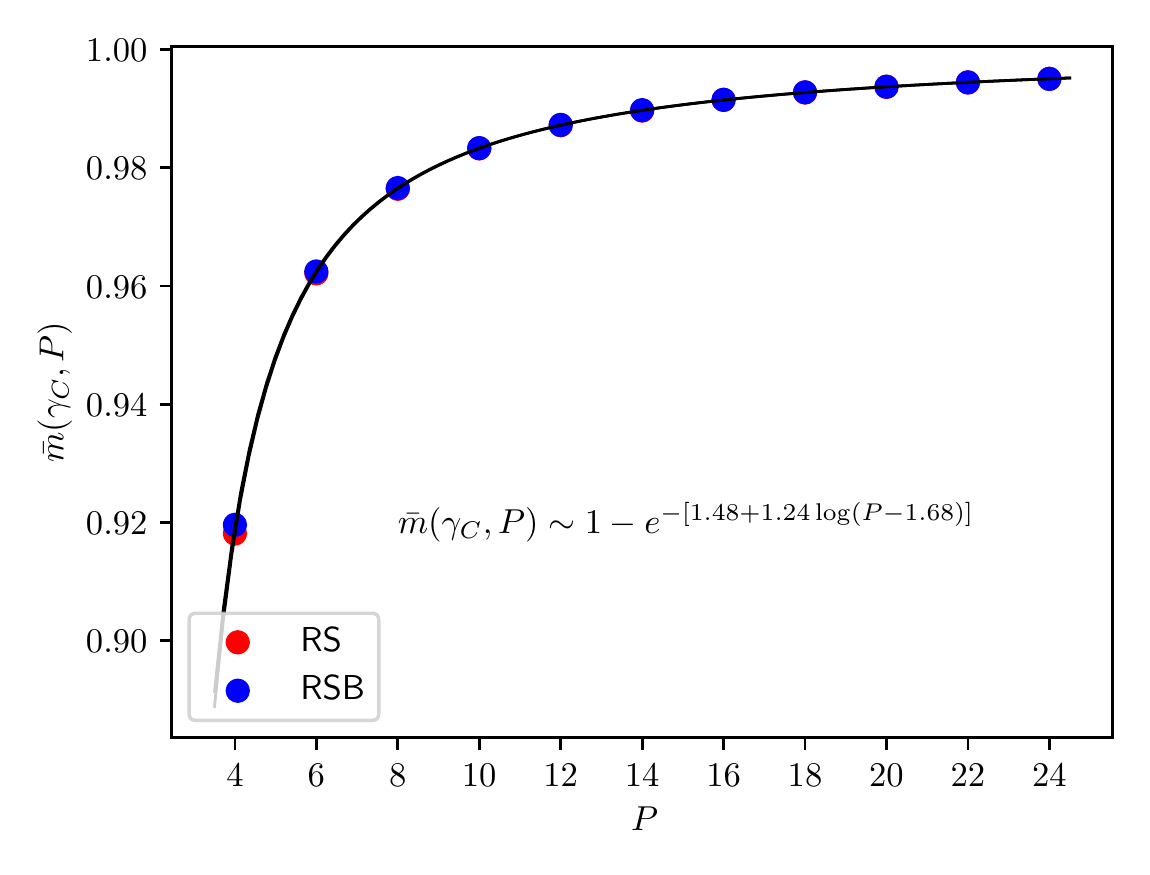}
     \end{subfigure}
        \caption{Left: Super-position of phase diagrams in $P=10$ case for RS (red) and 1RSB (blue) assumption. We highlight the fading of instability region in 1RSB case. Right: Values of magnetization $\bar{m}$ w.r.t. P when we consider the critical capacity $\gamma_C$; we note that the values of the magnetization in the RS and 1-RSB regimes coincide and as P increases, suggesting that the smaller the $P$ the stronger the effect of RSB in the network.}
        \label{fig:trend_mag_RS_2}
\end{figure}

\begin{figure}[h!]
     \begin{subfigure}[b]{0.5\textwidth}
         \centering
         \includegraphics[width=1.05\textwidth]{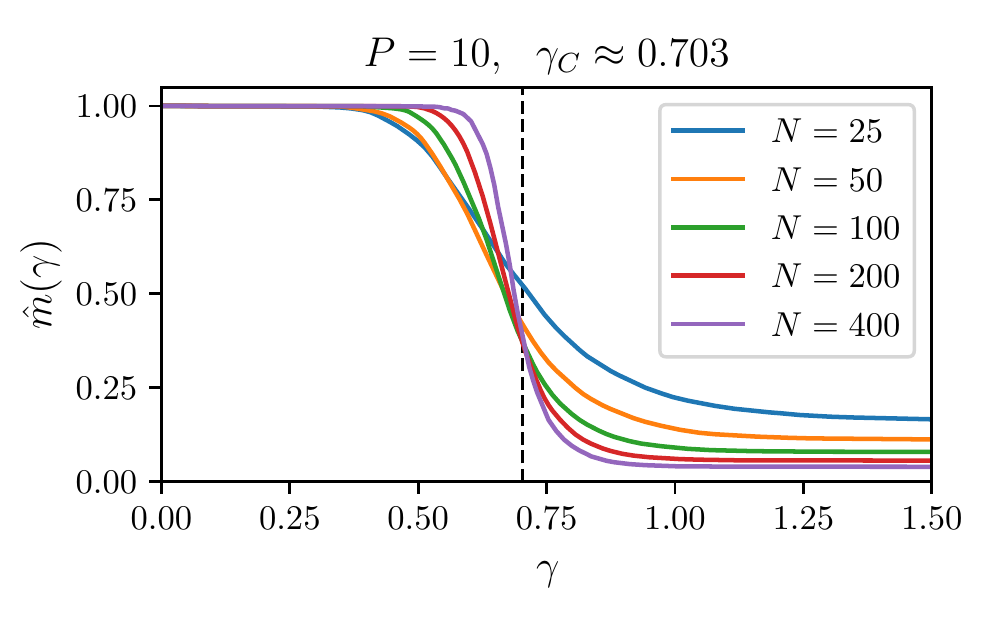}
     \end{subfigure}
     \begin{subfigure}[b]{0.5\textwidth}
         \centering
         \includegraphics[width=1.05\textwidth]{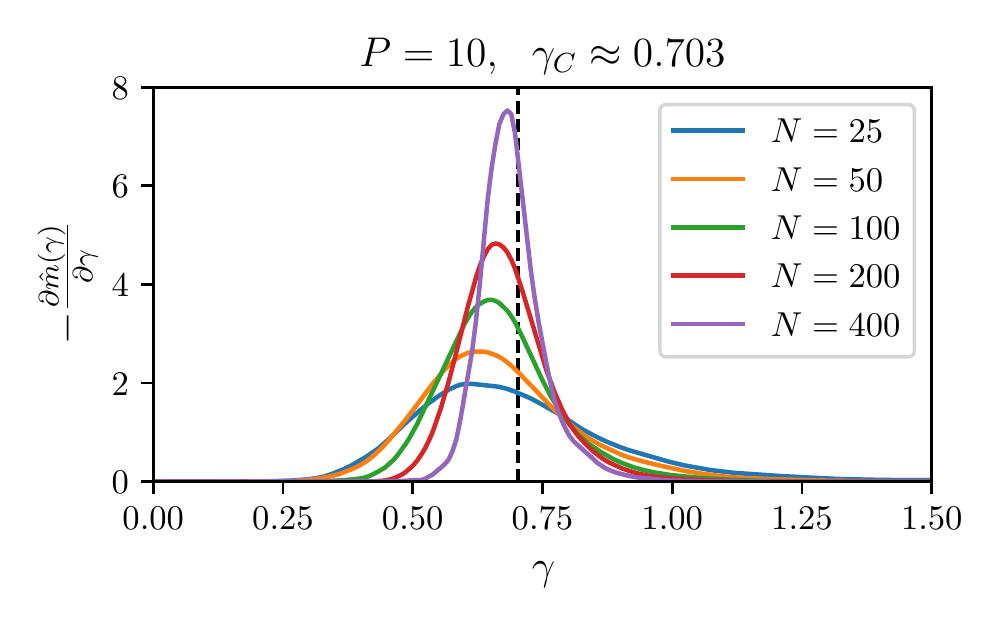}
     \end{subfigure}
        \caption{Monte Carlo numerical checks for a dense network with $P=10$: we highlight the agreement among simulations (colored lines report different simulation sizes, to facilitate a visual finize size scaling) and theory (reported as a vertical dashed bar). Left: Mattis magnetization. Right: Susceptibility (as a response function in $\gamma$).}
\end{figure}

By extremizing the quenched statistical pressure in \eqref{A_1RSB_finalissima_trans} w.r.t. the order parameters we can state the following
\begin{corollary}
\label{cor:SC_HOP_1RSB}
The self-consistent equations for the order parameters, under one step of replica symmetry breaking, read as
\begin{align}
&\bar{m}= \mathbb{E}_1 \left[ \frac{\mathbb{E}_2 \cosh^\theta g(\bm J,\bar{m})\tanh g(\bm J,\bar{m})}{\mathbb{E}_2 \cosh^\theta g(\bm J,\bar{m})}\right] \\
&\bar{p}_1= \beta^{'}\bar{q}_1^{P/2} \\
&\bar{p}_2 = \beta^{'}\bar{q}_2^{P/2}\\ \label{Senza1RSBa}
&\bar{q}_1 = \mathbb{E}_1 \left[ \frac{\mathbb{E}_2 \cosh^\theta g(\bm J,\bar{m})\tanh g(\bm J,\bar{m})}{\mathbb{E}_2 \cosh^\theta g(\bm J,\bar{m})}\right]^2 \\ \label{Senza1RSBb}
&\bar{q}_2 = \mathbb{E}_1 \left[ \frac{\mathbb{E}_2 \cosh^\theta g(\bm J,\bar{m})\tanh^2 g(\bm J,\bar{m})}{\mathbb{E}_2 \cosh^\theta g(\bm J,\bar{m})}\right]
\end{align}
where 
\begin{equation}
\begin{array}{lll}
     g(\bm J,\bar{m})&&  \dfrac{\beta^{'}P}{2}\bar{m}^{P-1} + \sqrt{\dfrac{\beta^{'}\gamma P\bar{p}_1 \bar{q}_1^{P/2-1}}{2}} J^{(1)} +\sqrt{\dfrac{\beta^{'}\gamma P\Big(\bar{p}_2 \bar{q}_2^{P/2-1}-\bar{p}_1 \bar{q}_1^{P/2-1}\Big)}{2}} J^{(2)}. 
\end{array}
\end{equation}
\end{corollary}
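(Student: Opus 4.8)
The plan is to impose stationarity of the $1$-RSB quenched pressure \eqref{A_1RSB_finalissima_trans} with respect to each of $\bar m,\bar q_1,\bar q_2,\bar p_1,\bar p_2$, exactly in the spirit of the RS extremization that produced \eqref{eq:self_GuerraRS}. The only non-polynomial object in \eqref{A_1RSB_finalissima_trans} is the two-floor Gaussian average $\tfrac{1}{\theta}\,\mathbb E_1\ln\mathbb E_2\cosh^\theta g(\bm J,\bar m)$, so the whole computation reduces to differentiating it through the dependence of $g$ on the parameters and then simplifying via Gaussian integration by parts (Stein's lemma \eqref{eqn:gaussianrelation2}, already used for the $t$-derivative) together with the heat-kernel identity $\partial_{(\mathrm{variance}/2)}=\partial^2_{(\mathrm{field})}$. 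I would first fix the abbreviations $A:=\tfrac{\beta' P}{2}\bar m^{P-1}$, $B_1^2:=\tfrac{\beta'\gamma P}{2}\bar p_1\bar q_1^{P/2-1}$ and $B_2^2:=\tfrac{\beta'\gamma P}{2}\big(\bar p_2\bar q_2^{P/2-1}-\bar p_1\bar q_1^{P/2-1}\big)$, so that $g=A+B_1 J^{(1)}+B_2 J^{(2)}$ with $J^{(1)}$ carried by $\mathbb E_1$ (outside the logarithm) and $J^{(2)}$ by $\mathbb E_2$ (inside it), and introduce the inner Gibbs-like average $\langle X\rangle_2:=\mathbb E_2[\cosh^\theta g\,X]/\mathbb E_2\cosh^\theta g$.

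First I would treat $\bar m$, which enters \eqref{A_1RSB_finalissima_trans} only through $A$ and through the explicit monomial $\tfrac{\beta'}{2}\bar m^P(1-P)$: differentiating the logarithmic term brings down one factor $\tanh g$ and produces a contribution proportional to $\bar m^{P-2}\,\mathbb E_1\langle\tanh g\rangle_2$ with exactly the prefactor $\tfrac{\beta' P(P-1)}{2}$ needed to cancel against the derivative of the explicit monomial, so stationarity forces $\bar m=\mathbb E_1\big[\mathbb E_2\cosh^\theta g\,\tanh g/\mathbb E_2\cosh^\theta g\big]$. Next, for the two overlaps I would record the two heat-kernel identities that distinguish the floors: differentiating with respect to $B_2^2/2$ (the variance of the \emph{inner} Gaussian) gives $\partial_{(B_2^2/2)}\big[\tfrac{1}{\theta}\mathbb E_1\ln\mathbb E_2\cosh^\theta g\big]=1+(\theta-1)\,\mathbb E_1\langle\tanh^2 g\rangle_2$, whereas differentiating with respect to $B_1^2/2$ (the variance of the \emph{outer} Gaussian) gives the same expression minus $\theta\,\mathbb E_1\langle\tanh g\rangle_2^2$ --- the extra term being precisely what separates the ``$\langle\cdot\rangle_1$-type'' from the ``$\langle\cdot\rangle_2$-type'' contractions of \eqref{eqn:q121_a}--\eqref{eqn:q122_a}. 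Since $\bar q_2$ enters \eqref{A_1RSB_finalissima_trans} only through $B_2$ (and its explicit monomials) while $\bar q_1$ enters through both $B_1$ and $B_2$, the chain rule with these identities, the insertion of $\bar p_a=\beta'\bar q_a^{P/2}$, and the cancellation of the polynomial remainders should collapse the two conditions to $\bar q_2=\mathbb E_1\langle\tanh^2 g\rangle_2$ and $\bar q_1=\mathbb E_1\langle\tanh g\rangle_2^2$, namely \eqref{Senza1RSBb} and \eqref{Senza1RSBa}. Finally, the stationarity conditions tied to the $\bm\tau$ (Gaussian) sector --- equivalently the choice of $\dot z$ already fixed in Proposition \ref{prop:9}, which decouples the $\tau$-self-interaction at $t=1$ --- close the system with $\bar p_1=\beta'\bar q_1^{P/2}$ and $\bar p_2=\beta'\bar q_2^{P/2}$, exactly the hard-spin-glass relation appearing in the third line of \eqref{eq:self_GuerraRS}.

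The main obstacle I expect is bookkeeping rather than conceptual: one must track rigorously which Gaussian lives outside the logarithm and which inside, because the two generate the two distinct families of overlap averages $\langle\cdot\rangle_1,\langle\cdot\rangle_2$ and their Wick contractions carry different combinatorial weights of $\tanh g$ and $\tanh^2 g$ --- the same subtlety already isolated in Lemma \ref{lemma:4} and Proposition \ref{prop:9}. The secondary technical point is to verify that, once $\bar p_a=\beta'\bar q_a^{P/2}$ is substituted, every purely polynomial-in-$(\bar q_1,\bar q_2,\theta)$ term coming from the explicit terms of \eqref{A_1RSB_finalissima_trans} cancels, leaving only the $\tanh$-averages; this mirrors the cancellation seen in the RS extremization (where the polynomial tail of \eqref{eq:pressure_GuerraRS} disappears against the derivative of the $\ln\cosh$ term) and is what makes the equations take the clean fixed-point form claimed.
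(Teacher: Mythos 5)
Your overall strategy is the one the paper itself (tacitly) follows --- the text offers no proof of this corollary beyond the phrase ``by extremizing the quenched statistical pressure'' --- and the two heat-kernel identities you isolate are exactly right: with $g=A+B_1J^{(1)}+B_2J^{(2)}$ one has $\partial_{(B_2^2/2)}\bigl[\tfrac{1}{\theta}\mathbb{E}_1\ln\mathbb{E}_2\cosh^\theta g\bigr]=1+(\theta-1)\,\mathbb{E}_1\langle\tanh^2 g\rangle_2$ and $\partial_{(B_1^2/2)}\bigl[\tfrac{1}{\theta}\mathbb{E}_1\ln\mathbb{E}_2\cosh^\theta g\bigr]=1+(\theta-1)\,\mathbb{E}_1\langle\tanh^2 g\rangle_2-\theta\,\mathbb{E}_1\langle\tanh g\rangle_2^2$; your $\bar m$-computation is also correct as stated.

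The one point you must be careful about is the order of operations, because your opening sentence (``impose stationarity with respect to each of $\bar m,\bar q_1,\bar q_2,\bar p_1,\bar p_2$'') would fail if executed literally on \eqref{A_1RSB_finalissima_trans}. Unlike the RS functional \eqref{eq:pressure_GuerraRS}, the printed 1-RSB expression has already been simplified on shell in the $\bar p$-directions: the would-be pair of terms $\tfrac{\beta'\gamma P}{4}\bar p_2\bar q_2^{P/2}$ and $-\tfrac{\gamma\beta'^{2}}{4}\bar q_2^{P}$ has been merged into the single term $\tfrac{\gamma\beta'(P-1)}{4}\bar p_2\bar q_2^{P/2}$ using $\bar p_2=\beta'\bar q_2^{P/2}$. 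Consequently, treating all five parameters as independent, $\partial_{\bar p_2}$ of \eqref{A_1RSB_finalissima_trans} yields $(1-\theta)\bigl[(P-1)\bar q_2-P\,\mathbb{E}_1\langle\tanh^2 g\rangle_2\bigr]=0$, which is inconsistent with \eqref{Senza1RSBb}; the $\bar q$-derivatives taken at independent $\bar p$ are likewise off. The procedure that works --- and that you in fact describe in your second paragraph --- is to take $\bar p_a=\beta'\bar q_a^{P/2}$ as a constraint coming from the Gaussian $\tau$-sector (it is encoded in the interpolation constants \eqref{value_psi}--\eqref{value_C}, not recoverable from the final formula), substitute it into \eqref{A_1RSB_finalissima_trans}, and only then extremize over $\bar m,\bar q_1,\bar q_2$. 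Doing so one finds $\partial_{\bar q_2}\mathcal A\propto(1-\theta)\bigl(\bar q_2-\mathbb{E}_1\langle\tanh^2 g\rangle_2\bigr)$ and $\partial_{\bar q_1}\mathcal A\propto\theta\bigl(\bar q_1-\mathbb{E}_1\langle\tanh g\rangle_2^2\bigr)$, which are precisely \eqref{Senza1RSBb} and \eqref{Senza1RSBa}. So the proposal is sound provided you commit to that ordering and drop the claim that the $\bar p$-equations themselves arise as stationarity conditions of the printed functional.
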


Now we turn to the other mathematical technique, namely in the next Section we obtain the above formulas for the quenched statistical pressure (both at the RS and 1-RSB level of approximation) via an adaptation of the Guerra's interpolation technique. Once these mathematical techniques will be exposed, we turn to understanding the information processing capabilities of these dense networks in the second part of the paper.

\section{Second approach: Guerra's interpolation technique}\label{Sezione3}
As stated, in this section we re-obtain the results achieved by the transport equation technique, this time through a suitable generalization of Guerra's interpolation technique, either in RS and in 1-RSB assumptions. 

\subsection{RS approximation}
The definition of RS assumption for the order parameters is the same as Definition \ref{defn: RSassumption}.
\begin{definition} 
Given the interpolating parameter $t \in [0,1]$, $A,\ B, \ C, \ \psi \in \mathbb{R}$  and $J_i$, $\tilde J_\mu \sim \mathcal{N}(0,1)$ for $i=1, \hdots , N$ and $\mu = 1, \hdots , K$  standard Gaussian variables i.i.d., the partition function is given as 
\begin{equation}
\footnotesize
\begin{array}{lll}
     \mathcal{Z}^{(P)}_N(t) &\coloneqq& \sommaSigma{\boldsymbol \sigma} \displaystyle\int \mathcal{D} \bm \tau\exp{}\Bigg[t\dfrac{\beta\,'\,N}{2} m^P(\boldsymbol \sigma)+(1-t)N\psi\,m(\boldsymbol \sigma)+
     \\\\
     & &+\sqrt{t}\sqrt{\dfrac{\beta\,' }{N^{P-1}}}\SOMMA{\mu>1}{K}\,\left(\SOMMA{i_1,\cdots,i_{_{P/2}}=1}{N,\cdots,N}\xi^{\mu}_{i_1\cdots,i_{_{P/2}}}\sigma_{i_1}\cdots\sigma_{_{P/2}}\right)\tau_{\mu}+
     \\\\
     & &+\sqrt{1-t}\left(A\SOMMA{\mu>1}{K} \tilde{J}_{\mu}\tau_{\mu}+B\SOMMA{i=1}{N} J_i\sigma_i\right)+\dfrac{1-t}{2}C\SOMMA{\mu>1}{K}\,\tau^2_{\mu}
-\dfrac{\beta '\gamma}{2}N^{a-P/2}\Bigg]\,,
     \label{def:partfunct_GuerraRS}
\end{array}
\end{equation}
where, for any $\mu=2,...,K$, $\tau_{\mu} \sim \mathcal N [0, 1]$ and $\mathcal{D} \bm \tau \coloneqq \prod\limits_{\mu=1}^K \frac{e^{- \tau_{\mu}^2/2}}{\sqrt{2\pi} }$ 
is the related measure and we set $\beta'=2 \beta/P!$.
\end{definition}

Similar to RS transport equation method, we can define the interpolating pressure, the Boltzmann factor and the generalized measure. 

\begin{lemma} 
\label{lemma:tderRS}
The $t$ derivative of interpolating pressure is given by 
\begin{equation}
    \begin{array}{lll}
         \dfrac{d \mathcal{A}^{(P)}(t)}{d t}\coloneqq &  \dfrac{\beta '}{2}\l m_1^P \r - \psi \l m_1 \r-\dfrac{1}{2}B^{\2}+\l\ppp\r\dfrac{K}{2 N}\left(\dfrac{\beta^{'}}{ N^{P/2}}-A^{\2}-C\right)+
         \\\\
         \textcolor{white}{\coloneqq}& -\dfrac{\beta '}{2N}\dfrac{K}{ N^{^{P/2-1}}}\Big[\l\pp\qq^{P/2}\r-\dfrac{ N^{P/2-1}}{\beta '}A^{\2}\l\pp\r-\dfrac{N^{P/2}}{ \beta ' K}B^{\2}\l\qq\r \Big]\,.
    \end{array}
    \label{eq:streaming_RS_Guerra}
\end{equation}
\end{lemma}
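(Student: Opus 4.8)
The plan is to compute $\dfrac{d}{dt}\mathcal{A}^{(P)}(t)=\dfrac{1}{N}\E\big[\omega\big(\partial_t\mathcal H(\boldsymbol\sigma,\boldsymbol\tau;t)\big)\big]$ directly from the interpolating partition function \eqref{def:partfunct_GuerraRS}, proceeding exactly as in the transport-scheme computation of $\partial_t\mathcal A^{(P)}_N$ recorded in \eqref{hop_expvalsa}, the only new feature being the extra $\sqrt{1-t}$-carried one-body terms. Since $A,B,C,\psi$ are $t$-independent constants, the total derivative $d/dt$ coincides here with $\partial/\partial t$. The exponent depends on $t$ through six summands: the retrieval term $t\tfrac{\b N}{2}m^P$, the cavity field $(1-t)N\psi\,m$, the $\sqrt t$-weighted Hebbian coupling between the $\xi^\mu_{\boldsymbol i}$'s and the $\tau_\mu$'s, the two $\sqrt{1-t}$-weighted one-body terms $A\sum_{\mu>1}\tilde J_\mu\tau_\mu$ and $B\sum_i J_i\sigma_i$, and the self-term $\tfrac{1-t}{2}C\sum_{\mu>1}\tau_\mu^2$; the diagonal constant $-\tfrac{\b\gamma}{2}N^{a-P/2}$ is $t$-independent and drops. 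Differentiating, the first two summands give immediately $\tfrac{\b}{2}\l m_1^P\r-\psi\l m_1\r$, the $C$-term gives the plain $-\tfrac C2\sum_{\mu>1}\omega(\tau_\mu^2)$, while the three Gaussian-coupled terms produce the singular prefactors $\tfrac1{2\sqrt t}$ (on the $\xi$-term) and $-\tfrac1{2\sqrt{1-t}}$ (on the $A$- and $B$-terms).

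The key step is to remove these singularities via Gaussian integration by parts (Stein/Wick, eq.~\eqref{eqn:gaussianrelation2}) applied to the Gaussian fields $\xi^\mu_{\boldsymbol i}$ ($\mu>1$), $\tilde J_\mu$ and $J_i$. Each such variable enters the Boltzmann exponent with precisely the prefactor multiplying it in the partition function — $\sqrt t\sqrt{\b/N^{P-1}}$ for $\xi^\mu_{\boldsymbol i}$, $\sqrt{1-t}\,A$ for $\tilde J_\mu$, $\sqrt{1-t}\,B$ for $J_i$ — so integrating by parts regenerates that same prefactor, exactly cancelling the $\tfrac1{2\sqrt t}$ or $\tfrac1{2\sqrt{1-t}}$ and leaving the connected Wick structure $\omega(X^2)-\omega(X)^2$. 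Using $\sigma_i^2=1$ for Ising spins one has $\omega\big((\sigma_{i_1}\cdots\sigma_{i_{P/2}}\tau_\mu)^2\big)=\omega(\tau_\mu^2)$ and $\omega(\sigma_i^2)=1$, so the three terms collapse to Boltzmann averages of $\tau_\mu^2$, $\tau_\mu$, $\sigma_i$ and of the Hebbian monomial $\sigma_{i_1}\cdots\sigma_{i_{P/2}}\tau_\mu$.

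It then remains to repackage everything into the order parameters. Doubling the replicas through $\omega(X)^2=\omega^{\otimes2}(X^{(1)}X^{(2)})$ and recognising $\sum_i\sigma_i^{(1)}\sigma_i^{(2)}=N\qq$, $\sum_{\mu>1}\tau_\mu^{(1)}\tau_\mu^{(2)}\simeq K\pp$, $\sum_{\mu>1}\tau_\mu^2\simeq K\ppp$, with the $\boldsymbol i$-sum contributing a factor $N^{P/2}$ and the $\mu$-sum a factor $\simeq K$, one obtains: from the $\xi$-term the contribution $\tfrac{\b K}{2N^{P/2}}\big(\l\ppp\r-\l\pp\qq^{P/2}\r\big)$ (the same expression appearing in \eqref{hop_expvalsa}); from the $B$-term $-\tfrac{B^2}{2}\big(1-\l\qq\r\big)$; from the $A$-term $-\tfrac{A^2K}{2N}\big(\l\ppp\r-\l\pp\r\big)$; and from the $C$-term $-\tfrac{CK}{2N}\l\ppp\r$. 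Adding these to $\tfrac{\b}{2}\l m_1^P\r-\psi\l m_1\r$ and collecting the coefficients of $\l\ppp\r$, $\l\pp\r$, $\l\qq\r$ and $\l\pp\qq^{P/2}\r$ reproduces exactly \eqref{eq:streaming_RS_Guerra}.

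The computation involves no conceptual difficulty once one commits to the Stein-lemma route; the two places demanding care are (i) checking that the $\tfrac1{2\sqrt t}$ and $\tfrac1{2\sqrt{1-t}}$ prefactors are genuinely absorbed, which is precisely the mechanism making the interpolation work, and (ii) the bookkeeping of the $N$-powers and multiplicities ($N^{P/2}$ from $\sum_{\boldsymbol i}$ against $N^{1-P}$ from the coupling, $\simeq K$ from $\sum_{\mu>1}$, the overall $1/N$), together with keeping straight which Boltzmann average each order parameter corresponds to after replica doubling — in particular that $\l\ppp\r$ is built from the single-replica object $\omega(\tau_\mu^2)$, which is nontrivial because the $\tau$-Gaussian measure $\mathcal D\boldsymbol\tau$ is tilted by the $C$- and $\xi$-terms, whereas $\l\pp\r$ and $\l\qq\r$ come from the replica-doubled $\omega(\cdot)^2$. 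I would flag (ii) as the main obstacle, in the sense that it is essentially the only place an error could creep in.
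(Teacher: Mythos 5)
Your proof is correct and is exactly the computation the paper has in mind: the authors omit the proof of this lemma, referring back to the Stein's-lemma calculation of $\partial_t\mathcal A^{(P)}_N$ in the transport-equation section, and your Gaussian integration by parts on $\xi^{\mu}_{\boldsymbol i}$, $\tilde J_\mu$, $J_i$ with the $\sqrt{t}$ and $\sqrt{1-t}$ prefactors being absorbed reproduces the stated formula term by term. Note only that your bookkeeping gives the coefficient $\frac{K}{2N}\left(\frac{\beta'}{N^{P/2-1}}-A^2-C\right)$ for $\langle p_{11}\rangle$, which is the correct one (it vanishes under the choices \eqref{eq:constant_GuerraRS}, as the interpolation requires), so the exponent $N^{P/2}$ printed in \eqref{eq:streaming_RS_Guerra} is a typo of the paper rather than an error on your part.
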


Since the computation is similar to that of derivative w.r.t. interpolating parameters of transport equation, we omit it. 

\begin{remark}
We stress that, for the RS assumption, we can use the relations \eqref{potential_m}, \eqref{potential_pq}.
Using these, if we fix the four constants as
\begin{equation}
    \begin{array}{lll}
         \psi=\dfrac{P}{2}\beta '  \m^{^{P-1}}\,,
         \\\\
         A^2=\dfrac{\beta '}{N^{^{P/2-1}}}\q^{^{P/2}}\,,
         \\\\
         B^2=\beta ' \gamma N^{^{a-P/2}}  \dfrac{P}{2}\, \p \q^{^{P/2-1}}\,,
         \\\\
         C= \dfrac{\beta '}{N^{^{P/2-1}}}(1-\q^{^{P/2}})\,,
    \end{array}
    \label{eq:constant_GuerraRS}
\end{equation}
and we remember Definition \ref{defn: RSassumption}, the \eqref{eq:streaming_RS_Guerra} at finite size $N$, is 
\begin{align}
	\frac{d\mathcal{A}^{(P)}(t)}{dt} &= -\dfrac{P-1}{2} \beta ' \m^P -\dfrac{\beta ' \gamma}{4} P \p \q^{P/2-1}(1-\q) + \frac{\beta^{'}}{2}\sum_{k=2}^P \begin{pmatrix}P\\k\end{pmatrix} \langle (m_1-\bar{m})^k \rangle \bar{m}^{P-k}  \notag \\
    &\textcolor{white}{sp} - \frac{\beta^{'}K}{2N^{P/2}}\sum_{k=1}^{P/2} \begin{pmatrix}\frac{P}{2}\\k\end{pmatrix}\bar{q}^{P/2-k} \langle (p_{12}-\bar{p})(q_{12} - \bar{q})^k \rangle,
\end{align}
which is independent of $t$.
\end{remark}

Applying the Fundamental Theorem of Calculus we claim the following 
\begin{proposition}
At finite size and under RS assumption applying the Fundamental Theorem of Calculus and using the suitable values of $A, B, C, \psi$, we find the quenched pressure for the P spin Hopfield model as
\begin{align}
\label{A_RS_finite}
    \mathcal{A}^{(P)} &= \ln{2} -\dfrac{\beta' \gamma}{2}N^{a-P/2} +\left\langle\ln{\cosh{\left[\dfrac{P}{2}\beta '  \m^{P-1}+Y\sqrt{\beta ' \gamma  \dfrac{P}{2}\, \p \q^{^{P/2-1}}} \right]}}\right\rangle_Y \notag \\
    &-\dfrac{\gamma N^{^{a-1}}}{2}\ln{\left(1-\beta 'N^{^{1-P/2}}\left(1-\q^{^{P/2}}\right)\right)}+\dfrac{\gamma N^{a-P/2}}{2}\dfrac{\beta '\q^{^{P/2}}}{1-\beta 'N^{^{1-P/2}}\left(1-\q^{^{P/2}}\right)} \notag \\
    &-\dfrac{P-1}{2} \beta ' \m^P -\dfrac{\beta ' \gamma}{4} P \p \q^{P/2-1}(1-\q) + \frac{\beta^{'}}{2}\sum_{k=2}^P \begin{pmatrix}P\\k\end{pmatrix} \langle (m_1-\bar{m})^k \rangle \bar{m}^{P-k} \notag \\
    & - \frac{\beta^{'}K}{2N^{P/2}}\sum_{k=1}^{P/2} \begin{pmatrix}\frac{P}{2}\\k\end{pmatrix}\bar{q}^{P/2-k} \langle (p_{12}-\bar{p})(q_{12} - \bar{q})^k \rangle 
\end{align}
\normalsize
\end{proposition}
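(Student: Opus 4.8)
The plan is to run Guerra's interpolation to completion via the fundamental theorem of calculus in the variable $t\in[0,1]$:
\[
\mathcal{A}^{(P)}_N \;=\; \mathcal{A}^{(P)}_N(t=1)\;=\;\mathcal{A}^{(P)}_N(t=0)\;+\;\int_0^1 \frac{d\mathcal{A}^{(P)}_N(t)}{dt}\,dt ,
\]
so the whole task reduces to (i) evaluating the Cauchy datum $\mathcal{A}^{(P)}_N(0)$ explicitly, and (ii) integrating the $t$-streaming that Lemma~\ref{lemma:tderRS} already hands us.

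For step (ii) I would simply invoke Lemma~\ref{lemma:tderRS} together with the Remark that follows it: once $\psi,A,B,C$ are frozen at the values \eqref{eq:constant_GuerraRS}, every $t$-dependent Boltzmann average appearing in \eqref{eq:streaming_RS_Guerra} is matched by its partner with opposite sign, so the streaming becomes genuinely $t$-independent and collapses (using \eqref{potential_m}–\eqref{potential_pq} to expand $\langle m_1^P\rangle$ and $\langle p_{12}q_{12}^{P/2}\rangle$ around their RS means) to
\[
-\tfrac{P-1}{2}\beta'\bar m^P -\tfrac{\beta'\gamma}{4}P\,\bar p\,\bar q^{P/2-1}(1-\bar q) + \tfrac{\beta'}{2}\sum_{k=2}^{P}\binom{P}{k}\langle(m_1-\bar m)^k\rangle\bar m^{P-k} - \tfrac{\beta'K}{2N^{P/2}}\sum_{k=1}^{P/2}\binom{P/2}{k}\bar q^{P/2-k}\langle(p_{12}-\bar p)(q_{12}-\bar q)^k\rangle .
\]
Since the integrand is constant in $t$, its integral over $[0,1]$ is just this expression — precisely the last two lines of \eqref{A_RS_finite}.

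Step (i) is where the real work sits. At $t=0$ the interpolating partition function \eqref{def:partfunct_GuerraRS} factorizes into mutually independent one-body pieces: the neuron sector gives $\sum_{\boldsymbol\sigma}\exp[\psi\sum_i\xi_i^1\sigma_i+B\sum_iJ_i\sigma_i]=2^N\prod_i\cosh(\psi\xi_i^1+BJ_i)$; the pattern sector gives $\prod_{\mu>1}\int\frac{d\tau_\mu}{\sqrt{2\pi}}\exp[-\tfrac{1-C}{2}\tau_\mu^2+A\tilde J_\mu\tau_\mu]$, which converges iff $C<1$ (i.e. $\beta'N^{1-P/2}(1-\bar q^{P/2})<1$ after inserting \eqref{eq:constant_GuerraRS}) and equals $\prod_{\mu>1}(1-C)^{-1/2}\exp[\tfrac{A^2\tilde J_\mu^2}{2(1-C)}]$; and the deterministic diagonal term $-\tfrac{\beta'\gamma}{2}N^{a-P/2}$ comes along for free. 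Taking $\tfrac1N\mathbb{E}\ln(\cdot)$, using $\mathbb{E}\tilde J_\mu^2=1$, the $\xi^1$–$J$ symmetry to rewrite $\mathbb{E}\ln\cosh(\psi\xi_i^1+BJ_i)=\langle\ln\cosh(\psi+BY)\rangle_Y$, and $K-1\sim\gamma N^a$, I get
\[
\mathcal{A}^{(P)}_N(0)=\ln2+\big\langle\ln\cosh(\psi+BY)\big\rangle_Y-\tfrac{\gamma N^{a-1}}{2}\ln(1-C)+\tfrac{\gamma N^{a-1}}{2}\cdot\tfrac{A^2}{2(1-C)}-\tfrac{\beta'\gamma}{2}N^{a-P/2},
\]
and substituting $\psi,A,B,C$ from \eqref{eq:constant_GuerraRS} turns these five summands into, respectively, the $\ln2$ term, the $\langle\ln\cosh[\cdots]\rangle_Y$ term, the $-\tfrac{\gamma N^{a-1}}{2}\ln(1-\beta'N^{1-P/2}(1-\bar q^{P/2}))$ term, the $+\tfrac{\gamma N^{a-P/2}}{2}\tfrac{\beta'\bar q^{P/2}}{1-\beta'N^{1-P/2}(1-\bar q^{P/2})}$ term, and the $-\tfrac{\beta'\gamma}{2}N^{a-P/2}$ term of \eqref{A_RS_finite}.

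Adding the outputs of the two steps reproduces \eqref{A_RS_finite} line by line, which is the claim. I expect the main obstacle to be the clean bookkeeping in step (i): tracking the powers of $N$ generated by $K-1=\gamma N^a-1$ and by $A^2,B^2\propto N^{a-P/2}$, $C\propto N^{1-P/2}$; performing the Gaussian integrations only inside the convergence region $C<1$; and checking that the $\langle\ln\cosh\rangle_Y$ argument comes out as $\tfrac{P}{2}\beta'\bar m^{P-1}+Y\sqrt{\beta'\gamma\tfrac P2\,\bar p\,\bar q^{P/2-1}}$ under the conventional rescaling of $\bar p$. Step (ii), by contrast, is purely mechanical once Lemma~\ref{lemma:tderRS} is in hand.
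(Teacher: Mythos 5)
Your proposal is correct and follows essentially the same route as the paper: the fundamental theorem of calculus in $t$, the $t$-independence of the streaming once $\psi,A,B,C$ are fixed as in \eqref{eq:constant_GuerraRS}, and the explicit factorized one-body evaluation of the Cauchy datum at $t=0$ (which is exactly the computation the paper carries out in its appendix for the transport-PDE version). The only blemish is a factor-of-two slip in your intermediate expression for $\mathcal{A}^{(P)}_N(0)$: the Gaussian-sector contribution should be $\tfrac{K}{N}\cdot\tfrac{A^2}{2(1-C)}=\gamma N^{a-1}\tfrac{A^2}{2(1-C)}$ rather than $\tfrac{\gamma N^{a-1}}{2}\cdot\tfrac{A^2}{2(1-C)}$, which is what is needed for the substitution of $A^2=\beta' N^{1-P/2}\bar q^{P/2}$ to reproduce the stated $\tfrac{\gamma N^{a-P/2}}{2}\tfrac{\beta'\bar q^{P/2}}{1-\beta'N^{1-P/2}(1-\bar q^{P/2})}$ term.
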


\begin{theorem}
The derivative w.r.t. $t$ in the thermodynamical limit is 
\begin{align}
\label{dt_thermo_RS}
\frac{d\mathcal{A}^{(P)}(t)}{dt} &= -\dfrac{P-1}{2} \beta ' \m^P -\dfrac{\beta ' \gamma}{4} P \p \q^{P/2-1}(1-\q). 
\end{align}

Thus, in the thermodynamic limit and under the assumption of replica symmetry, we reach the same results we computed via transport equation's interpolation (see equation \ref{eq:pressure_GuerraRS}), namely the quenched statistical pressure for $P\geq 4$  of the DHN becomes
\begin{equation}
\begin{array}{lll}
      \mathcal{A}^{(P)} (\gamma, \beta) & \coloneqq & \ln{2}+\left\langle\ln{\cosh{\left[\dfrac{P}{2}\beta '  \m^{P-1}+Y\sqrt{\beta ' \gamma   \dfrac{P}{2}\, \p \q^{^{P/2-1}}} \right]}}\right\rangle_Y -\dfrac{P-1}{2} \beta ' \m^P \\\\
      & &-\beta ' \gamma\dfrac{P}{4}  \, \p \q^{P/2-1}(1-\q)+\dfrac{1}{4} \gamma\beta '^2\left(1 - \q^P\right)\,.
\end{array}
\label{eq:pressure_TransRS}
\end{equation}
\end{theorem}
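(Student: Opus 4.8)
The statement bundles two claims: the thermodynamic-limit $t$-streaming \eqref{dt_thermo_RS}, and the closed form \eqref{eq:pressure_TransRS} for the quenched pressure, which is then to be recognised as the transport-equation result \eqref{eq:pressure_GuerraRS}. The plan is to deduce the first from the finite-$N$ streaming already computed in the Remark following Lemma \ref{lemma:tderRS}, and the second by letting $N\to\infty$ in the finite-$N$ identity \eqref{A_RS_finite}; the Fundamental Theorem of Calculus glues the two, since the streaming turns out to be independent of $t$.

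\emph{Streaming.} Once the constants $A^2,B^2,C,\psi$ are fixed as in \eqref{eq:constant_GuerraRS}, the finite-$N$ derivative $d\mathcal A^{(P)}(t)/dt$ from that Remark consists of the two deterministic contributions $-\tfrac{P-1}{2}\beta'\bar m^{P}$ and $-\tfrac{\beta'\gamma}{4}P\,\bar p\,\bar q^{P/2-1}(1-\bar q)$, plus the fluctuation sums $\tfrac{\beta'}{2}\sum_{k\ge2}\binom{P}{k}\bar m^{P-k}\langle(\Delta m)^k\rangle$ and $-\tfrac{\beta'K}{2N^{P/2}}\sum_{k\ge1}\binom{P/2}{k}\bar q^{P/2-k}\langle\Delta p\,(\Delta q)^k\rangle$. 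Under the replica-symmetry concentration of Definition \ref{defn: RSassumption} — equivalently, the vanishing of the potential already exploited in the transport treatment — every central-moment term disappears as $N\to\infty$, leaving exactly \eqref{dt_thermo_RS}; in particular the limiting streaming is manifestly $t$-independent.

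\emph{Cauchy datum and finite-$N$ pressure.} Formula \eqref{A_RS_finite} is the one proved in the Proposition leading to it: it follows from $\mathcal A^{(P)}(\beta,\gamma)=\mathcal A^{(P)}_N(t{=}0)+\int_0^1\!\tfrac{d}{dt}\mathcal A^{(P)}_N(t)\,dt$ together with the fact that at $t=0$ the interpolating Boltzmann weight factorises: the $\bm\sigma$-sum is a product of single-site terms contributing $\ln2+\langle\ln\cosh[\tfrac P2\beta'\bar m^{P-1}+Y\sqrt{B^2}\,]\rangle_Y$ (using the $\xi^{1}_{i}=\pm1$ symmetry), while the $\bm\tau$-sector is an exact Gaussian integral which, once averaged over the auxiliary fields $\tilde J_\mu$, produces the logarithmic and rational terms displayed in \eqref{A_RS_finite} together with the subtraction $-\tfrac{\beta'\gamma}{2}N^{a-P/2}$.

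\emph{Thermodynamic limit and main obstacle.} It remains to pass to the limit in \eqref{A_RS_finite}. The fluctuation sums vanish as in the streaming step. The real work is in the $\bm\tau$-sector: fixing $a=P-1$ and writing $x_N:=\beta'N^{1-P/2}(1-\bar q^{P/2})\to0$ for $P\ge4$, one expands $\ln(1-x_N)$ and $(1-x_N)^{-1}$ and groups contributions by powers of $N$. The three $O(N^{P/2-1})$ terms — from $-\tfrac{\beta'\gamma}{2}N^{a-P/2}$, from the linear term of $-\tfrac{\gamma N^{a-1}}{2}\ln(1-x_N)$, and from the leading term of $\tfrac{\gamma N^{a-P/2}}{2}\,\beta'\bar q^{P/2}(1-x_N)^{-1}$ — must cancel identically; the surviving $O(1)$ pieces combine as $\tfrac{\gamma\beta'^2}{4}(1-\bar q^{P/2})^2+\tfrac{\gamma\beta'^2}{2}\bar q^{P/2}(1-\bar q^{P/2})=\tfrac{\gamma\beta'^2}{4}(1-\bar q^{P})$, and all higher-order terms are $O(N^{-1})$. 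What survives is precisely \eqref{eq:pressure_TransRS}, which coincides term by term with \eqref{eq:pressure_GuerraRS}; the same bookkeeping shows $a=P-1$ is the only scaling giving a finite, genuinely $\gamma$-dependent limit, i.e. the Baldi–Venkatesh storage. I expect the delicate point to be exactly this cancellation of the divergent $O(N^{P/2-1})$ contributions and the correct reassembly of the finite remainder $\tfrac14\gamma\beta'^2(1-\bar q^{P})$; everything else is a Gaussian integration plus an appeal to replica-symmetric concentration.
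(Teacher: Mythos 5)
Your proposal is correct and follows essentially the same route as the paper: RS concentration kills the fluctuation terms in the $t$-streaming, the Fundamental Theorem of Calculus with the factorized $t=0$ Cauchy datum yields the finite-$N$ expression, and the large-$N$ expansion with $a=P-1$ produces the final formula. In fact you make explicit the cancellation of the $O(N^{P/2-1})$ terms and the recombination $\tfrac{\gamma\beta'^2}{4}(1-\bar q^{P/2})^2+\tfrac{\gamma\beta'^2}{2}\bar q^{P/2}(1-\bar q^{P/2})=\tfrac{\gamma\beta'^2}{4}(1-\bar q^{P})$, which the paper only delegates to the appendix of the transport-equation treatment; both steps check out.
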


\begin{proof}
Thanks to replica symmetry assumption Definition \eqref{defn: RSassumption} we have $\langle \Delta p_{12} \Delta q_{12}^k \rangle \rightarrow 0$ and $\langle \Delta m^k\rangle \rightarrow 0$ for $k \geq 2$, so the derivative w.r.t. $t$ becomes as in \eqref{dt_thermo_RS}.

If we apply the Fundamental Theorem in the thermodynamical limit with \eqref{dt_thermo_RS} we recover 
\begin{equation}
\begin{array}{lll}
      \mathcal{A}^{(P)}(\gamma, \beta) & \coloneqq &\ln{2} -\dfrac{\beta' \gamma}{2}N^{a-P/2}+\left\langle\ln{\cosh{\left[\dfrac{P}{2}\beta '  \m^{P-1}+Y\sqrt{\beta ' \gamma  \dfrac{P}{2}\, \p \q^{^{P/2-1}}} \right]}}\right\rangle_Y+
      \\\\
      & & -\dfrac{P-1}{2} \beta ' \m^P -\dfrac{\beta ' \gamma}{4} P\,\p \q^{P/2-1}(1-\q)-\dfrac{\gamma N^{^{a-1}}}{2}\ln{\left(1-\beta 'N^{^{1-P/2}}\left(1-\q^{^{P/2}}\right)\right)}+
       \\\\
       && +\dfrac{\gamma N^{a-P/2}}{2}\dfrac{\beta '\q^{^{P/2}}}{1-\beta 'N^{^{1-P/2}}\left(1-\q^{^{P/2}}\right)}\,.
\end{array}
\end{equation}
which is the same expression in \eqref{eq:pressure_GuerraRS_noAPP}. The proof proceeds similarly to that of transport equation's interpolation. 
\end{proof}


\subsection{1-RSB approximation}

The ansatz for the concentration of the two-replica overlap  distributions (for both $p$ and $q$) is the same as in the Definition \eqref{def:HM_RSB} and the Mattis magnetization still self-averages around its mean $\bar{m}$, hence we can directly write the next   
\begin{definition}
Given the interpolating parameter $t$ and the i.i.d. auxiliary fields $\lbrace J_i^{(1)}, J_i^{(2)}\rbrace_{i=1,...,N}$, with $J_i^{(1,2)} \sim \mathcal N(0,1)$ for $i=1, ..., N$ and $\lbrace\tilde J_{\mu}^{(1)}, \tilde J_{\mu}^{(2)}\rbrace_{\mu=2,...,P}$, with $J_{\mu}^{(1,2)} \sim \mathcal N(0,1)$ for $\mu=2,...,P$, we can write the 1-RSB interpolating partition function $\mathcal Z_N^{(P)}(t)$ for the P spin Hopfield model (\ref{eq:hop_hbare}) recursively, starting by
\begin{equation}
\begin{array}{lll}
     \mathcal{Z}^{(P)}_2(t) &:=& \sommaSigma{\boldsymbol \sigma} \displaystyle\int \mathcal{D} \bm \tau\exp{}\Bigg[t\dfrac{\beta\,'\,N}{2} m^P(\boldsymbol \sigma)+(1-t)N\psi\,m(\boldsymbol \sigma)+
     \\\\
     & &+\sqrt{t}\sqrt{\dfrac{\beta\,' }{N^{P-1}}}\SOMMA{\mu>1}{K}\,\left(\SOMMA{i_1,\cdots,i_{_{P/2}}=1}{N,\cdots,N}\xi^{\mu}_{i_1\cdots,i_{_{P/2}}}\sigma_{i_1}\cdots\sigma_{_{P/2}}\right)\tau_{\mu}+
     \\\\
     & &+\sqrt{1-t}\SOMMA{a=1}{2}\left(A^{(a)}\SOMMA{\mu>1}{K} \tilde{J\,}^{(a)}_{\mu}\tau_{\mu}+B^{(a)}\SOMMA{i=1}{N} J_i^{(a)}\sigma_i\right)
     +\dfrac{1-t}{2}C\SOMMA{\mu>1}{K}\,\tau^2_{\mu}-\dfrac{\beta '\gamma}{2}N^{a-P/2}\Bigg]\,,
     \label{def:partfunct_GuerraRSB}
\end{array}
\end{equation}
where  the $\xi^{\mu}_{i_1\cdots,i_{_{P/2}}}$'s are i.i.d. standard Gaussians. The values of the real-valued constants $A_1, A_2, B_1$, $B_2, C$ will be set a fortiori (see the remark \ref{marco13}).
\newline 
Averaging out the fields recursively, we define
\begin{align}
\label{eqn:Z1_guerra}
\mathcal Z_1^{(P)}(t) \coloneqq& \mathbb E_2 \left [ \mathcal Z_2^{(P)}(t)^\theta \right ]^{1/\theta} \\
\label{eqn:Z0_guerra}
\mathcal Z_0^{(P)}(t) \coloneqq&  \exp \mathbb E_1 \left[ \ln \mathcal Z_1^{(P)}(t) \right ] \\
\mathcal Z_N^{(P)}(t) \coloneqq & \mathcal Z_0^{(P)}(t) ,
\end{align}
where with $\mathbb E_a$ we mean the average over the variables $J_i^{(a)}$'s and $\tilde J_\mu^{(a)}$'s, for $a=1, 2$, and with $\mathbb{E}_0$ we shall denote the average over the variables $\xi^{\mu}_{i_1\cdots,i_{_{P/2}}}$'s.
\end{definition}

The definition of 1-RSB interpolating pressure at finite volume $N$ and in the thermodynamic limit is the same as transport equation technique, see Definition \eqref{def:interpPressRSB} as well as the relative notation for the generalized averages. 
\par \medskip
Now the next step is computing the $t$-derivative of the interpolating pressure. In this way we can apply the fundamental theorem of calculus and find the solution of the original model, as standard in this type of approach \cite{GuerraSum}. 

\begin{lemma}
\label{lemma:tderRSB}
The derivative w.r.t. $t$ of interpolating pressure can be written as 
\begin{align}
    &d_t \mathcal{A}^{(P)}_N = \frac{\beta^{'}}{2} \langle m_1^P \rangle + \frac{\beta^{'} K }{2N^{P/2}} \left[ \langle p_{11} \rangle + (\theta -1) \langle p_{12} q_{12}^{P/2} \rangle_2 - \theta \langle p_{12} q_{12}^{P/2} \rangle_1 \right] \notag \\
    &- \left\{ C \frac{K}{2N} \langle p_{11} \rangle + \psi \langle m_1 \rangle  + \frac{K A_1}{2N}\left[ \langle p_{11} \rangle  -(1-\theta)\langle p_{12} \rangle_2-\theta\langle p_{12} \rangle_1 \right] + \frac{K A_2}{2N}\left[ \langle p_{11} \rangle  -(1-\theta)\langle p_{12} \rangle_2 \right] \right.\notag \\
    &\left.+ \frac{B_1}{2}\left[ 1-(1-\theta)\langle q_{12} \rangle_2-\theta\langle q_{12} \rangle_1\right] + \frac{B_2}{2} \left[ 1-(1-\theta)\langle q_{12} \rangle_2\right]\right\}
\end{align}
\end{lemma}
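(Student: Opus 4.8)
The plan is to follow exactly the same strategy already used in Lemma \ref{lemma:4} for the transport-equation version of the 1-RSB scheme, namely to differentiate the chain of recursively averaged partition functions $\mathcal Z_2^{(P)} \to \mathcal Z_1^{(P)} \to \mathcal Z_0^{(P)}$ in $t$, reduce the $t$-derivative to a Boltzmann average of $\partial_t \mathcal B$ weighted by $\mathcal W_2$, and then dispose of the Gaussian fields $\xi^\mu_{\boldsymbol i}$, $J_i^{(a)}$, $\tilde J_\mu^{(a)}$ via Stein's lemma \eqref{eqn:gaussianrelation2}. Concretely, I would first reuse the identity \eqref{eqn:partialrA1} (which is stated in full generality and therefore applies verbatim here) to write $d_t \mathcal A_N^{(P)} = \frac1N \mathbb E_0 \mathbb E_1 \mathbb E_2\big[\mathcal W_2\, \omega(\partial_t \log \mathcal B)\big]$, where $\partial_t \log \mathcal B$ collects the explicit $t$-dependence of the exponent in \eqref{def:partfunct_GuerraRSB}: the term $\tfrac{\beta' N}{2} m^P$, the term $-N\psi\, m$ coming from $(1-t)$, the ``disordered'' Hebbian term with prefactor $\tfrac{1}{2\sqrt t}\sqrt{\beta'/N^{P-1}}$, and the three $\sqrt{1-t}$ and $(1-t)$ interpolating terms carrying $A^{(a)}, B^{(a)}, C$.

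Next I would handle each group of terms. The $\tfrac{\beta'N}{2}m^P$ term contributes directly $\tfrac{\beta'}{2}\langle m_1^P\rangle$. The $-N\psi m$ term contributes $-\psi\langle m_1\rangle$. For the Hebbian term one applies Stein's lemma to the standard Gaussian $\xi^\mu_{\boldsymbol i}$: the derivative hits either the same factor (giving $\omega((\sigma_{i_1}\cdots\sigma_{i_{P/2}}\tau_\mu)^2)$, which produces $\langle p_{11}\rangle$ after summation and the $1/N^{P/2}$ normalization) or, through the RSB weight structure $\mathcal W_2$ and the two layers $\mathbb E_1,\mathbb E_2$, a squared Boltzmann factor that generates the two ``replica-coupled'' contributions $\langle p_{12}q_{12}^{P/2}\rangle_1$ and $\langle p_{12}q_{12}^{P/2}\rangle_2$ with the characteristic $\theta$ / $(1-\theta)$ weights — this is precisely the mechanism already made explicit in Lemma \ref{lemma:4}, eq. \eqref{eqn:partialtA}, so I can lift the bookkeeping from there. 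The $\sqrt{1-t}\,A^{(a)}\tilde J_\mu^{(a)}\tau_\mu$ terms, by the same Stein argument on $\tilde J_\mu^{(a)}$, produce $\tfrac{K A_a}{2N}$ times the appropriate combination of $\langle p_{11}\rangle$ and $\langle p_{12}\rangle_{1,2}$ — with the crucial feature that for $a=2$ only the single-layer average $\langle p_{12}\rangle_2$ appears (since $\tilde J^{(2)}$ is averaged only at the inner level $\mathbb E_2$), exactly mirroring \eqref{eqn:partialx1A}--\eqref{eqn:partialx2A}. Likewise the $\sqrt{1-t}\,B^{(a)}J_i^{(a)}\sigma_i$ terms give $\tfrac{B_a}{2}$ times the $\langle q_{12}\rangle$-combinations as in \eqref{eqn:partialy1A}--\eqref{eqn:partialy2A}, and the $(1-t)\tfrac{C}{2}\tau_\mu^2$ term gives $-C\tfrac{K}{2N}\langle p_{11}\rangle$ directly (no Stein needed, just $\partial_t$ of the explicit $(1-t)$). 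Collecting all pieces and grouping the explicitly-$t$-differentiated interpolating terms under the negative sign yields the displayed formula.

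The main obstacle — and the reason the authors relegate the analogous computation to Appendix \ref{app2} — is purely organizational rather than conceptual: one must track carefully, for each Gaussian family, at which level of the nested expectation $\mathbb E_0 \mathbb E_1 \mathbb E_2$ the field is integrated out, because this is what decides whether a Stein differentiation produces a ``quenched-squared'' overlap like $\langle p_{12}\rangle_1$ (field averaged at the outer level, hence the inner Boltzmann factor gets squared) or a single-replica overlap like $\langle p_{12}\rangle_2$, and whether a given contribution is weighted by $\theta$, $1-\theta$, or nothing. Getting every $\theta$, every $K/N^{P/2}$ versus $K/N$ normalization, and every sign right is the only real work; the Gaussian-calculus input is entirely standard and identical to the RS case and to Lemma \ref{lemma:4}. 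I would therefore organize the proof as: (i) state \eqref{eqn:partialrA1}; (ii) list $\partial_t \mathcal B$ term by term; (iii) apply Stein to each Gaussian family, citing the $\mathbb E_a$-level bookkeeping already established; (iv) assemble.
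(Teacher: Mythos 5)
Your proposal is correct and follows exactly the route the paper intends: the paper omits this proof precisely because it is the same computation as Lemma \ref{lemma:4} and Appendix \ref{app2} (identity \eqref{eqn:partialrA1}, term-by-term differentiation of the exponent of \eqref{def:partfunct_GuerraRSB}, Stein's lemma on each Gaussian family with the $\mathbb E_1/\mathbb E_2$-level bookkeeping producing the $\theta$, $1-\theta$ weights and the $\langle\cdot\rangle_1$ versus $\langle\cdot\rangle_2$ averages). Your identification of each contribution — including that the $(1-t)C\tau^2/2$ term needs no Stein step and that the $a=2$ fields only generate inner-level overlaps — matches the stated formula, with the understanding that the $A_a$, $B_a$ appearing in the lemma are the squares of the interpolation constants $A^{(a)}$, $B^{(a)}$ (consistent with Remark \ref{marco13}).
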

\normalsize
Since the proof is rather lengthy but similar to that of the $t$-streaming of the transport equation approach, we omit it for the sake of simplicity. 

\begin{remark}\label{marco13}
Following the 1-RSB ansatz and the combinatorial identities provided in Definition \ref{def:HM_RSB}, we have the expressions \eqref{potential_m_1RSB} and \eqref{potential_pq_1RSB}, thus, if we fix the costants  in the recursive partition function \ref{def:partfunct_GuerraRSB} as
\begin{align}
\label{value_psi}
    \psi&= \frac{\beta^{'}}{2}P \bar{m}^{P-1} \\
    A_1^2 &= \frac{\beta^{'}}{N^{P/2-1}}\bar{q}_1^{P/2} \\
    A_2^2 &= \frac{\beta^{'}}{N^{P/2-1}}(\bar{q}_2^{P/2}-\bar{q}_1^{P/2})\\
    B_1^2 &= \frac{\beta^{'}K}{N^{P/2}}P\bar{p}_1\bar{q}_1^{P/2-1}\\
    B_2^2 &= \frac{\beta^{'}K}{N^{P/2}}P(\bar{p}_2\bar{q}_2^{P/2-1}-\bar{p}_1\bar{q}_1^{P/2-1}) \\
    C &= \frac{\beta^{'}}{N^{P/2-1}}(1-\bar{q}_2^{P/2}) \label{value_C}.
\end{align}
we compute the derivative w.r.t. $t$ at finite size as 
\footnotesize
\begin{align}
&d_t \mathcal{A}_N^{(P)}=\left\{ \frac{\beta^{'}}{2}\left[\sum_{k=2}^P \begin{pmatrix}P\\k\end{pmatrix} \langle (m_1-\bar{m})^k \rangle \bar{m}^{P-k} + \bar{m}^P (1-P) \right] -{\beta^{'} \gamma}(\theta-1) \dfrac{P}{2}\q_2^{P/2}\p_2  \right. \notag \\
&+\frac{\beta^{'} K}{2N^{P/2}}(\theta-1) \left[ \sum_{k=1}^{P/2} \begin{pmatrix}\frac{P}{2}\\k\end{pmatrix}\bar{q}_2^{P/2-k} \langle (p_{12}-\bar{p}_2)(q_{12} - \bar{q}_2)^k \rangle_2 + \sum_{k=2}^{P/2} \begin{pmatrix}\frac{P}{2}\\k\end{pmatrix} \bar{q}_2^{P/2-k} \bar{p}_2\langle (q_{12} - \bar{q}_2)^k \rangle_2 \right]\notag \\
& \left. - \frac{\beta^{'} K}{2N^{P/2}}\theta \left[ \sum_{k=1}^{P/2} \begin{pmatrix}\frac{P}{2}\\k\end{pmatrix}\bar{q}_1^{P/2-k} \langle (p_{12}-\bar{p}_1)(q_{12} - \bar{q}_1)^k \rangle_1 + \sum_{k=2}^{P/2} \begin{pmatrix}\frac{P}{2}\\k\end{pmatrix} \bar{q}_1^{P/2-k} \bar{p}_1\langle (q_{12} - \bar{q}_1)^k \rangle_1 \right] +{\beta^{'} \gamma}\theta \dfrac{P}{2}\q_1^{P/2}\p_1 \right\}.
\end{align}
\normalsize
\end{remark}

Applying the Fundamental Theorem of Calculus and computing the one-body term, we have the following 
\begin{proposition}
At finite size $N$ and under the first step of replica symmetry breaking, we can write the quenched statistical pressure of the dense Hebbian network as
\footnotesize
\begin{align}
\label{A_1RSB_finite}
    &\mathcal{A}^{(P)} = \frac{\gamma N^{a-1}}{2}\ln (1-\beta^{'}N^{1-P/2}(1-\bar{q}_2^{P/2}))  +\frac{1}{\theta} \mathbb{E}_1 \left\{ \ln \mathbb{E}_2  \cosh^\theta\left( \psi + \sum_{a=1}^2 B^{(a)}J^{(a)}\right) \right\}+ \ln 2 \notag \\
&+\frac{ \gamma \beta^{'}N^{a-P/2}\q_1^{P/2} }{2 (1-\beta^{'}N^{1-P/2}(1-\q_2^{P/2})-\theta \beta^{'}N^{1-P/2}(\q_2^{P/2} - \q_1^{P/2}))}  \notag \\
&+ \frac{\gamma N^{a-1}}{2\theta} \ln\left(\frac{1-\beta^{'}N^{1-P/2}(1-\bar{q}_2^{P/2})}{1-\beta^{'}N^{1-P/2}(1-\bar{q}_2^{P/2})-\theta \beta^{'}N^{1-P/2}(\q_2^{P/2} - \q_1^{P/2}))}\right)-\dfrac{\beta '\gamma}{2}N^{a-P/2}\notag \\
&+\left\{ \frac{\beta^{'}}{2}\left[\sum_{k=2}^P \begin{pmatrix}P\\k\end{pmatrix} \langle (m_1-\bar{m})^k \rangle \bar{m}^{P-k} + \bar{m}^P (1-P) \right] -{\beta^{'} \gamma}(\theta-1) \dfrac{P}{2}\q_2^{P/2}\p_2 \right. \notag \\
&+\frac{\beta^{'} \gamma N^{a-P/2}}{2}(\theta-1) \left[ \sum_{k=1}^{P/2} \begin{pmatrix}\frac{P}{2}\\k\end{pmatrix}\bar{q}_2^{P/2-k} \langle (p_{12}-\bar{p}_2)(q_{12} - \bar{q}_2)^k \rangle_2 + \sum_{k=2}^{P/2} \begin{pmatrix}\frac{P}{2}\\k\end{pmatrix} \bar{q}_2^{P/2-k} \bar{p}_2\langle (q_{12} - \bar{q}_2)^k \rangle_2 \right]\notag \\
& \left. - \frac{\beta^{'} \gamma N^{a-P/2}}{2}\theta \left[ \sum_{k=1}^{P/2} \begin{pmatrix}\frac{P}{2}\\k\end{pmatrix}\bar{q}_1^{P/2-k} \langle (p_{12}-\bar{p}_1)(q_{12} - \bar{q}_1)^k \rangle_1 + \sum_{k=2}^{P/2} \begin{pmatrix}\frac{P}{2}\\k\end{pmatrix} \bar{q}_1^{P/2-k} \bar{p}_1\langle (q_{12} - \bar{q}_1)^k \rangle_1 \right] +{\beta^{'} \gamma}\theta \dfrac{P}{2}\q_1^{P/2}\p_1 \right\}
\end{align}
\end{proposition}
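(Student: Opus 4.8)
The plan is to integrate the $t$-streaming identity of Lemma \ref{lemma:tderRSB} by the Fundamental Theorem of Calculus, in complete analogy with the replica-symmetric computation that produced \eqref{A_RS_finite}. First I would substitute the constant choices \eqref{value_psi}--\eqref{value_C} of Remark \ref{marco13} into the expression for $d_t \mathcal{A}_N^{(P)}$; by design of those constants, together with the combinatorial identities \eqref{potential_m_1RSB}--\eqref{potential_pq_1RSB}, all the $K$-weighted and $\psi$-weighted contributions recombine so that $d_t \mathcal{A}_N^{(P)}$ no longer depends on $t$ and equals precisely the curly-bracket expression displayed at the end of Remark \ref{marco13}. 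Consequently, writing $\mathcal{A}_N^{(P)} = \mathcal{A}_N^{(P)}(t=1) = \mathcal{A}_N^{(P)}(t=0) + \int_0^1 d_t \mathcal{A}_N^{(P)}\, dt = \mathcal{A}_N^{(P)}(t=0) + d_t \mathcal{A}_N^{(P)}$, the only remaining task is to evaluate the one-body term $\mathcal{A}_N^{(P)}(t=0)$.

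At $t=0$ the exponent of $\mathcal{Z}_2^{(P)}$ in \eqref{def:partfunct_GuerraRSB} carries no $P$-body coupling: it is linear in the $\sigma_i$'s, with site field $\psi + B^{(1)}J_i^{(1)} + B^{(2)}J_i^{(2)}$, and at most quadratic in each $\tau_\mu$, with Gaussian self-coefficient $C/2$ in the exponent and linear term $A^{(1)}\tilde J_\mu^{(1)} + A^{(2)}\tilde J_\mu^{(2)}$, plus the deterministic shift. Hence $\mathcal{Z}_2^{(P)}(0)$ factorizes over $i$ and over $\mu$: the $\boldsymbol\sigma$-trace gives $\prod_{i=1}^N 2\cosh\!\big(\psi + \sum_{a} B^{(a)}J_i^{(a)}\big)$, while each Gaussian $\tau_\mu$-integral against $\mathcal{D}\bm\tau$ gives $(1-C)^{-1/2}\exp\!\big[\tfrac{(\sum_a A^{(a)}\tilde J_\mu^{(a)})^2}{2(1-C)}\big]$ --- valid because $C=\beta' N^{1-P/2}(1-\bar q_2^{P/2})\to 0$ for $P>2$ at large $N$, so $1-C>0$.

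Next I would push the two nested averages $\mathcal{Z}_1^{(P)} = \mathbb{E}_2[(\mathcal{Z}_2^{(P)})^\theta]^{1/\theta}$ and $\ln\mathcal{Z}_0^{(P)} = \mathbb{E}_1[\ln\mathcal{Z}_1^{(P)}]$ through this factorized form, term by term. The $\boldsymbol\sigma$-sector is the standard first-step calculation and produces $N\big[\ln 2 + \tfrac{1}{\theta}\mathbb{E}_1\ln\mathbb{E}_2\cosh^\theta(\psi + \sum_a B^{(a)}J^{(a)})\big]$; after dividing by $N$ this is the first line of \eqref{A_1RSB_finite}. For the $\tau$-sector, the quadratic form $\tfrac{(\sum_a A^{(a)}\tilde J_\mu^{(a)})^2}{2(1-C)}$ is integrated first over the level-$2$ variable $\tilde J_\mu^{(2)}$ (after raising to $\theta$), giving a factor $\big(1 - \tfrac{\theta A_2^2}{1-C}\big)^{-1/2}$ and a residual Gaussian weight in $\tilde J_\mu^{(1)}$; raising to $1/\theta$, taking the logarithm and averaging over $\tilde J_\mu^{(1)}$ finishes the Gaussian integration. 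Multiplying the per-$\mu$ contribution by $K-1\simeq K=\gamma N^a$ and inserting $C=\beta' N^{1-P/2}(1-\bar q_2^{P/2})$, $A_1^2=\beta' N^{1-P/2}\bar q_1^{P/2}$, $A_2^2=\beta' N^{1-P/2}(\bar q_2^{P/2}-\bar q_1^{P/2})$ reproduces exactly the terms $\tfrac{\gamma N^{a-1}}{2}\ln(1-C)$, $\tfrac{\gamma N^{a-1}}{2\theta}\ln\tfrac{1-C}{1-C-\theta\beta' N^{1-P/2}(\bar q_2^{P/2}-\bar q_1^{P/2})}$ and $\tfrac{\gamma\beta' N^{a-P/2}\bar q_1^{P/2}}{2(1-C-\theta\beta' N^{1-P/2}(\bar q_2^{P/2}-\bar q_1^{P/2}))}$ of \eqref{A_1RSB_finite}. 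Collecting these with the deterministic $-\tfrac{\beta'\gamma}{2}N^{a-P/2}$ and with the $t$-constant streaming block yields the claim.

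I expect the only real obstacle to be the bookkeeping of the two nested Gaussian integrations in the $\tau$-sector: making the $\theta$-dependent denominator $1-\beta' N^{1-P/2}(1-\bar q_2^{P/2}) - \theta\beta' N^{1-P/2}(\bar q_2^{P/2}-\bar q_1^{P/2})$ surface consistently in all three $\tau$-contributions, and keeping track of the powers of $N$ buried in $K=\gamma N^a$, in $A_a^2\propto N^{1-P/2}$ and in $C\propto N^{1-P/2}$, since an exponent or sign slip there is the easiest way to miss the exact statement. A minor consistency check at the end is that the $\bar m$-dependent pieces coming from $\psi$ inside the $\cosh^\theta$ term and from the $\bar m$-part of the $t$-constant streaming, cf. \eqref{potential_m_1RSB}, must combine into the single $\tfrac{\beta'}{2}\big[\sum_{k\ge 2}\binom{P}{k}\langle(m_1-\bar m)^k\rangle\bar m^{P-k} + \bar m^P(1-P)\big]$ term written in \eqref{A_1RSB_finite}; everything else is a routine, if long, repetition of the RS argument.
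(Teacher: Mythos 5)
Your proposal is correct and follows exactly the route the paper itself takes (and only sketches): the Fundamental Theorem of Calculus applied to the $t$-streaming of Lemma \ref{lemma:tderRSB} with the constants fixed as in Remark \ref{marco13}, plus the evaluation of the factorized one-body term at $t=0$, where the $\boldsymbol\sigma$-trace yields the $\frac{1}{\theta}\mathbb{E}_1\ln\mathbb{E}_2\cosh^\theta(\cdot)$ block and the nested Gaussian $\tau$-integrations produce the $\theta$-dependent logarithms and the $\bar q_1^{P/2}$ ratio term. The only caveat, shared with the paper's own (RS and 1-RSB) treatment, is that the fluctuation averages $\langle(\Delta m)^k\rangle$, $\langle\Delta p(\Delta q)^k\rangle_a$ in the streaming are themselves $t$-dependent at finite $N$, so strictly one should keep $\int_0^1 dt$ in front of them; this is immaterial in the thermodynamic limit where those terms vanish.
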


\begin{theorem}
The derivative w.r.t. $t$ in the thermodynamical limit is 
\begin{align}
\label{dt_thermo_RSB}
\frac{d\mathcal{A}^{(P)}(t)}{dt} &= \bar{m}^P (1-P) -{\beta^{'} \gamma}(\theta-1) \dfrac{P}{2}\q_2^{P/2}\p_2 +{\beta^{'} \gamma}\theta \dfrac{P}{2}\q_1^{P/2}\p_1 .
\end{align}

Thus, in the thermodynamic limit and under the assumption of first step of replica symmetry breaking, we reach the same results we computed via transport equation's interpolation (see equation \ref{A_1RSB_finalissima_guerra}), namely the quenched statistical pressure for $P\geq 4$  of the DHN becomes
\begin{align}
    &\mathcal{A}^{(P)}=\ln 2 + \frac{1}{\theta}\mathbb{E}_1 \ln \mathbb{E}_2 \cosh^\theta g (\bm J, \bar{m}) -\dfrac{\gamma\b}{4}\q_2^{P/2-1}\p_2\Big(P-(P-1)\q_2\Big) \notag \\
    &+ \frac{\beta^{'}}{2}\bar{m}^P (1-P) -\theta(P-1)\dfrac{\b\gamma}{4}(\q_2^{P/2}\p_2-\q_1^{P/2}\p_1)+\dfrac{\gamma\beta'\,^2}{4}
    \label{A_1RSB_finalissima_guerra}
\end{align}
where 
\begin{align}
    g(\bm J, \bar{m})&= \frac{\beta^{'}P}{2}\bar{m}^{P-1} + J^{(1)}\sqrt{\frac{\beta^{'}}{2}\gamma \bar{p}_1P \bar{q}_1^{P/2-1}} +J^{(2)}\sqrt{ \frac{\beta^{'}}{2} P\gamma \left[ \bar{p}_2 \bar{q}_2^{P/2-1} -\bar{p}_1 \bar{q}_1^{P/2-1}  \right]}.
\end{align}
\end{theorem}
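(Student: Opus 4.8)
The plan is to prove both assertions together via the Fundamental Theorem of Calculus in the interpolation variable $t$, mirroring the replica-symmetric Theorem established just above: writing $\mathcal A^{(P)} = \mathcal A^{(P)}(t{=}1) = \mathcal A^{(P)}(t{=}0) + \int_0^1 \tfrac{d}{dt}\mathcal A^{(P)}(t)\, dt$, I need two inputs — the $t$-derivative in the thermodynamic limit (which is \eqref{dt_thermo_RSB}) and the boundary term $\mathcal A^{(P)}(t{=}0)$, which is a fully factorized one-body model.

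For the $t$-derivative I would start from the finite-$N$ formula of Lemma \ref{lemma:tderRSB}, insert the constants $\psi, A_1, A_2, B_1, B_2, C$ prescribed in \eqref{value_psi}--\eqref{value_C}, and apply the combinatorial identities \eqref{potential_m_1RSB}--\eqref{potential_pq_1RSB} to reorganise $d_t\mathcal A^{(P)}_N$ into the form displayed in Remark \ref{marco13}: a ``source'' piece depending only on $\m,\q_1,\q_2,\p_1,\p_2,\theta$, plus a ``potential'' piece collecting the central-moment terms $\langle(\Delta m)^k\rangle$, $\langle(\Delta q)^k\rangle_a$ and $\langle \Delta p\,(\Delta q)^k\rangle_a$. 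Letting $N\to\infty$, the potential vanishes by the $1$-RSB concentration of the overlaps recorded in Remark \ref{r:above} (cf. \eqref{eq:V0_HRSB}), the derivative becomes $t$-independent and equal to the right-hand side of \eqref{dt_thermo_RSB}, and the $t$-integral reduces to that same expression.

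For the boundary term, at $t=0$ the partition function $\mathcal Z^{(P)}_2(0)$ decouples: the sum over $\bm\sigma$ factorises into $\prod_{i} 2\cosh\!\big(\psi + B^{(1)}J_i^{(1)} + B^{(2)}J_i^{(2)}\big)$, while the $K$ Gaussian integrals over the $\tau_\mu$'s are elementary, each contributing $(1-C)^{-1/2}\exp\!\big[\tfrac{(A^{(1)}\tilde J_\mu^{(1)}+A^{(2)}\tilde J_\mu^{(2)})^2}{2(1-C)}\big]$. Carrying out the nested averages $\mathbb E_2[\,\cdot^{\theta}\,]^{1/\theta}$, $\exp\mathbb E_1\ln(\cdot)$ and finally $\tfrac1N\mathbb E_0\ln(\cdot)$ yields, from the spin block, the term $\ln 2 + \tfrac1\theta\,\mathbb E_1\ln\mathbb E_2\cosh^{\theta} g(\bm J,\m)$ — the identification $\psi + \sum_a B^{(a)}J^{(a)}=g(\bm J,\m)$ being precisely the purpose of the choices \eqref{value_psi} and of $B_1^2,B_2^2$ — and, from the $\tau$ block, the logarithmic contributions already written at finite $N$ in \eqref{A_1RSB_finite}.

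The remaining, genuinely delicate, step is to take the Baldi--Venkatesh scaling $a=P-1$ together with $N\to\infty$ inside \eqref{A_1RSB_finite}: one expands the logarithms in powers of $\b N^{1-P/2}$, and for $P\ge 4$ the prefactors $N^{a-P/2}=N^{P/2-1}$ multiplying the leading terms diverge but cancel exactly against the diagonal subtraction $-\tfrac{\b\gamma}{2}N^{a-P/2}$ and the leading part of the rational $\q_1^{P/2}$-term, the next order leaving the finite remainder $\tfrac{\gamma\,{\b}^2}{4}$; using the $p$-self-consistency $\p_a=\b\q_a^{P/2}$ to recast the surviving $\q_a^{P/2}\p_a$ and $\q_2^{P/2-1}\p_2$ contributions then reproduces \eqref{A_1RSB_finalissima_guerra}, and the same accounting shows that $a>P-1$ makes the pressure diverge while $a<P-1$ suppresses the slow-noise term, so $a=P-1$ is maximal — the $1$-RSB counterpart of Theorems \ref{cor_carmassimoRS} and \ref{Susy1}. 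I expect this limit to be the main obstacle: beyond the careful cancellation of the $N^{P/2-1}$ divergences hidden in the $\theta$-dependent logarithms, one must note that the potential $V_N$ carries the diverging prefactor $\tfrac{\b K}{2N^{P/2}}\sim N^{P/2-1}$, so its vanishing in fact requires a quantitative concentration rate for $\langle\Delta p\,\Delta q\rangle_a$ (faster than $N^{-(P/2-1)}$) — stronger than the bare statement of Remark \ref{r:above}, and to be supplied via the convexity/concentration estimates standard in Guerra's scheme.
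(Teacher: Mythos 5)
Your proposal follows essentially the same route as the paper: the Fundamental Theorem of Calculus in $t$ applied to the finite-$N$ pressure \eqref{A_1RSB_finite} of the preceding Proposition, with the $t$-derivative supplied by Lemma \ref{lemma:tderRSB} and Remark \ref{marco13}, the factorized one-body boundary term at $t=0$, and the large-$N$ expansion that forces $a=P-1$ and produces the residual $\gamma\beta'^2/4$ --- exactly the argument the paper delegates to Appendix \ref{SusyBreak}. Your closing observation that the vanishing of the fluctuation terms actually requires a concentration rate faster than $N^{-(P/2-1)}$, because of the diverging prefactor $\beta' K/(2N^{P/2})$, is a genuine subtlety that the paper's Remark \ref{r:above} leaves implicit inside the 1-RSB ansatz rather than proving.
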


The proof is similar via transport equation's interpolation (see Appendix \ref{SusyBreak}), since we omit it.

\begin{remark}
Note that the above expression sharply coincides with \eqref{eq:correct_1RSB_Trasp}, hence from now on the results obtained trough the first approach automatically translate also in this setting and it is pointless to repeat the calculations: the scenario painted trough the transport-PDE approach is meticulously confirmed.
\end{remark}

\section{Ground state analysis of the maximal storage}\label{Sezione4} 
Once set the net in the Baldi-Venkatesh regime of operation (the maximal storage scaling allowed to the network, i.e. $K = \gamma N^{P-1}$), in this section we perform fine tuning, namely we search the numerical value $\gamma_c$ that sets the maximal achievable storage: this is done in the $\beta \to \infty$ limit of zero temperature of course (where no fast noise is present) and we inspect as $\gamma$ grows, the behavior of the Mattis  magnetization: as long as that observable is $\sim 1$ the network is in the retrieval operation mode -i.e., it is performing pattern recognition and associative memory- but when the magnetization suddenly drops to zero, this defines  the critical capacity $\gamma_c$: beyond that value, it is pointless to add more patterns to the network because its associative properties are lost and it behaves as a pure spin-glass with no retrieval skills (the network has a phase transitions: it escapes the retrieval region and enters the pure spin glass region). 
\newline 
Before starting the calculations we just point out that, as the  proofs of the next two theorems are short but somehow cumbersome, we prefer to keep them in the main text.

\subsection{RS approximation}
\label{sc:TzeroRS}
As standard also for the classic Hopfield model \cite{Amit}, to get the ground state solution (namely the self-consistencies for $\b\to \infty$) for the case of $P>2$, we now assume that $\lim_{\b\to\infty}\b (1-\q)$ is finite. This gives rise to the following.

\begin{theorem}
Assuming that $\lim_{\b\to\infty}\b (1-\q)$ is finite, the zero-temperature self-consistency equation for the Mattis magnetization reads as
\begin{equation}
    \begin{array}{lll}
         \m&\coloneqq&\mathrm{erf}{\left[\dfrac{1}{2}\sqrt{ \dfrac{P}{\gamma}}\,\m^{P-1}\right]}\,.
    \end{array}
    \label{eq:self_GuerraRS_Tzero}
\end{equation}
where $\mathrm{erf}$ is the error function.
\end{theorem}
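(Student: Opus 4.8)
The plan is to start from the replica-symmetric self-consistency equations \eqref{eq:self_GuerraRS} (valid in the maximal-storage regime $a=P-1$, cf. \eqref{eq:load_of_P_Spin}) and to carry out the $\beta\to\infty$ limit by hand. First I would use the third equation $\bar{p}=\beta'\bar{q}^{P/2}$ to eliminate $\bar{p}$ from the argument of the hyperbolic tangent appearing in the equations for $\bar{m}$ and $\bar{q}$; that argument then factorizes cleanly as
\[
\frac{P}{2}\beta'\bar{m}^{P-1}+x\sqrt{\beta'\gamma\tfrac{P}{2}\,\bar{p}\,\bar{q}^{P/2-1}}
=\beta'\Big(\tfrac{P}{2}\bar{m}^{P-1}+x\sqrt{\gamma\tfrac{P}{2}\,\bar{q}^{P-1}}\Big)=:\beta'\,g(x),
\]
i.e. as $\beta'$ (which diverges) times a quantity $g(x)$ that stays bounded as $\beta\to\infty$.

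Next I would establish that $\bar{q}\to1$ as $\beta\to\infty$. Writing the second self-consistency as $\bar{q}=\langle\tanh^2(\beta' g(x))\rangle_x$ and invoking the standing hypothesis that $\lim_{\beta\to\infty}\beta'(1-\bar{q})$ is finite, one gets $\beta'\bar{q}^{P-1}\to+\infty$, hence $\beta' g(x)\to\pm\infty$ for Lebesgue-a.e.\ $x$ (the exceptional set $\{g=0\}$ is a single point, of zero Gaussian measure since $g$ is affine in $x$ with nonzero slope); therefore $\tanh^2(\beta' g(x))\to1$ pointwise a.e.\ and, by dominated convergence, $\bar{q}\to1$, which is self-consistent. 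Substituting this back into the first equation and using $\tanh(\beta' g(x))\to\mathrm{sign}(g(x))$ a.e.\ together with dominated convergence yields
\[
\bar{m}=\Big\langle\,\mathrm{sign}\Big(\tfrac{P}{2}\bar{m}^{P-1}+x\sqrt{\tfrac{\gamma P}{2}}\,\Big)\Big\rangle_x .
\]

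Finally I would compute this Gaussian average: for $b>0$ and $x\sim\mathcal N(0,1)$ one has $\langle\mathrm{sign}(a+bx)\rangle_x=\Pr(x>-a/b)-\Pr(x<-a/b)=\mathrm{erf}\!\big(a/(b\sqrt2)\big)$; with $a=\tfrac{P}{2}\bar{m}^{P-1}$ and $b=\sqrt{\gamma P/2}$ this gives $a/(b\sqrt2)=\tfrac{P}{2}\bar{m}^{P-1}/\sqrt{\gamma P}=\tfrac12\sqrt{P/\gamma}\,\bar{m}^{P-1}$, which is exactly \eqref{eq:self_GuerraRS_Tzero}. The one genuinely delicate point is the rate of convergence $\bar{q}\to1$: one must ensure that $\beta'\bar{q}^{P-1}$ — not merely $\beta'$ — diverges, which is precisely what the assumption $\lim_{\beta\to\infty}\beta'(1-\bar{q})<\infty$ guarantees; it is also worth checking that the same $\beta\to\infty$ analysis of the $\bar{q}$-equation does not collapse the retrieval branch to $\bar{m}=0$, which it does not, since $\bar{m}\neq0$ already forces the signal term $\tfrac{P}{2}\beta'\bar{m}^{P-1}$ to diverge.
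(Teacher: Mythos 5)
Your proof is correct and follows essentially the same route as the paper's: substitute $\bar{p}=\beta'\bar{q}^{P/2}$, use the hypothesis to send $\bar{q}\to 1$, replace $\tanh$ by $\mathrm{sign}$ in the $\beta\to\infty$ limit, and evaluate the Gaussian average via $\langle\mathrm{sign}(a+bx)\rangle_x=\mathrm{erf}\big(a/(b\sqrt{2})\big)$. The only difference is that the paper additionally inserts an auxiliary source $y$ into the argument of the $\tanh$ so as to express $\delta\bar{q}=\lim_{\beta\to\infty}\beta(1-\bar{q})$ as $\partial\bar{m}/\partial y$ — a device you may omit since the theorem only asserts the magnetization equation, and your dominated-convergence justification of the $\tanh\to\mathrm{sign}$ step is if anything more careful than the paper's.
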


\begin{proof}
We adapt the computation from \cite{Amit}. As a first step we introduce an additional term $\b y$ in the argument of the hyperbolic tangent appearing in the self-consistency equations \eqref{eq:self_GuerraRS}:
\begin{equation}
    \begin{array}{lll}
         \m=\left\langle\tanh{\left[\b\left(\dfrac{P}{2}  \m^{P-1}+x\sqrt{ \gamma \frac{P}{2} \q^{^{P-1}}}\,+y\right)\right]}\right\rangle_x \,,
         \\\\
         \q=\left\langle\tanh{}^{\2}{\left[\b\left(\dfrac{P}{2}  \m^{P-1}+x\sqrt{ \gamma \frac{P}{2} \q^{^{P-1}}}\,+y\right)\right]}\right\rangle_x \,.
    \end{array}
\end{equation}
We also recognize that as $\b\to\infty$ we have $\q\to 1$, therefore in order to perform the limit we will introduce the reparametrization
\begin{eqnarray}
\q=1-\dfrac{\delta \q}{\b}\;\;\;\mathrm{as}\;\;\;\b\to\infty\,.
\end{eqnarray}
In this way we obtain
\begin{equation}
    \begin{array}{lll}
         \m&=\left\langle\tanh{\left[\b\left(\dfrac{P}{2}  \m^{P-1}+x\sqrt{ \gamma \frac{P}{2} \left(1-\dfrac{\delta \q}{\b}\right)^{^{P-1}}}\,+y\right)\right]}\right\rangle_x \,,
         \\\\
         1-\dfrac{\delta \q}{\b}&=\left\langle\tanh{}^{\2}{\left[\b\left(\dfrac{P}{2}  \m^{P-1}+x\sqrt{ \gamma \frac{P}{2} \left(1-\dfrac{\delta \q}{\b}\right)^{^{P-1}}}\,+y\right)\right]}\right\rangle_x \,.
    \end{array}
\end{equation}
Using the new parameter $y$ we can recast the last equation in $\delta \q$ as a derivative of the magnetization
\begin{equation}
    \dfrac{\partial \m}{\partial y}=\b\left[1-\left(1-\dfrac{\delta \q}{\b}\right)\right]=\delta\q
\end{equation}
Thanks to this correspondence between $\m$ and $\q$,  we can proceed without worrying about $\q$
\begin{equation}
    \begin{array}{lll}
         \m&=\left\langle\mathrm{sign}{\left[\dfrac{P}{2}  \m^{P-1}+x\sqrt{ \gamma \frac{P}{2} }\,+y\right]}\right\rangle_x \,,
         \\\\
         \delta \q&=\dfrac{\partial \m}{\partial y} \,.
    \end{array}
\end{equation}
These equations can be simplified by evaluating the Gaussian integral in $x$, via the relation:
\begin{equation}
    \l\mathrm{sign}[A x +B]\r_x=\mathrm{erf}\left(\dfrac{B}{\sqrt{2}\,A}\right)\,,
\end{equation}
to get
\begin{equation}
    \begin{array}{lll}
         \m&=\mathrm{erf}{\left[\dfrac{\frac{P}{2}  \m^{P-1}+y}{\sqrt{ \gamma P }}\right]}\,,
         \\\\
         \delta \q&=\dfrac{2}{\sqrt{ \gamma\pi P }}\exp{\left\lbrace-\left[\dfrac{\frac{P}{2}  \m^{P-1}+y}{\sqrt{ \gamma P }}\right]^2\right\rbrace }\,.
    \end{array}
\end{equation}
Setting $y=0$ we close the proof.
\end{proof}

\begin{corollary}
As conjectured by Gardner via the replica trick \cite{gardner}, in the limit $P\to\infty$, $\gamma_{c}$ is a divergent function of P of the form
\begin{equation}
\gamma_{_C}\sim\dfrac{P}{\log{P}}\,.
\label{eq:gammaC_RS_Tzero}
\end{equation}
\end{corollary}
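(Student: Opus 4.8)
The plan is to identify the critical capacity $\gamma_c$ at $T=0$ with the fold (saddle-node) bifurcation at which the retrieval branch of the self-consistency \eqref{eq:self_GuerraRS_Tzero} ceases to exist: for large $u\coloneqq \tfrac12\sqrt{P/\gamma}$ the map $m\mapsto \mathrm{erf}(u\,m^{P-1})$ has two positive fixed points (an unstable one and the stable retrieval one, the latter with $\bar m$ close to $1$), and as $\gamma$ grows they collide and annihilate. Introducing $x\coloneqq u\,\bar m^{P-1}$, equation \eqref{eq:self_GuerraRS_Tzero} reads $\bar m=\mathrm{erf}(x)$, while the tangency condition $\tfrac{d}{dm}\mathrm{erf}(u\,m^{P-1})\big|_{m=\bar m}=1$ becomes $\tfrac{2}{\sqrt{\pi}}e^{-x^2}u(P-1)\bar m^{P-2}=1$. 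Using $u\bar m^{P-2}=x/\bar m=x/\mathrm{erf}(x)$, the pair of conditions collapses to the single transcendental equation
\begin{equation}
\frac{2(P-1)}{\sqrt{\pi}}\,\frac{x\,e^{-x^2}}{\mathrm{erf}(x)}=1,
\label{eq:foldcond}
\end{equation}
which determines $x=x(P)$ and hence $\gamma_c$.

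First I would extract the leading behaviour of \eqref{eq:foldcond}. Since the prefactor $P-1$ diverges we must have $x\,e^{-x^2}\to0$, which on the retrieval branch ($\bar m=\mathrm{erf}(x)$ bounded away from $0$) forces $x\to\infty$, hence $\mathrm{erf}(x)\to1$; to leading order $x\,e^{-x^2}\sim\tfrac{\sqrt{\pi}}{2P}$, and taking logarithms yields $x^2=\log P\,(1+o(1))$, i.e. $x\sim\sqrt{\log P}$. Next I would check that $\bar m^{P-1}\to1$, so that $u$ and $x$ have the same leading order: from the tail expansion $\mathrm{erf}(x)=1-\tfrac{e^{-x^2}}{x\sqrt{\pi}}\bigl(1+O(x^{-2})\bigr)$ together with $e^{-x^2}\sim\tfrac{\sqrt{\pi}}{2Px}$ coming from \eqref{eq:foldcond},
\begin{equation}
\log\!\big(\bar m^{P-1}\big)=(P-1)\log\mathrm{erf}(x)\sim-\frac{(P-1)e^{-x^2}}{x\sqrt{\pi}}\sim-\frac{1}{2x^2}\sim-\frac{1}{2\log P},
\end{equation}
which tends to $0$, so $\bar m^{P-1}=1+o(1)$ and $u=x/\bar m^{P-1}\sim\sqrt{\log P}$. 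Inverting $u=\tfrac12\sqrt{P/\gamma_c}$ then gives $\gamma_c\sim\dfrac{P}{4\log P}$, i.e. the asserted scaling $\gamma_c\sim P/\log P$ up to the multiplicative constant, which reproduces Gardner's replica prediction \cite{gardner}.

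The main obstacle is the bookkeeping of subleading corrections in this asymptotic inversion: iterating \eqref{eq:foldcond} one actually obtains $x^2=\log P-\tfrac12\log\log P+O(1)$, and one has to verify that these $\log\log P$ and $O(1)$ pieces — together with the $\mathrm{erf}(x)\to1$ and $\bar m^{P-1}\to1$ corrections — are all negligible against the leading $\log P$, which is exactly what "$\gamma_c\sim P/\log P$" requires. A secondary point, which can instead be read off the phase-diagram discussion around Figure \ref{fig:diagRS0}, is to justify that the relevant $T=0$ transition is precisely this fold of the retrieval branch (rather than a free-energy crossing with the spin-glass solution); in any case the two prescriptions share the same leading order as $P\to\infty$.
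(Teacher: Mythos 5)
Your argument is correct and reaches the same leading asymptotics $\gamma_c\sim P/(4\log P)$, but it gets there by a genuinely different route than the paper. The paper's proof is essentially a calibrated heuristic: it sets $\bar m=1$ inside the error function, imposes the closeness condition $\left|\mathrm{erf}\left[\tfrac{1}{2}\sqrt{P/\gamma}\right]-1\right|<\epsilon$, solves asymptotically for $P/\gamma_{_C}$ as a function of $\epsilon$, and only then fixes $\epsilon=1/P$ by fiat — a choice that is not justified internally and that, taken literally, would give $\bar m^{P-1}\approx e^{-1}$ rather than $1$ at criticality. You instead locate $\gamma_c$ as the saddle-node bifurcation of the zero-temperature self-consistency \eqref{eq:self_GuerraRS_Tzero}, pairing the fixed-point equation $\bar m=\mathrm{erf}(x)$ with the tangency condition $\tfrac{2(P-1)}{\sqrt{\pi}}\,x e^{-x^2}/\mathrm{erf}(x)=1$; this closes the system with no free parameter, correctly rules out the $x\to 0$ branch (where the ratio tends to $\sqrt{\pi}/2$ rather than $0$), and shows $\bar m^{P-1}=1+O(1/\log P)$ as a \emph{consequence} rather than an assumption. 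What your approach buys is a self-contained derivation in which the $1/P$ scale emerges from the tangency condition itself; what it costs is the bookkeeping of subleading terms you flag — and note one small slip there: iterating $x^2=\log P+\log x+O(1)$ gives $x^2=\log P+\tfrac12\log\log P+O(1)$, with a plus sign, not a minus. This is immaterial for \eqref{eq:gammaC_RS_Tzero}, since all corrections are $o(\log P)$. Your closing caveat about fold versus free-energy crossing is also apt: the paper's definition of $\gamma_c$ (the magnetization "suddenly dropping to zero") is really the spinodal of the retrieval branch, which is exactly what your fold condition computes, and in any case the two prescriptions agree at the order stated.
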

\begin{proof}
As numerically for $P\to\infty$ we have found that the value of the magnetization for $\gamma\leq\gamma_{_C}$ is always $\m=1$ and decays to $0$ for $\gamma>\gamma_{_C}$, to find the trend of $\gamma_{_C}$ as a function of $P$, from the \eqref{eq:self_GuerraRS_Tzero}, we have to impose the following condition
\begin{equation}
    \left|\mathrm{erf}\left[\dfrac{1}{2}\sqrt{\dfrac{P}{\gamma}\:}\right]-1\right|<\epsilon
\end{equation}
solving this equation for $P/\gamma_{_C}$ within the limit of small values of $\epsilon$, we have the approximate solution
\begin{equation}
    \dfrac{P}{\gamma_{_C}}=4\log{\left[\dfrac{1}{\epsilon}\right]}-2\log{\left[\dfrac{\pi}{2}\log{\left[\dfrac{2}{\pi \epsilon^2}\right]}\right]}+\mathcal{O}(\epsilon^2)
\end{equation}
thus, as $\epsilon\to 0$ the ratio $P/\gamma_{_C}$ must be a divergent function of the form 
\begin{equation}
\dfrac{P}{\gamma_C}\sim 4\log{\left[\dfrac{1}{\epsilon}\right]}
\end{equation}
choosing $\epsilon = 1/P$, this condition allows $\gamma_{_C}$ to be a divergent function of $P$ of the form in \eqref{eq:gammaC_RS_Tzero}.
\end{proof}


\subsection{1-RSB approximation}
\begin{theorem}
The zero-temperature self-consistency equations for the Mattis magnetization (and, technically required, also for $\Delta \bar{q}=\bar{q}_2-\bar{q}_1$), in the 1-RSB scenario, read as
\begin{equation}
    \begin{array}{lll}
         \m=1-2\,\mathbb{E}_1\left\lbrace\dfrac{1}{1+e^{2\Theta\,(A_1+A_2 J^{(1)})}\mathcal{F}(\boldsymbol{A})}\right\rbrace\,,
         \\\\
         \Delta\q= \q_2-\q_1=4\,\mathbb{E}_1\left\lbrace\dfrac{e^{2\Theta\,(A_1+A_2 J^{(1)})}\mathcal{F}(\boldsymbol{A})}{\Big[1+e^{2\Theta\,(A_1+A_2J^{(1)})}\mathcal{F}(\boldsymbol{A})\Big]^2}\right\rbrace\,,
    \end{array}
\end{equation}
where
\begin{equation}
\begin{array}{lll}
     \mathcal{F}(\boldsymbol{A})=\dfrac{1+\mathrm{erf}\Big[\mathcal{K}^+\Big]}{1+\mathrm{erf}\Big[\mathcal{K}^-\Big]} &\mathrm{with}& \mathcal{K^{\pm}}=\frac{\Theta\,A_3^2\pm(A_1+A_2 J^{(1)})}{A_3\sqrt{2}}
\end{array}
\end{equation}
and
\begin{equation}
    \begin{array}{lll}
         A_1=\dfrac{P}{2}\m^{P-1}\,,
         &
         A_2=\sqrt{\dfrac{\gamma P }{2}}\,,
         &
         A_3=\sqrt{\dfrac{\gamma P(P-1)}{2}\Delta\q}\,.
    \end{array}
\end{equation}
\end{theorem}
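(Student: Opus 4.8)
The plan is to push the finite–temperature 1-RSB self-consistency equations of Corollary~\ref{cor:SC_HOP_1RSB} to the frozen limit $\beta\to\infty$, exactly along the lines of the RS computation in Section~\ref{sc:TzeroRS}, the only genuinely new ingredient being that the inner replica average $\mathbb E_2$ must now be carried out explicitly against an exponential weight. First I would eliminate $\bar p_1,\bar p_2$ using $\bar p_a=\beta^{'}\bar q_a^{P/2}$, so that $g(\bm J,\bar m)=\beta^{'}h(\bm J,\bar m)$ with $h$ free of $\beta^{'}$, namely $h=\tfrac{P}{2}\bar m^{P-1}+J^{(1)}\sqrt{\tfrac{\gamma P}{2}\bar q_1^{P-1}}+J^{(2)}\sqrt{\tfrac{\gamma P}{2}(\bar q_2^{P-1}-\bar q_1^{P-1})}$. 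Then I would adopt the ground-state ansatz mirroring the RS one: all overlaps tend to $1$ with the usual $\beta^{'}(1-\bar q_a)$ kept finite, the Parisi breakpoint $\Theta:=\lim_{\beta\to\infty}\theta\,\beta^{'}$ is held finite (hence $\theta\to0$), and $\Delta\bar q:=\bar q_2-\bar q_1$ is retained as the surviving order parameter. Linearising $\bar q_a^{P-1}\simeq 1-(P-1)(1-\bar q_a)$ then produces exactly $A_1=\tfrac{P}{2}\bar m^{P-1}$, $A_2^{2}=\tfrac{\gamma P}{2}$ and $A_3^{2}=\tfrac{\gamma P(P-1)}{2}\Delta\bar q$, i.e. $h\to A_1+A_2J^{(1)}+A_3J^{(2)}$.

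In this scaling $\cosh^{\theta}g=\cosh^{\Theta/\beta^{'}}(\beta^{'}h)\to e^{\Theta|h|}$ (from $\cosh x\sim\tfrac12 e^{|x|}$, the $2^{-\theta}$ prefactor dropping out), while $\tanh g\to\operatorname{sign}(h)$ and $\tanh^{2}g\to1$. Hence the equation for $\bar q_2$ trivialises to $\bar q_2=\mathbb E_1\!\big[\mathbb E_2 e^{\Theta|h|}/\mathbb E_2 e^{\Theta|h|}\big]=1$, whereas the equations for $\bar m$ and $\bar q_1$ reduce to $\bar m=\mathbb E_1[M(u)]$ and $\bar q_1=\mathbb E_1[M(u)^{2}]$, where $u:=A_1+A_2J^{(1)}$ is frozen under the outer $\mathbb E_1$ and $M(u):=\mathbb E_{J^{(2)}}\!\big[e^{\Theta|u+A_3J^{(2)}|}\operatorname{sign}(u+A_3J^{(2)})\big]\big/\mathbb E_{J^{(2)}}\!\big[e^{\Theta|u+A_3J^{(2)}|}\big]$ with $J^{(2)}$ the standard inner Gaussian.

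The core computation is the evaluation of the two Gaussian integrals in $M(u)$. Splitting the $J^{(2)}$–line at $J^{(2)}=-u/A_3$, where the argument of the absolute value changes sign, and completing the square in each half, each integral collapses to $\tfrac12 e^{\Theta^{2}A_3^{2}/2}\big[e^{\Theta u}(1+\operatorname{erf}\mathcal K^{+})\pm e^{-\Theta u}(1+\operatorname{erf}\mathcal K^{-})\big]$ with $\mathcal K^{\pm}=(\Theta A_3^{2}\pm u)/(A_3\sqrt2)$, the shift $\Theta A_3^{2}$ being precisely the mean of the tilted Gaussian created by the square completion (the same mechanism that gave $\langle\operatorname{sign}[Ax+B]\rangle_x=\operatorname{erf}(B/\sqrt2 A)$ in the RS proof, now with an extra exponential tilt). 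Taking the ratio and then dividing numerator and denominator by $e^{-\Theta u}(1+\operatorname{erf}\mathcal K^{-})$ gives $M(u)=\big(e^{2\Theta u}\mathcal F-1\big)\big/\big(e^{2\Theta u}\mathcal F+1\big)=1-2/(1+e^{2\Theta u}\mathcal F)$ with $\mathcal F=(1+\operatorname{erf}\mathcal K^{+})/(1+\operatorname{erf}\mathcal K^{-})$. Substituting back, $\bar m=\mathbb E_1[M(u)]$ is the first claimed identity, and since $\bar q_1=\mathbb E_1[M(u)^{2}]$ and $\bar q_2\to1$ one gets directly $\Delta\bar q=\bar q_2-\bar q_1=\mathbb E_1\big[(1-M)(1+M)\big]=4\,\mathbb E_1\!\big[e^{2\Theta u}\mathcal F/(1+e^{2\Theta u}\mathcal F)^{2}\big]$, the second claimed identity.

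The hard part is not the Gaussian integral but the coordinated frozen limit around it: one must send $\beta^{'}\to\infty$, $\theta\to0$ with $\theta\beta^{'}\to\Theta$, and $\bar q_1,\bar q_2\to1$ simultaneously, and check that every subleading piece — the $O(1/\beta^{'})$ corrections in $\bar q_a^{P-1}$, the diagonal $\bar q^{P/2}$–type and $\beta^{'}\gamma$ contributions, the $2^{-\theta}$ factor, and the finite-$\beta^{'}$ remainder of $\tanh$ — genuinely vanishes, while the $A_3$–term $\propto\sqrt{\Delta\bar q}$ survives and $\Delta\bar q$ stays equal to the quantity sitting inside $A_3$; the RS equations of Section~\ref{sc:TzeroRS} must be recovered in the degenerate case $\Delta\bar q\to0$ ($A_3\to0$, $\mathcal F\to0/\infty$), which is a useful consistency check. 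As in the RS treatment it is cleanest to insert an auxiliary field $\beta^{'}y$ inside $g$, derive the $\bar m$–equation first, and then read off the $\Delta\bar q$–equation either directly from $\bar q_1=\mathbb E_1[M(u+y)^{2}]$ or, equivalently, as a $y$–derivative of $\bar m$ evaluated at $y=0$; this makes the otherwise formal manipulation of the two overlaps rigorous and isolates the one place where sign/shift errors are easy, namely keeping the $\Theta A_3^{2}$ shifts and the $e^{\pm\Theta u}$ factors consistent throughout.
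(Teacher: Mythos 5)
Your proposal is correct and follows essentially the same route as the paper: eliminate $\bar p_a$ via $\bar p_a=\beta'\bar q_a^{P/2}$, take the coordinated frozen limit $\beta'\to\infty$, $\theta\beta'\to\Theta$, $\bar q_2\to1$ with $\Delta\bar q$ retained so that $g\to\beta'(A_1+A_2J^{(1)}+A_3J^{(2)})$, and then perform the Gaussian integral over $J^{(2)}$. The only difference is that you carry out explicitly the split-and-complete-the-square evaluation of $\mathbb E_{J^{(2)}}$ (correctly, yielding the stated $\mathcal K^{\pm}$ and $\mathcal F$), a step the paper compresses into "performing the integral over $J^{(2)}$ we get the proof."
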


\begin{proof}
Following the same steps presented in the RS assumption, we introduce the additional term $\b y$ in the expression of $g(\boldsymbol{J},\m)$, the self consistent equations in Corollary \ref{cor:SC_HOP_1RSB} read as
\begin{align}
&\bar{m}= \mathbb{E}_1 \left[ \frac{\mathbb{E}_2 \cosh^\theta g(\bm J,\bar{m})\tanh g(\bm J,\bar{m})}{\mathbb{E}_2 \cosh^\theta g(\bm J,\bar{m})}\right] \\
&\bar{q}_1 = \mathbb{E}_1 \left[ \frac{\mathbb{E}_2 \cosh^\theta g(\bm J,\bar{m})\tanh g(\bm J,\bar{m})}{\mathbb{E}_2 \cosh^\theta g(\bm J,\bar{m})}\right]^2 \\
&\bar{q}_2 = \mathbb{E}_1 \left[ \frac{\mathbb{E}_2 \cosh^\theta g(\bm J,\bar{m})\tanh^2 g(\bm J,\bar{m})}{\mathbb{E}_2 \cosh^\theta g(\bm J,\bar{m})}\right]
\end{align}
where 
\begin{equation}
\small
\begin{array}{lll}
     g(\bm J,\bar{m})= \beta^{'} \left( \dfrac{P}{2}\bar{m}^{P-1} + \sqrt{\dfrac{\gamma P \q_1^{P-1}}{2}} J^{(1)} +\sqrt{\dfrac{\gamma P(\q_2^{P-1}-\q_1^{P-1})}{2}} J^{(2)} +y\right). 
\end{array}
\label{eq:g_zeroT}
\end{equation}
We recognize that as $\b\to\infty$, we have $\q_2\to 1$, therefore in order to perform the limit we will introduce the reparametrization
\begin{equation}
\begin{array}{lll}
     \q_2=1-\dfrac{\delta\q_2}{\b}&\mathrm{as}&\b\to\infty
\end{array}
\end{equation}
Using the new parameter $y$, we can recast the equation for $\q_2$ as a derivative of the magnetization
\begin{equation}
\begin{array}{lll}
     \dfrac{\partial\m}{\partial y}=\delta\q_2-\Theta\Delta\q\Longrightarrow\delta\q_2=\dfrac{\partial\m}{\partial y}+\Theta\Delta\q
\end{array}
\end{equation}
where we have used $\Delta\q=\q_2-\q_1$ and, as $\b\to\infty$,  $\b\theta\to \Theta\in\mathbb{R}$. Thus, in the zero temperature limit the previous equations become
\begin{equation}
\begin{array}{lll}
     \bar{m}&\to& \mathbb{E}_1 \left\lbrace \dfrac{\mathbb{E}_2 \left[\mathrm{sign}{[g(\boldsymbol{J},\m)]}\:e^{\Theta|g(\boldsymbol{J},\m)|}\,\right]}{\mathbb{E}_2 \left[e^{\Theta|g(\boldsymbol{J},\m)|}\right]}\right\rbrace \\
\Delta\q &\to& 1-\mathbb{E}_1 \left\lbrace \dfrac{\mathbb{E}_2 \left[\mathrm{sign}{[g(\boldsymbol{J},\m)]}\:e^{\Theta|g(\boldsymbol{J},\m)|}\,\right]}{\mathbb{E}_2 \left[e^{\Theta|g(\boldsymbol{J},\m)|}\right]}\right\rbrace^2 \\
\q_2 &\to& 1 
\end{array}
\end{equation}
Now, if we suppose $\Delta \q\ll 1$ the \eqref{eq:g_zeroT} reduces to
\begin{equation}
\label{gmodified}
    g(\boldsymbol{J},\m)=\b\left[A_1+A_2 J^{(1)}+A_3 J^{(2)}+\mathcal{O}(\Delta\q)\right]
\end{equation}
where
\begin{equation}
    \begin{array}{lll}
         A_1=\dfrac{P}{2}\m^{P-1}\,,
         &
         A_2=\sqrt{\dfrac{\gamma P }{2}}\,,
         &
         A_3=\sqrt{\dfrac{\gamma P(P-1)}{2}\Delta\q}\,.
    \end{array}
    \label{eq:A_parameters_RSB}
\end{equation}
Performing the integral over $J^{(2)}$ we get the proof.
\end{proof}

\begin{remark}
Note that, as $\Delta\q\to 0$, the whole above construction collapses to the replica symmetric picture as it should.
\end{remark}
\begin{proof}
For $\Delta\q\to 0$ from \eqref{gmodified} and \eqref{eq:A_parameters_RSB}, we have
\begin{equation}
\begin{array}{lll}
     g(\bm J,\bar{m})\to\b\left[  \dfrac{P}{2}\m^{P-1} + \sqrt{\dfrac{\gamma P }{2}} J^{(1)} \right]
\end{array}
\end{equation}
and so
\begin{equation}
\begin{array}{lll}
     \bar{m}&\to& \mathbb{E}_1 \left[  \mathrm{sign}\left(  \dfrac{P}{2}\m^{P-1} + \sqrt{\dfrac{\gamma P }{2}} J^{(1)} \right)\right]
     \\\\
\Delta\q &\to& 0 
\\
\q_1 &\to& \q_2\;\to\;1 
\end{array}
\end{equation}
which are the equations in the zero-temperature limit of RS assumption.
\end{proof}

\begin{remark}
We checked numerically the behavior of critical capacity, both  in the RS and 1-RSB assumptions, and  -as reported in the plots of Figure \ref{fig:gammaC}, we can appreciate that their trends are similar, almost identica:  also in the 1-RSB  scenario $\gamma_C$ is a divergent function of $P$ of the form $\frac{P}{\log P}$ and, as expected, in these regards replica symmetry breaking plays a minor role.
\begin{figure}[h!]
         \centering
         \includegraphics[width=0.4\textwidth]{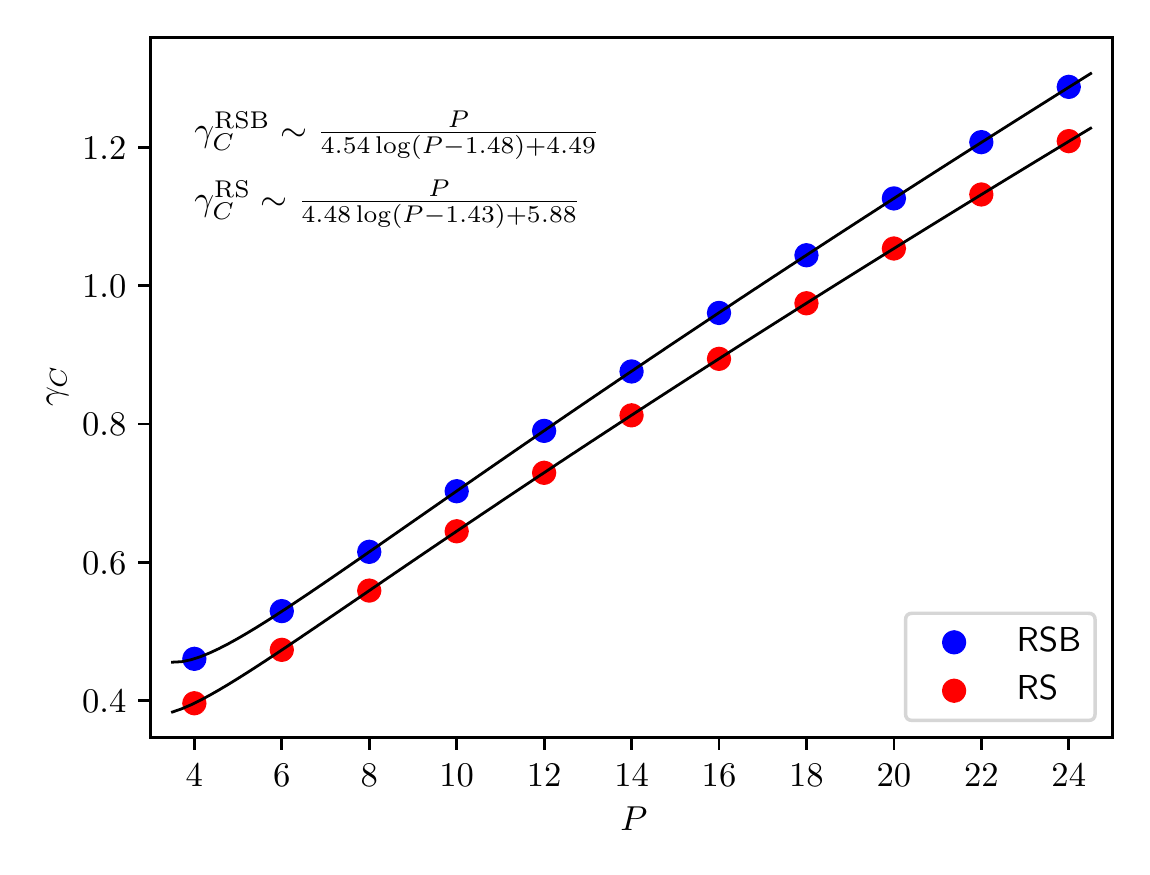}
           \includegraphics[width=0.4\textwidth]{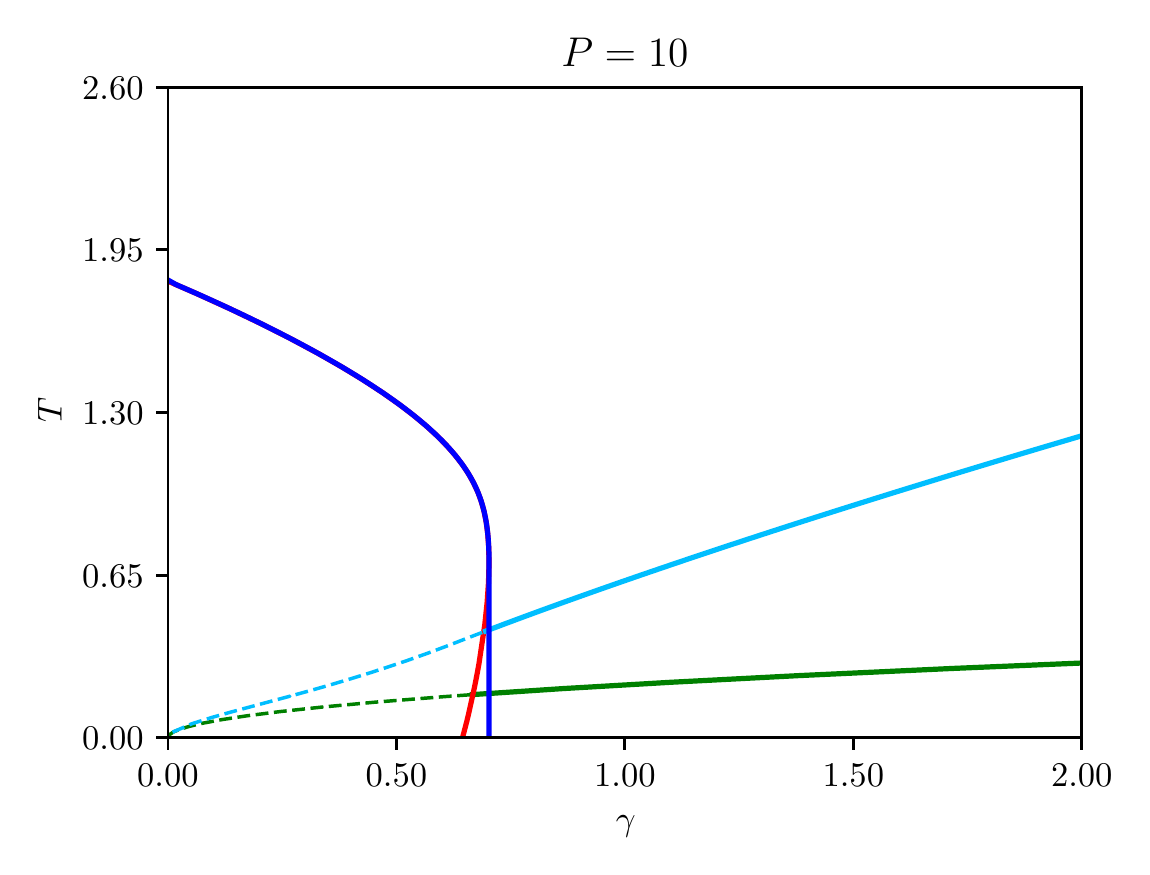}
        \caption{Left: $\gamma_c$ as a function of $P$; we note that  -for all values of $P$- the RSB maximal capacity is systematically  larger than its replica symmetric counterpart. Right: Superposition of the RS and 1-RSB phase diagrams for a given $P$ -i.e. $P=10$, the same of the Monte Carlo runs reported in Figure \ref{fig:trend_mag_RS_2} - to facilitate visual comparison of the various regions: we note that the spin-glass phase is systematically larger in the RSB scenario (light blue) rather than in the RS counterpart (green). Within the retrieval region the spin glass solution is always unstable, both in the RS and in the 1-RSB approximations. }
        \label{fig:gammaC}
\end{figure}
\end{remark}

\section{The structure of the glassiness}\label{Sezione5}
\label{sec:glassy}

\begin{figure}[h!]
     \begin{subfigure}[b]{0.5\textwidth}
         \centering
         \includegraphics[width=\textwidth]{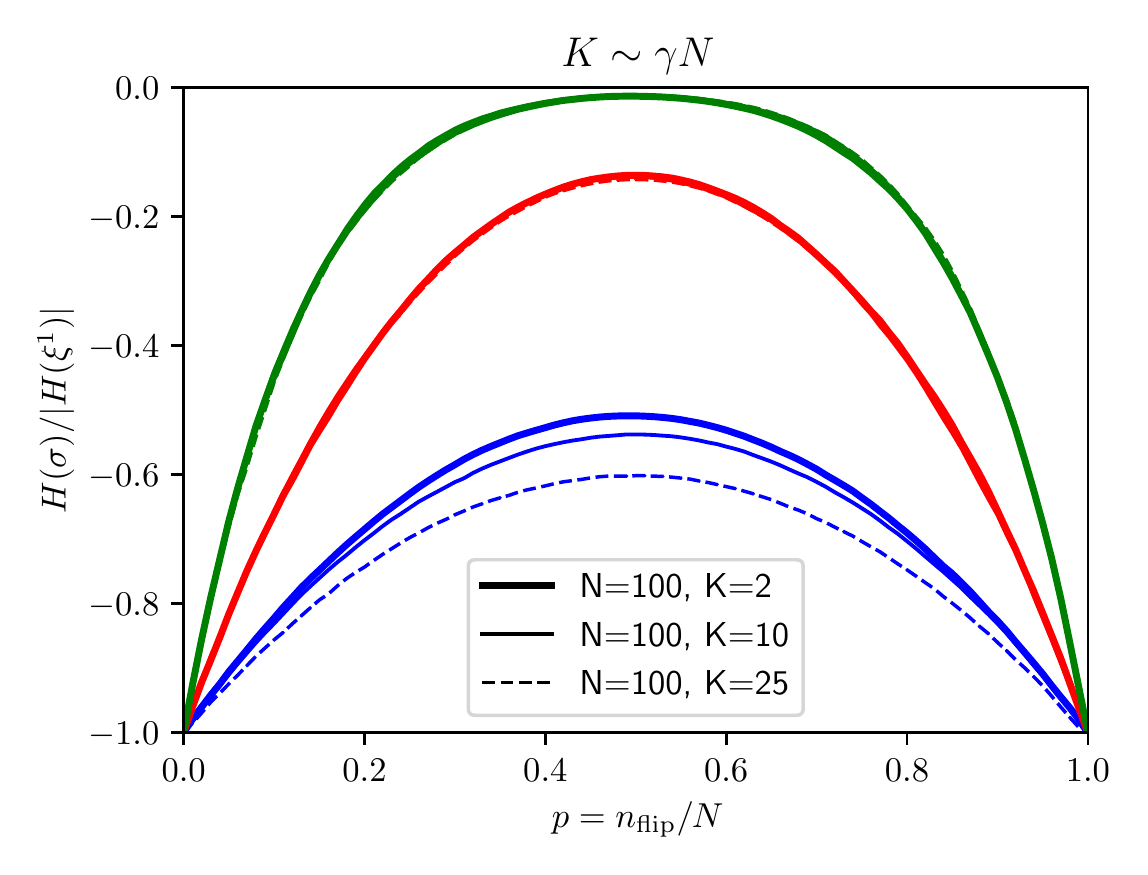}
     \end{subfigure}
     \begin{subfigure}[b]{0.5\textwidth}
         \centering
         \includegraphics[width=\textwidth]{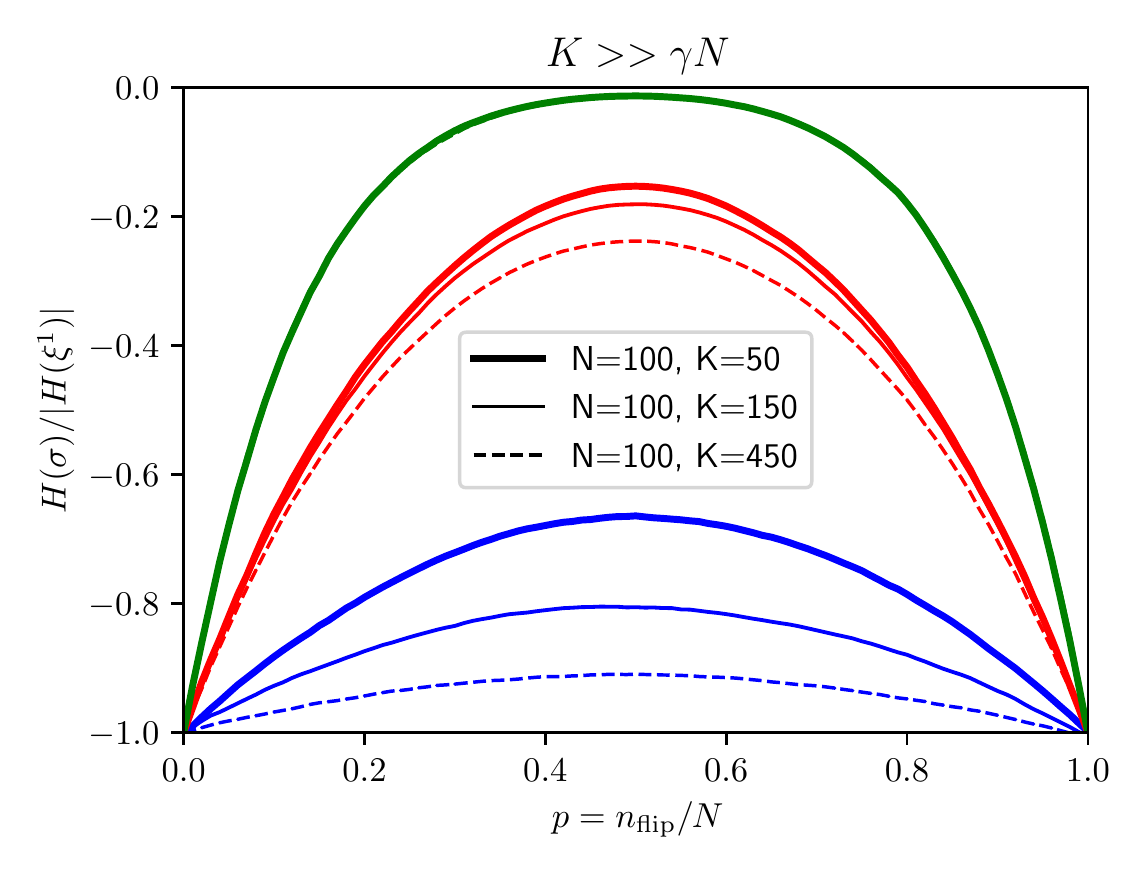}
     \end{subfigure}
        \caption{Comparison of the structure of the landscape in the {\em high resolution} regime \cite{PRLNN} (left) and in the {\em high storage} regime (right). In the vertical axes we plot the ratio where in the denominator there is the Hamitonian evaluated in the minimum corresponding to the pattern  $\xi^1$ -and it is fixed-  and in the numerator we plot the value of the Hamiltonian where we perform ground state spin flips to step away from $\xi^1$ toward $\xi^2$.  In the horizontal  axes we plot the number of spin flips required to move from $\xi^1$ to $\xi^2$. 
\newline
In blu we report $P=2$ (standard Hopfield), in red $P=4$ and in green $P=8$. It shines that in dense networks minima are more profound w.r.t. the shallow limit and energy barrier are higher (hence trapping in spurious states become less probable for dense networks). Note that Hopfield has a parabolic shape as expected being a quadratic Hamiltonian (see also \cite{Agliari-Barattolo,leonelli}).   
\newline
Selected a network (i.e. selected a color in the plot), as the storage grows we see that the maxima of these curves -that happen on the mixture of $\xi^1$ and $\xi^2$- the contribution of the quenched noise increases and the corresponding energy of the maximum gets lower. Further, whatever the storage, we highlight that the basin of attractions of the minima gets steeper as $P$ grows, suggesting both a higher critical storage value as well as their flat structure  (see also \cite{Zecchina-New}).}
        \label{fig:trend_mag_RS_1}
\end{figure}

In order to deepen the glassy structure of these neural networks it is instructive to start with a glance at the pairwise reference. Remembering that the Hamiltonian of the Sherrington-Kirkpatrick (SK) spin glass reads as
$$
H_{SK}=\frac{-1}{\sqrt{N}}\sum_{i<j}^{N,N}J_{ij}\sigma_i \sigma_j,
$$
with $J_{ij}$ quenched random couplings i.i.d. accordingly to $\mathcal{N}[0,1]$, if we consider the standard Hopfield limit (i.e. we set $P=2$ in the dense Hebbian network), we can write the related Hamiltonian and partition function as
\begin{eqnarray}
H_{Hopfield}(\bold{\sigma}|\bold{\xi})=\frac{-1}{N}\sum_{i<j}^{N,N}\sum_{\mu=1}^K \xi_i^{\mu}\xi_j^{\mu}\sigma_i\sigma_j,\\
Z_{Hopfield}=\sum_{\sigma}^{2^N}\exp\left(-\beta H_{Hopfield}(\bold{\sigma}|\bold{\xi})\right).
\end{eqnarray}
In turn, these can be rewritten, after minimal manipulations -i.e., for the former splitting the signal (i.e. the pattern to be retrieved, say $\mu=1$) from the quenched noise (i.e. all the other patterns) and for the latter using its integral representation \'a la Hubbard-Stratonovich, as
\begin{eqnarray}\label{ducojoni1}
H_{Hopfield}(\bold{\sigma}|\bold{\xi})=-m_1^2/2 + \frac{-1}{\sqrt{N}}\sum_{i<j}^{N,N}J_{ij}\sigma_i\sigma_j, \textit{with} \ \  J_{ij}= (\frac{1}{\sqrt{N}}\sum_{\mu=1}^K \xi_i^{\mu}\xi_j^{\mu}) \\ \label{ducojoni2}
Z_{Hopfield}=\sum_{\sigma}^{2^N}e^{\beta m_1^2}\int_{-\infty}^{+\infty} \prod_{\mu=2}^P dz_{\mu}e^{-z^2/2}\exp\left(\frac{1}{\sqrt{N}}\sum_{i,\mu}^{N,P}\xi_i^{\mu}\sigma_i z_{\mu}\right),
\end{eqnarray}
hence, it shines  that, if naively we send $N \to \infty$ in eq. (\ref{ducojoni1}) we note that $J_{ij} \to \mathcal{N}[0,1]$ -as in the Sherrington-Kirkpatrick model- and, correspondingly,   the normalization of the Hopfield Hamiltonian turns  to the Sherrington-Kirkpatrick one (i.e. $\sqrt{N}$ rather than $N$): certainly we are dealing with a spin-glass, we must now study what kind of spin glass it is. A glance at eq. (\ref{ducojoni2}) suggests a bipartite spin-glass made of by one party with $N$ Ising spins (binary neurons) $\sigma_i = \pm 1$ and one party with $K$ Gaussian spins (real valued neurons equipped with a Gaussian prior). 
Indeed, in a couple of recent papers \cite{bipartito-mio,glassy}, Guerra and coworkers provided -at the replica symmetric level of description only- a representation theorem for the standard Hopfield quenched statistical pressure in terms of the related quenched statistical pressures of an hard spin glass (i.e. the Sherrington-Kirkpatrick model) and a soft one (i.e. the Gaussian or {\em spherical} model): as the former is full-RSB (it is the archetype of models where Parisi theory is correct) \cite{BGDiBiasio,Guerra,Talagrand}, while the latter is replica symmetric \cite{soffice,Dembo}, the interplay among them confers a glassiness to the Hopfield model that is typical of that kind of neural network and it is not the same nor of the hard spin glass alone  neither of the soft one alone. 
\newline
Does the glassiness of the Hopfield neural network hold also for dense networks?  
\newline
A glance at the self-consistencies for the overlap both at the replica symmetric level -see equation (\ref{eq:self_GuerraRS})- as well as under the first step of RSB -see equations (\ref{Senza1RSBa}-\ref{Senza1RSBb})- seems to suggest that this is no longer the case as the self-consistencies for the overlap are the same of the standard hard P spin glass (namely the Sherrington-Kirkpatrick model with P-wise interactions \cite{conBurioni,GuysAlone}) both in the RS and in the 1-RSB scenarios.
\newline
To prove this conjecture, in this section we generalize the Guerra's representation theorem in various directions: at first we focus on the standard pairwise Hopfield model to inspect if such a decomposition holds also within a broken replica framework and we prove that it keeps holding. Then we focus on dense networks and   we prove that such a decomposition theorem does not hold, rather these networks have quenched statistical pressures related solely to those pertaining to the hard spin glasses. The soft part disappears and this turns to be true both at the replica symmetric and within the first step of replica symmetry breaking: Let us prove these statements and deepen their consequences .

\subsection{RS scenario}
\subsubsection{Case $P=2$ (standard Hopfield reference)}

For sake of completeness, in this subsection we report the decomposition theorem for $P=2$ case, namely standard Hopfield model, claimed in \cite{glassy}. 

\begin{theorem}
Fixed at noise level $\beta,\ \beta_1$ and $\beta_2$ as
\begin{align}
\beta_1 &= \frac{\sqrt{\gamma \beta}}{1-\beta(1-\q)} \\
\beta_2 &= 1-\beta(1-\q),
\end{align}
the replica
symmetric approximation of the quenched free energy of the analogical neural network can be linearly decomposed in terms of the replica symmetric approximation of the Sherrington–Kirkpatrick quenched free energy, at noise level $\beta_1$, and the replica symmetric approximation of the quenched free energy of the Gaussian spin glass, at noise level $\beta_2$, such that 
\begin{align}
\mathcal{A}_{NN}^{RS} (\beta,\gamma) &= \mathcal{A}_{SK}^{RS} (\beta, \beta_1) + \gamma \mathcal{A}_{Gauss} (\beta_2, \beta)-\frac{1}{4} \beta^2_1.
\end{align}
\end{theorem}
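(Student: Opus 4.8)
The plan is to prove the identity by a direct comparison of the three closed-form replica-symmetric pressures that appear in it, all of which are available explicitly. The left-hand side is the $P=2$, $a=1$ specialisation of the Hopfield RS pressure: at finite $N$ this is exactly \eqref{A_RS_finite}, and since for $P=2$, $a=1$ every exponent of $N$ reduces to zero, the thermodynamic limit leaves the classical AGS expression, already organised into an entropic term $\ln 2$, a ``cavity field'' term $\langle \ln\cosh(\cdot)\rangle$, a magnetic block in $\bar m$, a ``slow-noise'' block of the form $-\tfrac{\gamma}{2}\ln(1-\beta(1-\bar q))+\tfrac{\gamma}{2}\,\beta\bar q/(1-\beta(1-\bar q))$, and a residual quadratic remainder. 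The right-hand side is built from the RS Sherrington--Kirkpatrick pressure at noise $\beta_1$, from the RS pressure of the Gaussian (``spherical'') spin glass at noise $\beta_2$ — which, the soft party being Gaussian, reduces to a one-dimensional Gaussian integral and is therefore an elementary logarithmic expression — and from the scalar counterterm $-\tfrac14\beta_1^2 = -\tfrac14\gamma\beta/\beta_2^2$.

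The core of the argument is then a block-by-block identification after substituting $\beta_2 = 1-\beta(1-\bar q)$ and $\beta_1 = \sqrt{\gamma\beta}/\beta_2$ on the right-hand side. First I would recognise the slow-noise block of AGS as precisely $\gamma$ times the RS Gaussian spin-glass pressure evaluated at $\beta_2$, meaningful on the region $\beta_2>0$ where that Gaussian integral converges; this is the ``soft'' part of the representation. Second, I would match the remaining pieces — $\ln 2$, $\langle\ln\cosh(\cdot)\rangle$ and the quadratic remainder — against $\mathcal{A}_{SK}^{RS}(\beta,\beta_1)$: the SK pressure supplies its own $\ln 2$ and $\langle\ln\cosh(\cdot)\rangle$ together with the SK quadratic term $\tfrac14\beta_1^2(1-\bar q)^2$, and one checks that the discrepancy between this term and the AGS quadratic remainder is exactly the constant $-\tfrac14\beta_1^2$; the $\bar m$-dependent terms (the shift $\beta\bar m$ in the field and $-\tfrac12\beta\bar m^2$) are inert and carried through on both sides. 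The decisive compatibility check is at the level of the order parameter: the saddle value $\bar q$ extremising the Hopfield RS functional must also be the one entering the SK and Gaussian pressures, and this holds because, after inserting the self-consistency $\bar p=\beta\bar q$ from \eqref{eq:self_GuerraRS} and the definition of $\beta_1$, the Hopfield overlap equation becomes literally the SK fixed-point equation $\bar q=\langle \tanh^2(\beta_1\sqrt{\bar q}\,Z)\rangle_Z$ — the coincidence already noted in the remark following \eqref{eq:self_GuerraRS} — while the Gaussian-block stationarity follows once $\beta_2$ is written through $\bar q$. (A longer but structurally transparent alternative is to re-derive the Hopfield RS pressure directly from the Hubbard--Stratonovich bipartite representation \eqref{ducojoni2}, running Guerra's RS interpolation separately on the $\sigma$-side disorder — which, once the $z_\mu$'s are integrated out, is an SK interaction among the neurons — and on the $z$-side, a Gaussian spin glass in the random field generated by the $\sigma$'s; the linear split then emerges structurally, with $\beta_1,\beta_2$ appearing as the natural interpolating slopes.)

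The step I expect to be the genuine obstacle is not the term-matching, which is routine once the blocks are isolated, but making the order-parameter identification fully rigorous: since $\beta_1$ and $\beta_2$ depend on $\bar q$, differentiating the right-hand side in $\bar q$ generates extra terms, and one must verify these cancel on the common stationarity locus, so that the two variational problems share the same extremiser and attain the same value there. One must also keep track of the admissible region $\beta_2 = 1-\beta(1-\bar q)>0$ — outside it the Gaussian pressure, and hence the decomposition, ceases to make sense, which is the RS-level signature of the onset of replica-symmetry breaking — and make sure the magnetic sector, though passive, does not interfere with the saddle in $\bar q$. Everything else reduces to bookkeeping of elementary functions.
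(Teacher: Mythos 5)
Your proposal is correct and follows essentially the same route the paper itself takes: the paper does not re-prove this $P=2$ statement (it defers it to the cited reference \cite{glassy}), but its proofs of the sibling theorems (the $P>2$ RS case and Theorem \ref{th:P2_1RSB}) proceed exactly as you describe — fix $\beta_1,\beta_2$ so that relations of the type $\beta_1^2\,\bar q_{SK}=\gamma\beta\,\bar p$ and $\beta_2^2\,\bar q_{G}=\beta\,\bar q$ hold, substitute into the explicit closed-form pressures, match blocks, and observe that the self-consistency for $\bar q$ in the network coincides with the SK one at noise $\beta_1$ so the two variational problems share their extremiser. The only point to watch is the reading of the noise level: consistency with the $P>2$ version in \eqref{eq:b1_b2} requires $\beta_1=\beta\sqrt{\gamma}/\bigl(1-\beta(1-\bar q)\bigr)$ (so $\beta_1^2=\gamma\beta^2/\beta_2^2$), not $\sqrt{\gamma\beta}/\beta_2$ as you wrote when evaluating the counterterm.
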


\subsubsection{Case $P>2$ (dense Hebbian network)}

In this subsection we show that, as long as $P > 2$, the above representation does not hold any longer and the decomposition reduces to a simpler version (where solely the hard spin glass is involved).  This is captured by the next
\begin{theorem}
Let us fix the noise levels $\beta_1$ and $\beta_2$ as follow
\begin{equation}
    \begin{array}{lll}
         \beta_1&=& \dfrac{\b\sqrt{\gamma}}{1-\b N^{1-P/2}(1-\q^{P/2})}\,,
         \\\\
         \beta_2&=&1-\b N^{1-P/2}(1-\q^{P/2})\,,
    \end{array}
    \label{eq:b1_b2}
\end{equation}
and recall the finite size expressions for the quenched statistical pressures of the Hopfield model $ \mathcal{A}_{NN}^{(P)}$, the hard P-spin glass $\mathcal{A}_{SK}^{(P)}$ and the soft P-spin glass $\mathcal{A}_{Gauss}^{(P)}$ obtained with Guerra's interpolation technique, that read as\footnote{While extensive statistical mechanical treatments of both the hard and soft P-spin glass are extensively available in the Literature \cite{conBurioni,Barrat,Crisanti,Guerra,Dmitry}, in \cite{GuysAlone} we re-obtained sharply the expressions (\ref{eq:SK_finite_size_RS}) and (\ref{eq:Gauss_finite_size_RS}) via the two techniques developed in this paper.}
\begin{equation}
\footnotesize
    \begin{array}{lll}
       &  \mathcal{A}_{NN}^{(P)}(\b,\gamma)=\ln{2}-\dfrac{\b\gamma}{2}N^{P/2-1}+\dfrac{\gamma N^{P/2-1}}{2}\dfrac{\b\q^{P/2}}{1-\b N^{1-P/2}(1-\q^{P/2})}-\dfrac{\gamma N^{P-2}}{2}\ln{\left[1-\b N^{1-P/2}(1-\q^{P/2})\right]}
         \\\\
         &+\left\langle\ln{}\cosh{\left[\dfrac{P}{2}\b\m^{P-1}+Y\sqrt{\b\gamma\dfrac{P}{2}N^{P/2-1}\p\q^{P/2-1}}\right]}\right\rangle_Y-\dfrac{P-1}{2}\b\m^{P}-\b\gamma\dfrac{P}{4}\p N^{P/2-1}\q^{P/2-1}(1-\q)+V_N^{(NN)}\,,
         \label{eq:NN_finite_size_RS}
    \end{array}
\end{equation}
\begin{equation}
\footnotesize
    \begin{array}{lll}
         \mathcal{A}_{SK}^{(P)}(\b,\beta_1)= & \ln{2}+\left\langle\ln{\cosh{\left[\dfrac{P}{2}\b\m^{P-1}+Y\sqrt{\dfrac{P}{2}\beta_1^2 \q_{SK}^{^{P-1}}} \right]}}\right\rangle_Y-\dfrac{P-1}{2}\b \m^P
         \\\\
         &+\dfrac{1}{4}\beta_1^2\Big(1 -P\q_{SK}^{P-1}+(P-1)\q_{SK}^P\Big)+V_N^{(SK)}\,,
    \end{array}
    \label{eq:SK_finite_size_RS}
\end{equation}
\begin{equation}
\footnotesize
    \begin{array}{lll}
         \mathcal{A}_{Gauss}^{(P)}(\lambda,\beta_2)=&\dfrac{1}{2}\dfrac{\beta_2^2\frac{P}{2}\q_{G}^{P-1}}{1-\lambda+\beta_2^2\frac{P}{2}\q_{G}^{P-1}}-\dfrac{1}{2}\ln{\left[1-\lambda+\beta_2^2\frac{P}{2}\q_{G}^{P-1}\right]} +(P-1)\dfrac{\beta_2^2}{4}\q_{G}^P+V_N^{(Gauss)}
         \end{array}
    \label{eq:Gauss_finite_size_RS}
\end{equation}
where we used
\begin{equation}
\footnotesize
\begin{array}{lll}
V_N^{(NN)}=\dfrac{\b}{2}\SOMMA{k=2}{P}\begin{pmatrix}
P\\ k
\end{pmatrix}\l(\Delta m)^k\r\m^{P-k}-\dfrac{\b \gamma N^{P/2-1}}{2}\left[\SOMMA{k=1}{P/2}\begin{pmatrix}
\frac{P}{2}\\ k
\end{pmatrix}\l\Delta p(\Delta q)^k\r \q^{P/2-k}+\SOMMA{k=2}{P/2}\begin{pmatrix}
\frac{P}{2}\\ k
\end{pmatrix}\l(\Delta q)^k\r\p\q^{P/2-k}\right]\,,
\\\\
V_N^{(SK)}=\dfrac{\b}{2}\SOMMA{k=2}{P}\begin{pmatrix}
P\\ k
\end{pmatrix}\l(\Delta m)^k\r\m^{P-k}-\dfrac{\beta_1^2}{4}\SOMMA{k=2}{P}\begin{pmatrix}
P\\ k
\end{pmatrix}\l(\Delta q_{SK})^k\r\q_{SK}^{P-k}\,,
\\\\
V_N^{(Gauss)}=-\dfrac{\beta_2^2}{4}\SOMMA{k=2}{P}\begin{pmatrix}
P\\ k
\end{pmatrix}\l(\Delta q_G)^k\r\q_G^{P-k}\,.
\end{array}
\label{eq:potential_glassy_RS}
\end{equation}
We can write the following decomposition of the finite size quenched statistical pressure of the dense Hebbian network in terms of the replica symmetric quenched pressures of the Sherrington-Kirkpatrick P-spin glass, at noise level $\beta_1$, and the replica symmetric quenched statistical pressure of the Gaussian P-spin glass, at noise level $\beta_2$:
\begin{equation}
\footnotesize
\begin{array}{lll}
      \mathcal{A}_{NN}^{(P)}(\b,\gamma)\coloneqq&\mathcal{A}_{SK}^{(P)}(\b,\beta_1)-\dfrac{\b\gamma}{2}N^{^{P/2-1}}-\dfrac{1}{4}\beta_1^2+\gamma N^{P-2}\mathcal{A}_{Gauss}^{(P)}(\b,\beta_2)
    \\\\
    &+\gamma N^{P-2}\dfrac{2-P}{4P}\left(\dfrac{\beta_2-(1-N^{^{1-P/2}}\b)}{\beta_2}\right)^2-\Big(V_N^{(SK)}+\gamma N^{P-2}V_N^{(Gauss)}-V_N^{(NN)}\Big)\,,
\end{array}
    \label{eq:SK_Gauss}
\end{equation}
\end{theorem}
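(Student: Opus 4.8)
The plan is to read \eqref{eq:SK_Gauss} as a finite-$N$ algebraic identity among the three explicit quenched pressures \eqref{eq:NN_finite_size_RS}, \eqref{eq:SK_finite_size_RS} and \eqref{eq:Gauss_finite_size_RS}: one substitutes the definitions \eqref{eq:b1_b2} of $\beta_1$ and $\beta_2$ into its right-hand side and verifies that \eqref{eq:NN_finite_size_RS} is reproduced term by term. The preliminary step is to fix the identification of the order parameters: $\bar q_{SK}$ is matched to $\bar q$, which --- together with the self-consistencies \eqref{eq:self_GuerraRS}, their counterparts for the hard glass, and the definition of $\beta_1$ --- makes the two $\cosh$-arguments of $\mathcal{A}_{NN}^{(P)}$ and $\mathcal{A}_{SK}^{(P)}$ coincide; for the soft glass one chooses the Gaussian ``mass'' $\lambda$ and the Gaussian overlap $\bar q_G$ so that the denominator $1-\lambda+\beta_2^2\tfrac P2\bar q_G^{P-1}$ of \eqref{eq:Gauss_finite_size_RS} collapses to $\beta_2$ --- i.e. to the denominator that already appears in \eqref{eq:NN_finite_size_RS}. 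Matching in addition the Gaussian numerator to the corresponding term of \eqref{eq:NN_finite_size_RS} forces $\lambda = 1-\b N^{1-P/2}$ and $\beta_2^2\tfrac P2\bar q_G^{P-1} = \b N^{1-P/2}\bar q^{P/2}$.

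With these choices the verification splits into three blocks. The ``entropic/retrieval'' block --- $\ln 2$, the $\cosh$-term and $-\tfrac{P-1}{2}\b\bar m^P$ --- is common to $\mathcal{A}_{NN}^{(P)}$ and $\mathcal{A}_{SK}^{(P)}$ and matches at once. The potential block is pure bookkeeping: the $V_N^{(SK)}$ inside $\mathcal{A}_{SK}^{(P)}$ and the $\gamma N^{P-2}V_N^{(Gauss)}$ inside $\gamma N^{P-2}\mathcal{A}_{Gauss}^{(P)}$ are exactly removed by the parenthesis $-(V_N^{(SK)}+\gamma N^{P-2}V_N^{(Gauss)}-V_N^{(NN)})$, leaving the $V_N^{(NN)}$ of \eqref{eq:NN_finite_size_RS}, so nothing is needed here beyond comparing \eqref{eq:potential_glassy_RS}. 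The real content is the third, ``soft/self-interaction'' block: one must show that $\gamma N^{P-2}$ times the first three terms of $\mathcal{A}_{Gauss}^{(P)}(\b,\beta_2)$, together with the SK quadratic remainder $\tfrac14\beta_1^2\bigl(1-P\bar q^{P-1}+(P-1)\bar q^P\bigr)$, the bare $-\tfrac14\beta_1^2$, and the correction $\gamma N^{P-2}\tfrac{2-P}{4P}\bigl(\tfrac{\beta_2-(1-N^{1-P/2}\b)}{\beta_2}\bigr)^2$, add up precisely to the remaining block $-\tfrac{\b\gamma}{2}N^{P/2-1}+\tfrac{\gamma N^{P/2-1}}{2}\tfrac{\b\bar q^{P/2}}{\beta_2}-\tfrac{\gamma N^{P-2}}{2}\ln\beta_2-\b\gamma\tfrac P4\bar p\,N^{P/2-1}\bar q^{P/2-1}(1-\bar q)$ of \eqref{eq:NN_finite_size_RS}.

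I expect this soft block to be the only delicate point, and its resolution hinges on the observation that, under the identification above, $\tfrac{\beta_2-(1-N^{1-P/2}\b)}{\beta_2} = \beta_2\tfrac P2\bar q_G^{P-1}$, so the $\tfrac{2-P}{4P}$-correction is merely a multiple of $\bar q_G^{2(P-1)}$. It is precisely the term that absorbs the mismatch among the powers carried by the three glassy bookkeepings: $\bar q^{P-1}$ for the hard glass, $\bar q^{P/2}$ for the Hebbian self-interaction, and $\bar q_G^{P}$ for the soft-glass remainder $(P-1)\tfrac{\beta_2^2}{4}\bar q_G^{P}$. Concretely, one first matches the Gaussian fraction with $\tfrac{\gamma N^{P/2-1}}{2}\tfrac{\b\bar q^{P/2}}{\beta_2}$ and the Gaussian logarithm with $-\tfrac{\gamma N^{P-2}}{2}\ln\beta_2$ (both automatic once $\lambda$ and $\bar q_G$ are fixed), and then checks that the leftover degree-$(P-1)$ and degree-$P$ pieces of the soft and hard glasses combine with the correction to produce the $(1-\bar q)$-term of \eqref{eq:NN_finite_size_RS}, while the bare $-\tfrac14\beta_1^2$ just cancels the ``$1$'' inside the SK quadratic remainder. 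The only genuine algebra is expanding these finitely many rational and logarithmic expressions in $\b$, $\gamma$, $\bar q$ and the powers of $N^{1-P/2}$; this is elementary but demands care with the $N$-bookkeeping. Finally, two consistency checks close the picture: at $P=2$ the factor $2-P$ kills the correction and \eqref{eq:SK_Gauss} reduces to the Hopfield decomposition stated above; and for $P>2$ one has $N^{1-P/2}\to 0$, hence $\beta_2\to1$, $\lambda\to0$ and $\bar q_G\to0$, so the Gaussian block becomes trivial and only the hard $P$-spin Sherrington--Kirkpatrick glass survives in the thermodynamic limit --- the structural statement the theorem is designed to deliver.
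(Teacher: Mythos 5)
Your strategy --- substitute the noise definitions \eqref{eq:b1_b2} into the right-hand side of \eqref{eq:SK_Gauss} and verify the identity with \eqref{eq:NN_finite_size_RS} term by term --- is exactly what the paper does: its proof simply points to the $P=2$ computation in the cited reference and says to repeat those steps with the new $\beta_1,\beta_2$. Your three-block organisation (entropic/retrieval, potentials, soft/self-interaction) and your observation that the $\tfrac{2-P}{4P}$ correction satisfies $\tfrac{\beta_2-(1-N^{1-P/2}\b)}{\beta_2}=\beta_2\tfrac{P}{2}\bar q_G^{P-1}$ are correct and capture where the actual work lies.

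Two caveats. First, your identification of the Gaussian mass is off: you write $\lambda=1-\b N^{1-P/2}$, but then $1-\lambda+\beta_2^2\tfrac{P}{2}\bar q_G^{P-1}=\b N^{1-P/2}(1+\bar q^{P/2})\neq\beta_2$, which contradicts your own stated requirement that the denominator collapse to $\beta_2$ and your later limit $\lambda\to 0$ for $P>2$. The correct choice is $\lambda=\b N^{1-P/2}$ (so that $1-\lambda=1-\b N^{1-P/2}$), together with $\beta_2^2\tfrac{P}{2}\bar q_G^{P-1}=\b N^{1-P/2}\bar q^{P/2}$; this is also the choice the paper makes explicit in the 1-RSB corollary. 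Second, the soft/self-interaction block --- which you correctly identify as the entire content of the theorem --- is asserted rather than carried out: you state that the leftover degree-$(P-1)$ and degree-$P$ pieces combine with the correction to produce the $(1-\bar q)$ term, but do not exhibit the cancellation. Since the paper itself leaves this algebra to the reference, this does not put your argument below the paper's own standard, but a complete proof would need that finite computation written out.
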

\begin{proof}
The proof for $P=2$ is presented in \cite{glassy}. The generalization to $P>2$ is obtained following the same steps but taking care of using the new definitions of the noise in \eqref{eq:b1_b2}.
\end{proof}

\begin{remark}
\label{rm:thermod_limit_potential_RS}
Note that, in the thermodynamic limit, in the replica symmetric framework, $V^{(NN)}_N$, $V_{N}^{(SK)}$ and $ V_{N}^{(Gauss)}$ presented in \eqref{eq:potential_glassy_RS} vanish.
\end{remark}

\begin{corollary}
In the thermodynamic limit, for the case of $P>2$ the glassy nature of the  dense Hebbian network is  equivalent to that of a P-spin Sherrington-Kirkpatrick model with a noise level $\b\sqrt{\gamma}$:
\begin{equation}
    \mathcal{A}_{NN}^{(P)}(\b,\gamma)=\mathcal{A}_{SK}^{(P)}(\b,\b\sqrt{\gamma})
\end{equation}
\end{corollary}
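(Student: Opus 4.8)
The plan is to show that, in the thermodynamic limit, both $\mathcal{A}_{NN}^{(P)}(\b,\gamma)$ and $\mathcal{A}_{SK}^{(P)}(\b,\b\sqrt{\gamma})$ are the supremum of one and the same explicit functional of the pair $(\m,\q)$, taken over the same domain and governed by the same self-consistency; the identity then follows at once. Concretely, I would first simplify the neural-network side: starting from \eqref{eq:pressure_GuerraRS} and inserting the self-consistency $\p=\b\q^{P/2}$ of \eqref{eq:self_GuerraRS}, the argument of the $\cosh$ turns into $\tfrac{P}{2}\b\,\m^{P-1}+Y\sqrt{\tfrac{P}{2}\b^{2}\gamma\,\q^{P-1}}$, while the two terms carrying $\p\,\q^{P/2-1}$ merge with $\tfrac14\gamma\b^{2}(1-\q^{P})$ into $\tfrac14\b^{2}\gamma\big(1-P\q^{P-1}+(P-1)\q^{P}\big)$. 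This is exactly the $V$-free part of the $P$-spin Sherrington--Kirkpatrick pressure \eqref{eq:SK_finite_size_RS} evaluated at noise level $\beta_{1}=\b\sqrt{\gamma}$ (so $\beta_{1}^{2}=\b^{2}\gamma$), with $\q_{SK}$ playing the role of $\q$.

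Next I would dispose of the residual potentials and close the argument. By Remark \ref{rm:thermod_limit_potential_RS} the potentials $V_{N}^{(NN)}$ and $V_{N}^{(SK)}$ vanish in the limit, so both pressures reduce to the functional just written; extremizing \eqref{eq:pressure_GuerraRS} returns \eqref{eq:self_GuerraRS}, and after the same substitution $\p=\b\q^{P/2}$ the system for $(\m,\q)$ is verbatim the standard $P$-spin Sherrington--Kirkpatrick self-consistency at noise $\b\sqrt{\gamma}$ --- the coincidence already signalled in the remark following \eqref{eq:self_GuerraRS}. Hence the two sides are the supremum of the same functional over the same set, which gives $\mathcal{A}_{NN}^{(P)}(\b,\gamma)=\mathcal{A}_{SK}^{(P)}(\b,\b\sqrt{\gamma})$ (with $\q=\q_{SK}$ at the optimum); the only nontrivial step on this route is the algebraic collapse of \eqref{eq:pressure_GuerraRS} onto \eqref{eq:SK_finite_size_RS}.

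As an independent check --- and the route suggested by the placement of the corollary --- one can instead take $N\to\infty$ directly in the finite-size decomposition \eqref{eq:SK_Gauss}. For $P>2$ the exponent $1-P/2$ is negative, so $N^{1-P/2}\to 0$ and the noise levels \eqref{eq:b1_b2} degenerate to $\beta_{2}\to 1$ and $\beta_{1}\to\b\sqrt{\gamma}$; the potentials of \eqref{eq:potential_glassy_RS} vanish and the two copies of $\gamma N^{P-2}V_{N}^{(Gauss)}$ in \eqref{eq:SK_Gauss} cancel already at finite $N$. Beyond $\mathcal{A}_{SK}^{(P)}(\b,\beta_{1})$ one is left with the ``Gaussian block''
\[
-\frac{\b\gamma}{2}N^{P/2-1}-\frac14\beta_{1}^{2}+\gamma N^{P-2}\big(\mathcal{A}_{Gauss}^{(P)}(\b,\beta_{2})-V_{N}^{(Gauss)}\big)+\gamma N^{P-2}\,\frac{2-P}{4P}\Big(\frac{\b N^{1-P/2}\q^{P/2}}{\beta_{2}}\Big)^{2},
\]
where I used $\beta_{2}-(1-N^{1-P/2}\b)=\b N^{1-P/2}\q^{P/2}$, and the claim becomes that this block vanishes. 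Verifying it is the main obstacle of this second route: the first summand and the logarithmic and geometric pieces hidden inside $\gamma N^{P-2}\mathcal{A}_{Gauss}^{(P)}$ each diverge like $N^{P/2-1}$, so one must expand $\ln\beta_{2}$ and the Gaussian denominators in powers of $N^{1-P/2}$, keep track of the Gaussian overlap sitting at order $N^{1-P/2}$ against $\q^{P/2}$, and watch the $O(N^{P/2-1})$ contributions cancel $-\tfrac{\b\gamma}{2}N^{P/2-1}$ while the leftover $O(1)$ part cancels $-\tfrac14\b^{2}\gamma$ together with the $\tfrac{2-P}{4P}$ correction term (whose limit is $\tfrac{2-P}{4P}\gamma\b^{2}\q^{P}$). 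Both routes land on the same conclusion and, taken together, they make transparent \emph{why} the soft Gaussian sector disappears for $P>2$: its entire contribution is of lower order than the intensive pressure and is fully reabsorbed into the renormalization $\beta_{1}=\b\sqrt{\gamma}$ of the hard spin glass.
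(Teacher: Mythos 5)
Your proposal is correct and matches the paper's treatment: your second route (taking $N\to\infty$ in the finite-size decomposition \eqref{eq:SK_Gauss}, with $\beta_1\to\b\sqrt{\gamma}$, $\beta_2\to 1$, the potentials vanishing, and the residual Gaussian block cancelling against the $\frac{2-P}{4P}$ correction) is precisely the proof the paper gives, while your first route (substituting $\p=\b\q^{P/2}$ into \eqref{eq:pressure_GuerraRS} and collapsing it algebraically onto \eqref{eq:SK_finite_size_RS} at $\beta_1=\b\sqrt{\gamma}$) is exactly the verification the paper records in the remark immediately following the corollary. The only difference is one of emphasis — you lead with the direct algebraic identification and relegate the decomposition to a check, whereas the paper does the reverse — and your algebra in both routes is sound.
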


\begin{proof}
As we set $P>2$, in the thermodynamic limit ($N\to\infty$), the definitions \eqref{eq:b1_b2} reads as
\begin{equation}
    \begin{array}{lll}
         \beta_1= \b\sqrt{\gamma}\,,&&\beta_2=1\,.
         \label{eq:noise_SK_P}
    \end{array}
\end{equation}
Using Remark \ref{rm:thermod_limit_potential_RS}, from the replica symmetric expression  of the quenched statistical pressure of P-spin  Sherrington-Kirkpatrick model presented in \eqref{eq:SK_finite_size_RS} with the new noise \eqref{eq:noise_SK_P}, the self consistent equation for $\q_{SK}$ in the SK P-spin model, 
in the thermodynamic limit, coincides with the one for $\q$ in the dense Hebbian network:
\begin{equation}
\footnotesize
    \q_{SK}=\left\langle\tanh{}^2{\left[\dfrac{P}{2}\b\m^{P-1}+Y\beta_1\sqrt{\dfrac{P}{2} \q_{SK}^{^{P-1}}} \right]}\right\rangle_Y=\q\,.
\end{equation}
Similarly, we can verify that, with the new noise \eqref{eq:noise_SK_P}, the self equation for $\q_G$ in  Spherical P-spin glass coincides with the one for $\p$ in the dense Hebbian network
\begin{equation}
\footnotesize
\begin{array}{lll}
     \q_G=\p=\b \q^{P/2}\,.
\end{array}
\end{equation}
where, we scaled $N^{P/2-1}\p$ as $\p$.  It can also be shown that in the thermodynamic limit for $P>2$ the Spherical P-spin glass model reduces to
\begin{equation}
\footnotesize
\begin{array}{lll}
     \gamma N^{P-2}\mathcal{A}_{Gauss}(\beta,\beta_2)-\dfrac{\gamma\b}{2}N^{P/2-1}-\dfrac{1}{4}\beta_1^2\xrightarrow[P>2]{\;\;N\to\infty\;\;}&\dfrac{1}{4}\gamma\beta'\,^2\q^{P}-\dfrac{1}{2P}\gamma\beta'\,^2\q^{P}\,.
\end{array}
\end{equation}
Moreover, the last term of \eqref{eq:SK_Gauss} becomes 
\begin{equation}
\footnotesize
\begin{array}{lll}
    \gamma N^{P-2}\dfrac{2-P}{4P}\left(\dfrac{\beta_2-(1-N^{^{1-P/2}}\b)}{\beta_2}\right)^2\xrightarrow[P>2]{\;\;N\to\infty\;\;}\dfrac{1}{2P}\gamma\beta'\,^2\q^{P}-\dfrac{1}{4}\gamma \beta'\,^2\q^{P}\,.
\end{array}
\end{equation}
So putting all together in relation \eqref{eq:SK_Gauss}, for $P>2$ in the thermodynamic limit, we have 
\begin{equation}
\footnotesize
    \mathcal{A}_{NN}^{(P)}(\b,\gamma)=\mathcal{A}_{SK}^{(P)}(\b,\b\sqrt{\gamma})\,.
\end{equation}
\end{proof}

\begin{remark}
This corollary is also verified by a direct calculation of the quenched statistical pressure in the replica symmetric scenario of the  dense Hebbian network in \eqref{eq:pressure_GuerraRS}, that perfectly coincides with the replica symmetric one of the Sherrington-Kirkpatrick P-spin glass presented in \eqref{eq:SK_finite_size_RS} if we set $\beta_1=\b\sqrt{\gamma}$.
\end{remark}

\subsection{1-RSB scenario}

Do the above representations  generalize to a broken replica picture? Yes, in the next subsections we prove that these theorems can be generalized to the 1-RSB scenario:   we show separately the $P=2$ (standard Hopfield in the broken replica regime) and even $P>2$ case (dense networks in the same broken replica regime).

\subsubsection{Case $P=2$ (standard Hopfield reference)}
We claim the following
\begin{theorem}
\label{th:P2_1RSB}
If we set
\begin{equation}
\small
    \begin{array}{lll}
        \bg= 1-(1-\q_2)\beta-\theta\beta(\q_2-\q_1)\,,&&\bsk=\dfrac{\beta\sqrt{\gamma}}{\beta_{G}^{(1)}}\,,
        \\\\
           \bgg=\bg\sqrt{\dfrac{1-\b(1-\q_2)}{\bg+\theta\b\q_1(1-\frac{\q_1}{\q_2})}} \,,&& \bskk=\dfrac{\beta\sqrt{\gamma}}{\beta_{G}^{(2)}}\,,
    \end{array}
    \label{eq:beta_interp_RSB}
\end{equation}
recalling the 1RSB expressions (see e.g. \cite{lindaRSB}) of Hopfield, SK and Spherical model, that read as
\begin{equation}
\footnotesize
\begin{array}{lll}
     \mathcal{A}_{NN}^{(1RSB)}(\beta,\gamma)=&\ln 2 + \frac{1}{\theta}\mathbb{E}_1 \left[\ln \mathbb{E}_2 \cosh^\theta g (\bm J, \bar{m})\right] -\beta\dfrac{\m^2}{2}+\dfrac{\gamma\beta}{2}\dfrac{\q_1}{1-(1-\q_2)\beta-\theta\beta(\q_2-\q_1)}-\dfrac{\gamma\beta}{2}\p_2(1-\q_2)
     \\\\
     &-\dfrac{\gamma\beta}{2}\theta(\p_2\q_2-\p_1\q_1)+\dfrac{\gamma}{2\theta}\ln{\left[1+\theta\dfrac{\beta(\q_2-\q_1)}{1-(1-\q_2)\beta-\theta\beta(\q_2-\q_1)}\right]}-\dfrac{\gamma}{2}\ln{\left[1-(1-\q_2)\beta\right]} \,,
\end{array}  
\label{eq:NN_1RSB_glassy}  
\end{equation}
\begin{equation}
\footnotesize
\begin{array}{lll}
     \mathcal{A}_{SK}^{(1RSB)}(\b,\beta_{SK}^{(1)},\beta_{SK}^{(2)})=&\ln 2 + \dfrac{1}{\theta}\mathbb{E}_1 \left[\ln \mathbb{E}_2 \cosh^\theta g_{_{SK}} (\bm J, \bar{m})\right] -\dfrac{\b}{2}\m^2+\dfrac{\left(\bskk\right)^2}{4}-\dfrac{\left(\bskk\right)^2}{2}\q_{SK}^{(2)}+{4}
     \\\\
     &+\theta\dfrac{\left(\bsk\q_{SK}^{(1)}\right)^2}+(1-\theta)\dfrac{\left(\bskk\q_{SK}^{(2)}\right)^2}{4} \,,
\end{array}
\label{eq:sk_1RSB_2}
\end{equation}
\begin{equation}
\footnotesize
\begin{array}{lll}
     \mathcal{A}_{Gauss}^{(1RSB)}(\bg,\bgg)=&-\theta\dfrac{\left(\bgg\q_{G}^{(2)}\right)^2}{4}+\theta\dfrac{\left(\bg\q_{G}^{(1)}\right)^2}{4}+\dfrac{\left(\bgg\q_{G}^{(2)}\right)^2}{4}-\dfrac{1}{2}\ln{\left[1-\lambda+\left(\bgg\right)^2\q_{G}^{(2)}\right]}
     \\\\
     &+\dfrac{1}{2\theta}\ln{\left[1+\theta\dfrac{\left(\bgg\right)^2\p_{G}^{(2)}-\left(\bg\right)^2\q_{G}^{(1)}}{1-\lambda+\left(\bgg\right)^2\q_{G}^{(2)}-\theta\left(\bgg\right)^2\q_{G}^{(2)}+\theta\left(\bg\right)^2\q_{G}^{(1)}}\right]}
     \\\\
     &+\dfrac{1}{2}\dfrac{\left(\bg\right)^2\q_{G}^{(1)}}{1-\lambda+\left(\bgg\right)^2\q_{G}^{(2)}-\theta\left(\bgg\right)^2\q_{G}^{(2)}+\theta\left(\bg\right)^2\q_{G}^{(1)}}\,,
\end{array}
\label{eq:Gauss_1RSB_2}
\end{equation}
where we used
\begin{equation}
\footnotesize
    \begin{array}{lll}
         g_{_{NN}}(\mathbf{J},\m)=&\beta\m+J^{(1)}\beta\dfrac{\sqrt{\gamma\q_1}}{1-(1-\q_2)\beta-\theta\beta(\q_2-\q_1)}+J^{(2)}\beta\sqrt{\dfrac{\gamma(\q_2-\q_1)}{[1-(1-\q_2)\beta-\theta\beta(\q_2-\q_1)][1-(1-\q_2)\beta]}}\,,
         \\\\
g_{_{SK}}(\mathbf{J},\m)=&\b\m+J^{(1)}\sqrt{\left(\bsk\right)^2\q_{SK}^{(1)}}+J^{(2)}\sqrt{\left[\left(\bskk\right)^2\q_{SK}^{(2)}-\left(\bsk\right)^2\q_{SK}^{(1)}\right]}\,.
    \end{array}
\end{equation}
We can have the following representation of the broken replica quenched statistical pressure of the Hopfield neural network:
\begin{equation}
    \mathcal{A}^{(1RSB)}_{NN}(\b,\gamma)=\mathcal{A}^{(1RSB)}_{SK}(\b,\bsk,\bskk)+\gamma\mathcal{A}^{(1RSB)}_{Gauss}(\b,\bg,\bgg)-\dfrac{\left(\bskk\right)^2}{4} \,.
\end{equation}
\end{theorem}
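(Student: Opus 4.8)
The plan is a direct verification. We have explicit closed forms for all three $1$-RSB quenched pressures involved --- the dense (here $P=2$) Hebbian network \eqref{eq:NN_1RSB_glassy}, the Sherrington--Kirkpatrick model \eqref{eq:sk_1RSB_2}, and the Gaussian (spherical) model \eqref{eq:Gauss_1RSB_2} --- so the strategy is to substitute the noise maps \eqref{eq:beta_interp_RSB} into the right-hand side $\mathcal{A}^{(1RSB)}_{SK}(\b,\bsk,\bskk)+\gamma\,\mathcal{A}^{(1RSB)}_{Gauss}(\b,\bg,\bgg)-\tfrac14(\bskk)^2$ and show, block by block, that it collapses onto $\mathcal{A}^{(1RSB)}_{NN}(\b,\gamma)$. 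The preliminary and conceptually central step is to check order-parameter compatibility: inserting \eqref{eq:beta_interp_RSB} into the self-consistency equations of the SK model one recovers exactly the fixed-point equations \eqref{Senza1RSBa}--\eqref{Senza1RSBb} for $\q_1,\q_2$ of the network (this is the $1$-RSB counterpart of the observation, already made in Section \ref{Sezione2}, that the overlap self-consistencies of the DHN coincide with those of the hard spin glass), and likewise the Gaussian-model self-consistencies reproduce those for $\p_1,\p_2$. One is then entitled to identify $\q_{SK}^{(a)}\equiv\q_a$ and $\q_{G}^{(a)}\equiv\p_a$ throughout.

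Second, I would match the quenched-entropy block. Since $\mathcal{A}^{(1RSB)}_{Gauss}$ carries no $\mathbb{E}_1\ln\mathbb{E}_2\cosh^\theta(\cdot)$ contribution, the whole term $\ln 2+\tfrac1\theta\mathbb{E}_1\ln\mathbb{E}_2\cosh^\theta g_{NN}(\bm J,\m)$ of \eqref{eq:NN_1RSB_glassy} must be supplied entirely by $\mathcal{A}^{(1RSB)}_{SK}$. It therefore suffices to verify $g_{SK}(\bm J,\m)=g_{NN}(\bm J,\m)$ in distribution, i.e. equality of the deterministic mean $\b\m$ (immediate) and of the two variances,
\begin{equation}
(\bsk)^2\q_{SK}^{(1)}=\frac{\b^2\gamma\,\q_1}{(\bg)^2},\qquad (\bskk)^2\q_{SK}^{(2)}-(\bsk)^2\q_{SK}^{(1)}=\frac{\b^2\gamma\,(\q_2-\q_1)}{\bg\,[1-\b(1-\q_2)]}.
\end{equation}
The first is automatic from $\bsk=\b\sqrt\gamma/\bg$ together with $\q_{SK}^{(1)}=\q_1$; the second is exactly the defining relation for $\bgg$ (equivalently $\bskk=\b\sqrt\gamma/\bgg$) in \eqref{eq:beta_interp_RSB}, which is precisely how that square-root combination is forced. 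With the entropic blocks matched I would subtract them from both sides.

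Third, what remains is a rational identity in $\b,\gamma,\theta,\q_1,\q_2$. On the network side the leftover is $-\b\m^2/2$, the four ``$\gamma\b$'' terms and the two logarithmic terms of \eqref{eq:NN_1RSB_glassy}; on the other side it is $-\b\m^2/2$ (matching the magnetization term), the quadratic-in-$\q$ remnants of \eqref{eq:sk_1RSB_2} (after the explicit $-\tfrac14(\bskk)^2$ has cancelled the $+\tfrac14(\bskk)^2$ inside $\mathcal{A}^{(1RSB)}_{SK}$), and $\gamma$ times the four terms of \eqref{eq:Gauss_1RSB_2}. Using $\bsk=\b\sqrt\gamma/\bg$, $\bskk=\b\sqrt\gamma/\bgg$ and $\q^{(a)}_{SK}=\q_a$, $\q^{(a)}_{G}=\p_a$, the factors of $\gamma$ rescale correctly (e.g. $\gamma\cdot\tfrac1{2\theta}\ln[\cdots]\to\tfrac{\gamma}{2\theta}\ln[\cdots]$ as in \eqref{eq:NN_1RSB_glassy}); the Lagrange multiplier $\lambda$ of the spherical model gets fixed by demanding that the arguments of the two Gaussian logarithms reduce to $1-\b(1-\q_2)$ and to $1+\theta\b(\q_2-\q_1)/[1-(1-\q_2)\b-\theta\b(\q_2-\q_1)]$ respectively; and the four ``$\gamma\b$'' energetic terms then fall out of the non-logarithmic quadratic terms of $\mathrm{SK}+\mathrm{Gauss}$. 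The same computation carries over verbatim to the dense $P>2$ case of the next subsection, with $\q_a$ replaced by the powers $\q_a^{P/2}$, $\q_a^{P-1}$ dictated by \eqref{eq:b1_b2}.

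The main obstacle is precisely this last bookkeeping of the logarithmic and nested-denominator terms of the Gaussian $1$-RSB pressure \eqref{eq:Gauss_1RSB_2}: its denominators of the form $1-\lambda+(\bgg)^2\q_G^{(2)}-\theta(\bgg)^2\q_G^{(2)}+\theta(\bg)^2\q_G^{(1)}$ must be shown to collapse --- once $\bg,\bgg$ are written out via \eqref{eq:beta_interp_RSB} and $\q_G^{(a)}=\p_a$ --- to $\bg$ and to $1-\b(1-\q_2)$, and this is exactly where the precise square-root structure of $\bgg$ does all the work; getting the signs and the $\theta$-dependent cross terms to cancel is the delicate part. A cleaner but heavier alternative would be to bypass the closed forms altogether and build a direct Guerra-type interpolation between the network partition function and the product of an SK and a Gaussian partition function, controlling the vanishing of the $t$-derivative as in \cite{Guerra}; since the explicit $1$-RSB formulas are already in hand, the term-by-term route above is the economical one.
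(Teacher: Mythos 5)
Your proposal is correct and follows essentially the same route as the paper's own proof: a direct substitution of the noise maps \eqref{eq:beta_interp_RSB} into the SK and Gaussian $1$-RSB pressures, matching the effective fields $g_{_{SK}}=g_{_{NN}}$ (equivalently, the relations $(\bsk)^2\q_{SK}^{(1)}=\gamma\b\p_1$, $(\bskk)^2\q_{SK}^{(2)}=\gamma\b\p_2$, $(\bg)^2\q_{G}^{(1)}=\b\q_1$, $(\bgg)^2\q_{G}^{(2)}=\b\q_2$ with $\lambda=\b$), followed by term-by-term algebraic verification of the remaining rational identity.
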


\begin{proof}
Starting with the 1-RSB SK model with the two noise level $\bsk$ and $\bskk$ defined in \eqref{eq:sk_1RSB_2}, using the definitions \eqref{eq:beta_interp_RSB}, we can verify the following relations
\begin{equation}
\footnotesize
    \begin{array}{lll}
         \left(\bsk\right)^2\q_{SK}^{(1)}=\gamma\b\p_1  \,,&& \left(\bskk\right)^2\q_{SK}^{(2)}=\gamma\b\p_2\,.
    \end{array}
\end{equation}
Putting these equations into \eqref{eq:sk_1RSB_2}, we get
\begin{equation}
\footnotesize
\begin{array}{lll}
     \mathcal{A}_{SK}^{(1RSB)}(\b,\bsk,\bskk)=&\ln 2 + \dfrac{1}{\theta}\mathbb{E}_1 \left[\ln \mathbb{E}_2 \cosh^\theta g_{_{(NN)}} (\bm J, \bar{m})\right] +\dfrac{\left(\bskk\right)^2}{4}-\dfrac{\b}{2}\m^2-\dfrac{\b\gamma}{2}\p_2
     \\\\
     & +(1-\theta)\dfrac{\b\gamma}{4}\p_2\q_2+\theta\dfrac{\b\gamma}{4}\p_1\q_1.
\end{array} 
\label{eq:SK_1RSB_final_2}
\end{equation}
Now using the 1-RSB quenched pressure for the soft model with the two noise $\bg$ and $\bgg$ (Eq. \ref{eq:Gauss_1RSB_2}), if we set $\lambda=\b$, using the definition \eqref{eq:beta_interp_RSB} we can verify the following relations
\begin{equation}
\footnotesize
    \begin{array}{lll}
         \left(\bg\right)^2\q_{G}^{(1)} =\b\q_1 && \left(\bgg\right)^2\q_{G}^{(2)}=\b\q_2
    \end{array}
\end{equation}
so the expression \eqref{eq:Gauss_1RSB_2} for the soft model quenched pressure becomes
\begin{equation}
\footnotesize
\begin{array}{lll}
     \mathcal{A}_{Gauss}^{(1RSB)}(\beta)=&-\theta\dfrac{\b}{4}\q_2\p_2+\theta\dfrac{\b}{4}\q_1\p_1+\dfrac{\b}{4}\q_2\p_2+\dfrac{1}{2}\dfrac{\b\q_1}{1-\b(1-\q_2)-\theta\b(\p_2-\p_1)}
     \\\\
     &+\dfrac{1}{2\theta}\ln{\left[1+\theta\dfrac{\b(\q_2-\q_1)}{1-\b(1-\q_2)-\theta\b(\p_2-\p_1)}\right]}-\dfrac{1}{2}\ln{\left[1-\b(1-\q_2)\right]}\,.
\end{array}
\label{eq:Gauss_1RSB_final_2}
\end{equation}
Now, using the equations \eqref{eq:SK_1RSB_final_2} and \eqref{eq:Gauss_1RSB_final_2}, if we compute $\mathcal{A}^{(1RSB)}_{SK}+\gamma\mathcal{A}^{(1RSB)}_{Gauss}-\frac{\left(\beta_{SK}^{(2)}\right)^2}{4}$ we get the proof.
\end{proof}

\subsubsection{Case $P>2$ (dense Hebbian network)}
We claim the following
\begin{theorem}
If we set
\begin{equation}
\footnotesize
    \begin{array}{lll}
        \bg= 1-(1-\q_2^{P/2})\b N^{1-P/2}-\theta \b N^{1-P/2}(\q_2^{P/2}-\q_1^{P/2})\,,&&  \bsk=\dfrac{\beta \sqrt{\gamma}}{\beta_{G}^{(1)}}\,,
        \\\\
        \bgg=\bg\sqrt{\dfrac{1-\b N^{1-P/2}(1-\q_2^{P/2})}{\bg+\theta\b N^{1-P/2}\left(1-\frac{\q_1^{P/2}}{\q_2^{P/2}}\right)}}\,,
          && \bskk=\dfrac{\beta \sqrt{\gamma}}{\beta_{G}^{(2)}} \,,
    \end{array}
    \label{eq:beta_interp_RSB_Pspin}
\end{equation}
recalling the finite size Guerra's expression for the 1RSB quenched statistical pressure of the Hopfield, hard P-spin glass and soft P-spin glass model, that read as
\begin{equation}
\footnotesize
\begin{array}{lll}
   &  \mathcal{A}_{NN}^{(P)(1RSB)}(\b,\gamma) = \ln 2-\dfrac{\beta '\gamma}{2}N^{P/2-1}+\dfrac{1}{\theta} \mathbb{E}_1  \left[\ln \mathbb{E}_2  \cosh^\theta g_{_{NN}}(\boldsymbol{J},\m)\right] -(P-1) \dfrac{\beta^{'}}{2} \bar{m}^P-\dfrac{\b\gamma}{4}P\p_2\q_2^{P/2-1}
      \\\\
&+\dfrac{\gamma\b}{2}\dfrac{N^{P/2-1}\q_1^{P/2} }{ 1-\beta^{'}N^{1-P/2}(1-\q_2^{P/2})-\theta \beta^{'}N^{1-P/2}(\q_2^{P/2} - \q_1^{P/2})}  +\dfrac{\gamma N^{P-2}}{2}\ln \left[1-\beta^{'}N^{1-P/2}(1-\bar{q}_2^{P/2})\right]  
\\\\
&+ \dfrac{\gamma N^{P-2}}{2\theta} \ln\left[\dfrac{1-\beta^{'}N^{1-P/2}(1-\bar{q}_2^{P/2})}{1-\beta^{'}N^{1-P/2}(1-\bar{q}_2^{P/2})-\theta \beta^{'}N^{1-P/2}(\q_2^{P/2} - \q_1^{P/2})}\right]-{\beta^{'} \gamma}(\theta-1) \dfrac{P}{4}\q_2^{P/2}\p_2
\\\\
      & +{\beta^{'} \gamma}\theta \dfrac{P}{4}\q_1^{P/2}\p_1+V_{N}^{(NN)(1RSB)}\,,
\label{eq:P_spin_Hopfield_1RSB}
\end{array}
\end{equation}
\vspace*{0.5cm}
\begin{equation}
\footnotesize
\begin{array}{lll}
     &\mathcal{A}_{SK}^{(P)(1RSB)}(\b,\beta_{SK}^{(1)},\beta_{SK}^{(2)})=\ln 2 + \dfrac{1}{\theta}\mathbb{E}_1 \left[\ln \mathbb{E}_2 \cosh^\theta g_{_{SK}} (\bm J, \bar{m})\right] -\b\dfrac{P-1}{2}\m^P+\dfrac{\left(\bskk\right)^2}{4}
     \\\\
     &+\theta(P-1)\dfrac{\left(\bsk\right)^2}{4}\left(\q_{SK}^{(1)}\right)^P-P\dfrac{\left(\bskk\right)^2}{4}\left(\q_{SK}^{(2)}\right)^{P-1} +(1-\theta)(P-1)\dfrac{\left(\bskk\right)^2}{4}\left(\q_{SK}^{(2)}\right)^P+V_N^{(SK)(1RSB)}\,,
\end{array}
\label{eq:SK_finite_size_1RSB}    
\end{equation}
\\\vspace*{0.5cm}
\begin{equation}
\footnotesize
\begin{array}{lll}
&\mathcal{A}_{Gauss}^{(P)(1RSB)}(\lambda,\bg,\bgg) =\dfrac{1}{2\theta} \log \left[\dfrac{1-\lambda + {(\b)}^2 \frac{P}{2}\left(\q_{G}^{(2)}\right)^{P-1}}{1-\lambda + \left(\bgg\right)^2\frac{P}{2} \left(\q_{G}^{(2)}\right)^{P-1} (1-\theta) +\left(\bg\right)^2\frac{P}{2}\left(\q_{G}^{(1)}\right)^{P-1} } \right] 
\\\\
&+ \dfrac{{\b}^2 \frac{P}{4}\left(\q_{G}^{(1)}\right)^{P-1}}{1-\lambda + \left(\bgg\right)^2\frac{P}{2} \left(\q_{G}^{(2)}\right)^{P-1} (1-\theta) +\left(\bg\right)^2\frac{P}{2}\left(\q_{G}^{(1)}\right)^{P-1}} - \dfrac{1}{2} \log \left(1-\lambda + \left(\bgg\right)^2 \frac{P}{2}\left(\q_{G}^{(2)}\right)^{P-1} \right)
 \\\\
&+(1-\theta)\dfrac{P-1}{4}\left(\bgg\right)^2\left(\q_{G}^{(2)}\right)^P+\theta\dfrac{P-1}{4}\left(\bg\right)^2\left(\q_{G}^{(1)}\right)^P +V_{N}^{(Gauss)(1RSB)}\,,
\\\\
\end{array}
\label{eq:soft_1RSB_P}
\end{equation} 
where we scaled $N^{P/2-1}\p\to\p$ and we used
\begin{equation}
\label{eq:g_SK_1RSB}
\footnotesize
\begin{array}{lll}
g_{_{NN}}(\bm J, \bar{m})=\dfrac{\beta^{'}P}{2}\bar{m}^{P-1} + J^{(1)}\sqrt{\dfrac{\beta^{'}}{2}\gamma \bar{p}_1P \bar{q}_1^{P/2-1}} +J^{(2)}\sqrt{ \dfrac{\beta^{'}}{2} P\gamma \left[ \bar{p}_2 \bar{q}_2^{P/2-1} -\bar{p}_1 \bar{q}_1^{P/2-1}  \right]}\,,
\\\\
      g_{_{SK}}(\mathbf{J},\m)=\b\dfrac{P}{2}\m^{P-1}+J^{(1)}\sqrt{\left(\bsk\right)^2\dfrac{P}{2}\left(\q_{SK}^{(1)}\right)^{P-1}}+J^{(2)}\sqrt{\dfrac{P}{2}\left[\left(\bskk\right)^2\left(\q_{SK}^{(2)}\right)^{P-1}-\left(\bsk\right)^2\left(\q_{SK}^{(1)}\right)^{P-1}\right]}\,.
\end{array}
\end{equation}
We can have the following finite size representation of the broken replica quenched statistical pressure of the dense Hebbian network:
\begin{equation}
\footnotesize
\begin{array}{lll}
    \mathcal{A}^{(P)(1RSB)}_{NN}(\b,\gamma)=&\mathcal{A}^{(P)(1RSB)}_{SK}(\b,\bsk.\bskk)-\dfrac{\left(\bskk\right)^2}{4}+\gamma  N^{P-2}\mathcal{A}^{(P)(1RSB)}_{Gauss}(\b,\bg,\bgg)+
    \\\\
    &+\gamma\b \dfrac{2-P}{4P}\left[\p_2\q_2^{P/2}-\theta\left(\p_2\q_2^{P/2}-\p_1\q_1^{P/2}\right)\right]
    \\\\
    &-\Big(V_N^{(SK)(1RSB)}+\gamma N^{P-2}V_N^{(Gauss)(1RSB)}-V_N^{(NN)(1RSB)}\Big)
\end{array}    
\label{eq:interpolation_1RSB_P}
\end{equation}
where the 1-RSB quenched pressure at finite size for $P\geq 2$ of Hopfield, SK and Gaussian spin glass models are indicated respectively with $\mathcal{A}^{(P)(1RSB)}_{NN}$, $\mathcal{A}^{(P)(1RSB)}_{SK}$ and $\mathcal{A}^{(P)(1RSB)}_{Gauss}$, and we scaled $N^{P/2-1}\p_{1/2}$ as $\p_{1/2}$.
\end{theorem}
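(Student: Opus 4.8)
The plan is to prove \eqref{eq:interpolation_1RSB_P} exactly as Theorem~\ref{th:P2_1RSB} is proven for $P=2$ and as the replica-symmetric identity \eqref{eq:SK_Gauss} is obtained for $P>2$: substitute the noise parametrisation \eqref{eq:beta_interp_RSB_Pspin} into the finite-volume Guerra expressions \eqref{eq:SK_finite_size_1RSB} and \eqref{eq:soft_1RSB_P}, identify each resulting block with a block of \eqref{eq:P_spin_Hopfield_1RSB}, and read off the leftover as the announced anomaly plus the potential reshuffling. Concretely I would proceed in three moves: (i) rewrite $\mathcal{A}^{(P)(1RSB)}_{SK}(\b,\bsk,\bskk)$ in the dense-network variables; (ii) do the same for $\gamma N^{P-2}\mathcal{A}^{(P)(1RSB)}_{Gauss}(\lambda,\bg,\bgg)$ with $\lambda$ fixed as in the replica-symmetric case; (iii) sum, subtract $\left(\bskk\right)^2/4$, and collect.

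For move (i), the identities to check out of \eqref{eq:beta_interp_RSB_Pspin} and the self-consistencies of Corollary~\ref{cor:SC_HOP_1RSB}, with $\q_{SK}^{(a)}$ identified with $\q_a$, are
\begin{equation}
\left(\bsk\right)^2\big(\q_{SK}^{(1)}\big)^{P-1}=\b\gamma\,\p_1\q_1^{P/2-1},\qquad \left(\bskk\right)^2\big(\q_{SK}^{(2)}\big)^{P-1}=\b\gamma\,\p_2\q_2^{P/2-1}.
\end{equation}
These make $g_{_{SK}}(\bm J,\m)$ in \eqref{eq:g_SK_1RSB} equal to $g_{_{NN}}(\bm J,\m)$, so the $\tfrac1\theta\mathbb{E}_1\ln\mathbb{E}_2\cosh^\theta$ block of \eqref{eq:SK_finite_size_1RSB} reproduces verbatim that of \eqref{eq:P_spin_Hopfield_1RSB}, and the remaining terms of \eqref{eq:SK_finite_size_1RSB} become $-(P-1)\tfrac{\b}{2}\m^P$, the $-\tfrac{\b\gamma}{4}P\p_2\q_2^{P/2-1}$ term, the two $\b\gamma(\theta-1)$ and $\b\gamma\theta$ pieces, plus $V_N^{(SK)(1RSB)}$. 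For move (ii), choosing $\lambda=\b N^{1-P/2}$ (the prescription already used in \eqref{eq:SK_Gauss}) and using $\q_G^{(a)}$ identified with $\p_a$ together with \eqref{eq:beta_interp_RSB_Pspin}, one verifies the analogous relations for $\left(\bg\right)^2\big(\q_{G}^{(1)}\big)^{P-1}$ and $\left(\bgg\right)^2\big(\q_{G}^{(2)}\big)^{P-1}$; this collapses the two logarithms and the rational term of \eqref{eq:soft_1RSB_P} onto the $\tfrac{\gamma N^{P-2}}{2}\ln[\cdots]$, $\tfrac{\gamma N^{P-2}}{2\theta}\ln[\cdots]$ and $\tfrac{\gamma\b}{2}\tfrac{N^{P/2-1}\q_1^{P/2}}{\cdots}$ contributions of \eqref{eq:P_spin_Hopfield_1RSB}, while the two $(P-1)$-quartic terms of \eqref{eq:soft_1RSB_P} turn into quadratics in $\p_a\q_a^{P/2}$ that partially cancel against their SK counterparts.

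Move (iii) then adds these, subtracts $\left(\bskk\right)^2/4$, and matches against \eqref{eq:P_spin_Hopfield_1RSB}. Everything cancels except a residue produced by the mismatch between the soft-model energy, which carries $\tfrac P2$- and $P$-th powers of $\q_G$, and the $\p\,\q^{P/2-1}$-type structures native to the dense network; tracking coefficients, this residue is exactly $\gamma\b\,\tfrac{2-P}{4P}\big[\p_2\q_2^{P/2}-\theta(\p_2\q_2^{P/2}-\p_1\q_1^{P/2})\big]$, the 1-RSB counterpart of the $\gamma N^{P-2}\tfrac{2-P}{4P}(\cdots)^2$ anomaly of \eqref{eq:SK_Gauss}, and the leftover fluctuation contributions assemble into $-\big(V_N^{(SK)(1RSB)}+\gamma N^{P-2}V_N^{(Gauss)(1RSB)}-V_N^{(NN)(1RSB)}\big)$, which is the last line of \eqref{eq:interpolation_1RSB_P}. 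The main obstacle is purely bookkeeping: keeping track of the several logarithmic and rational blocks after the $\lambda$- and $\bsk,\bskk$-substitutions, and pinning the $\tfrac{2-P}{4P}$ coefficient of the anomaly — which must vanish at $P=2$, recovering Theorem~\ref{th:P2_1RSB}, and which, together with the 1-RSB analogue of Remark~\ref{rm:thermod_limit_potential_RS}, is precisely what forces the Gaussian sector to disappear as $N\to\infty$, leaving $\mathcal{A}^{(P)(1RSB)}_{NN}(\b,\gamma)=\mathcal{A}^{(P)(1RSB)}_{SK}(\b,\b\sqrt\gamma,\b\sqrt\gamma)$ for $P>2$.
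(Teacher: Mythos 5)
Your proposal is correct and follows essentially the same route as the paper: the paper's own proof simply repeats the steps of Theorem \ref{th:P2_1RSB} with the new noise definitions \eqref{eq:beta_interp_RSB_Pspin}, i.e.\ it verifies the identities $\left(\bsk\right)^2\big(\q_{SK}^{(1)}\big)^{P-1}=\b\gamma\,\p_1\q_1^{P/2-1}$ and $\left(\bskk\right)^2\big(\q_{SK}^{(2)}\big)^{P-1}=\b\gamma\,\p_2\q_2^{P/2-1}$ (so that $g_{_{SK}}=g_{_{NN}}$), sets $\lambda=\b N^{1-P/2}$ in the Gaussian sector, and collects the residue into the $\gamma\b\frac{2-P}{4P}[\cdots]$ anomaly and the potential terms, exactly as you describe. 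Your write-up is in fact more explicit than the paper's one-line proof about which blocks match and where the anomaly coefficient comes from.
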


\begin{proof}
The generalization to $P >2$ is obtained following the same steps presented in the proof for $P=2$ in Theorem \ref{th:P2_1RSB}, but taking care of using the new definitions of the noise in \eqref{eq:beta_interp_RSB_Pspin}.
\end{proof}

\begin{corollary}
In  the thermodynamic limit, for the case  of $P >2$, also in a broken replica symmetry framework, the  dense Hebbian network's quenched statistical  pressure is equivalent to that of the hard P-spin glass model.
\end{corollary}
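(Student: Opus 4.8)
The plan is to replay, step by step, the argument already used for the replica-symmetric corollary, this time starting from the finite-size 1-RSB representation \eqref{eq:interpolation_1RSB_P} and letting $N\to\infty$ with $P>2$ fixed, so that $N^{1-P/2}\to 0$. First I would take the limit of the noise levels defined in \eqref{eq:beta_interp_RSB_Pspin}: every correction there carries a factor $N^{1-P/2}$, hence $\bg\to 1$ and, since both the numerator and the denominator inside the square root tend to $1$, also $\bgg\to 1$; consequently $\bsk=\b\sqrt\gamma/\bg\to\b\sqrt\gamma$ and $\bskk=\b\sqrt\gamma/\bgg\to\b\sqrt\gamma$, so the two Sherrington--Kirkpatrick noise levels collapse onto the single value $\b\sqrt\gamma$. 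Next, by the 1-RSB analogue of Remark \ref{r:above} applied to all three models (the central moments of order $\ge 2$ of the order parameters vanish in the thermodynamic limit -- a property available for the hard and soft $P$-spin glasses from the cited literature), the potential terms $V_N^{(NN)(1RSB)}$, $V_N^{(SK)(1RSB)}$ and $V_N^{(Gauss)(1RSB)}$ all drop out of \eqref{eq:interpolation_1RSB_P}.

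The core of the proof is then to establish that the soft (Gaussian/spherical) contribution collapses: one must show that $\gamma N^{P-2}\mathcal{A}^{(P)(1RSB)}_{Gauss}(\b,\bg,\bgg)$, together with the explicit remainders $-\tfrac14(\bskk)^2$ and $\gamma\b\,\tfrac{2-P}{4P}\big[\p_2\q_2^{P/2}-\theta(\p_2\q_2^{P/2}-\p_1\q_1^{P/2})\big]$ appearing in \eqref{eq:interpolation_1RSB_P}, contributes -- in the thermodynamic limit, for $P>2$ -- nothing beyond reconstructing the $N$-divergent prefactors already present inside $\mathcal{A}_{NN}^{(P)(1RSB)}$ (see \eqref{eq:P_spin_Hopfield_1RSB}), leaving a vanishing finite remainder. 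Concretely I would fix $\lambda=\b N^{1-P/2}$ in \eqref{eq:soft_1RSB_P}, use the identifications of the Gaussian and Sherrington--Kirkpatrick overlaps with the DHN order parameters forced by \eqref{eq:beta_interp_RSB_Pspin} (with the $N$-rescalings used throughout, e.g. $N^{P/2-1}\p\to\p$, so that in the limit $\q_{SK}^{(i)}=\q_i$, $\q_G^{(i)}=\p_i=\b\q_i^{P/2}$ and $g_{NN}$ of \eqref{eq:g_SK_1RSB} becomes the one-body field $g_{SK}$ at noise $\b\sqrt\gamma$), and then expand the two logarithms and the two rational terms of $\mathcal{A}^{(P)(1RSB)}_{Gauss}$ to the order in $N^{1-P/2}$ that survives the prefactor $\gamma N^{P-2}$. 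This bookkeeping -- matching several individually divergent pieces whose combination is finite, while keeping the normalisation $\b=2\beta/P!$ straight throughout -- is the main obstacle, and it is the exact 1-RSB counterpart of the cancellations already carried out for the RS corollary.

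Once the soft block is shown to contribute nothing in the limit, the identity \eqref{eq:interpolation_1RSB_P} reduces to $\mathcal{A}^{(P)(1RSB)}_{NN}(\b,\gamma)=\mathcal{A}^{(P)(1RSB)}_{SK}(\b,\b\sqrt\gamma,\b\sqrt\gamma)$; since the two broken-replica noise levels coincide, the right-hand side is precisely the 1-RSB quenched statistical pressure of the hard $P$-spin Sherrington--Kirkpatrick model at noise level $\b\sqrt\gamma$, which is the claim. As an independent check -- mirroring the remark following the RS corollary -- one compares the 1-RSB self-consistency equations \eqref{Senza1RSBa}--\eqref{Senza1RSBb} for $\q_1,\q_2$ with those obtained by extremising \eqref{eq:SK_finite_size_1RSB}: once $\p_i=\b\q_i^{P/2}$ and the identification of $g_{NN}$ with $g_{SK}$ at noise $\b\sqrt\gamma$ are inserted, the two systems of equations coincide term by term, exactly as already observed for $P>2$ in the replica-symmetric scenario.
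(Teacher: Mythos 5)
Your proposal follows essentially the same route as the paper's own proof: take $N^{1-P/2}\to 0$ so that $\bsk,\bskk\to\b\sqrt\gamma$, drop the vanishing 1-RSB potentials, set $\lambda=\b N^{1-P/2}$, use the forced identifications of the SK and Gaussian overlaps with the DHN order parameters, and verify that $\gamma N^{P-2}\mathcal{A}^{(P)(1RSB)}_{Gauss}$ together with the explicit remainder terms of \eqref{eq:interpolation_1RSB_P} cancels in the limit, leaving $\mathcal{A}^{(P)(1RSB)}_{NN}(\b,\gamma)=\mathcal{A}^{(P)(1RSB)}_{SK}(\b,\b\sqrt\gamma,\b\sqrt\gamma)$. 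The "bookkeeping" you defer is precisely the pair of explicit limits the paper computes (the Gaussian block and the $\frac{2-P}{4P}$ term converging to mutually cancelling expressions), so the argument is correct and matches the paper's.
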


\begin{proof}
To simplify the notation in this proof, since all the formulas will refer to the 1-RSB assumption, from now on we will omit the label \textit{1-RSB} for the sake of clearness.
\\
First of all, we can verify that, as $P>2$ in the thermodynamic limit, from the definitions of noise \eqref{eq:beta_interp_RSB_Pspin}, we get
\begin{equation}
    \footnotesize
    \begin{array}{lll}
         \bsk\to \b\sqrt{\gamma}\,,&& \bskk\to\b\sqrt{\gamma} \,.
    \end{array}
\end{equation}
From the quenched pressure, in the 1-RSB assumption, of the SK model with $P$ spin interactions  presented in \eqref{eq:SK_finite_size_1RSB}-\eqref{eq:g_SK_1RSB}, if we scaled, as usual, $N^{P/2-1}\p_{1/2}$ as $\p_{1/2}$, we can find the relations
\begin{equation}
\footnotesize
    \begin{array}{lll}
         \left(\bsk\right)^2\left(\q_{SK}^{(1)}\right)^{P-1}=\gamma\b \p_1\q_1^{P/2-1}\, , && \left(\bskk\right)^2\left(\q_{SK}^{(2)}\right)^{P-1}=\gamma\b \p_2\q_2^{P/2-1}\,.
    \end{array}
\end{equation}
Thus, (remembering that, in the thermodynamic limit, $V_N^{(NN)}$, $V_N^{(SK)}$ and $V_N^{(Gauss)}$ will vanish) for the 1-RSB assumption the SK quenched pressure reads as
\begin{equation}
\footnotesize
\begin{array}{lll}
     \mathcal{A}_{SK}^{(P)}(\b,\b\sqrt{\gamma},\b\sqrt{\gamma})=&\ln 2 + \dfrac{1}{\theta}\mathbb{E}_1 \left[\ln \mathbb{E}_2 \cosh^\theta g_{_{NN}} (\bm J, \bar{m})\right] -\dfrac{\gamma\b}{4}\q_2^{P/2-1}\p_2\Big(P-(P-1)\q_2\Big)
     \\\\
     &-\b\dfrac{P-1}{2}\m^P-\theta(P-1)\dfrac{\b\gamma}{4}(\q_2^{P/2}\p_2-\q_1^{P/2}\p_1)+\dfrac{\beta'\,^2\gamma}{4}.
\end{array}    
\end{equation}
Focusing, now, on the soft model in the 1-RSB assumption in the case of P-spin interactions (Eq. \ref{eq:soft_1RSB_P}), if we set $\lambda=\b N^{1-P/2}$,  using the definition \eqref{eq:beta_interp_RSB_Pspin}, we can get the relations
\begin{equation}
\footnotesize
    \begin{array}{lll}
         \left(\bg\right)^2\frac{P}{2}\left(\q_{G}^{(1)}\right)^{P-1} =N^{1-P/2}\b\q_1^{P/2} \,,&& \left(\bgg\right)^2\frac{P}{2}\left(\q_{G}^{(2)}\right)^{P-1}=N^{1-P/2}\b\q_2^{P/2}\,.
    \end{array}
\end{equation}
Thus, in the thermodynamic limit, for $P>2$ the quenched pressure for the soft model in the 1-RSB assumption reads as
\begin{equation}
\footnotesize
\begin{array}{lll}
     \gamma N^{P-2}\mathcal{A}^{(P)}_{Gauss}(\beta,\bg,\bgg)-\dfrac{\gamma\b}{2}N^{P/2-1}\xrightarrow[P>2]{\;\;N\to\infty\;\;} &
     \gamma\beta'^2\dfrac{1}{4}\q_2^P-\gamma\beta'^2\dfrac{1}{4}\theta (\q_2^{P}-\q_1^{P})+
     \\\\
     &+\gamma\beta '^2\dfrac{1}{2P}\theta (\q_2^{P}-\q_1^{P})-\gamma\beta '^2\dfrac{1}{2P}\q_2^{P}.
\end{array}
\end{equation}
Moreover, the last term of \eqref{eq:interpolation_1RSB_P} becomes 
\begin{equation}
\footnotesize
\begin{array}{lll}
    \gamma\b N^{P/2-1}\dfrac{2-P}{4P}\left[\p_2\q_2^{P/2}-\theta\left(\p_2\q_2^{P/2}-\p_1\q_1^{P/2}\right)\right]\xrightarrow[P>2]{\;\;N\to\infty\;\;}&-\gamma\beta '^2 \dfrac{1}{4}\q_2^{P}+\gamma\beta '^2 \dfrac{1}{4}\theta\left(\q_2^{P}-\q_2^{P}\right)
    \\\\
    & +\gamma\beta '^2 \dfrac{1}{2P}\q_2^{P}-\gamma\beta '^2 \dfrac{1}{2P}\theta\left(\q_2^{P}-\q_2^{P}\right)\,.
\end{array}
\end{equation}
So putting all together in relation \eqref{eq:interpolation_1RSB_P}, for $P >2$ in the thermodynamic limit, we have
\begin{equation}
     \mathcal{A}_{NN}^{(P)}(\b,\gamma)={\mathcal{A}^{(P)}}^{(1RSB)}_{SK}(\b,\b\sqrt{\gamma},\b\sqrt{\gamma})\,.
\end{equation}
\end{proof}

\section{Conclusions and outlooks} \label{conclusions} 

In this paper we focused on replica symmetry breaking in dense Hebbian networks (namely generalized Hopfield networks whose neural dialogues are broader than pairwise) and we gave both mathematical instruments to address this phenomenon as well as physical insights. In particular, regarding the methodology,  we adapted to these dense networks two different approaches, the first consists in constructing effective PDE -the transport equation in particular- in the space of the coupling constants and then relying upon the arsenal of mathematical results available in PDE theory while the second  is a generalization of the celebrated Guerra's interpolation scheme \cite{Guerra,GuerraSum}, more grounded on Probability Theory. Whatever the route, at the end of the calculations, we obtained a set of self-consistent equations for the order parameters whose solutions trace their evolution  in the control parameter space, ultimately allowing the construction of phase diagrams that we provided both at the replica symmetric level and under one step of replica symmetry breaking level of description.
\newline
Restricting to the Baldi-Venkatesh regime \cite{baldi}, i.e. the {\em high storage regime} for dense networks, at first we recovered in the replica symmetric scenario, the Gardner's picture \cite{gardner} (achieved in the eighties via heuristic techniques, i.e. the replica trick) in every detail, even the scaling of the divergence of the critical storage $\gamma_c(P) \sim P/\ln(P)$ as $P \to \infty$, then we inspected the replica symmetry breaking phenomenon just at the first step of symmetry breaking: as expected, the critical storage is mildly affected by RSB, however a glance at the phase diagrams in the two frameworks (RS and 1-RSB) immediately reveals that the spin-glass phase (that naively shrinks to zero in the RS picture) gets stabilized and actually enlarged by the RSB phenomenon.
\newline
Indeed the type of glassiness underlying these networks is rather different w.r.t. the type of glassiness of the standard Hopfield neural network:  while the quenched statistical pressure (or free energy) of the latter can be written (both in the RS and 1-RSB scenarios) as a weighted linear combination of the quenched statistical pressures of two spin glasses, one being the hard spin glass (the Sherrington-Kirkpatrick model) and the other being the soft spin glass (the Gaussian model), this is no longer true in dense networks where the soft contribution disappears: as the Sherrington-Kirkpatrick model is full-RSB (it is the harmonic oscillator for Parisi theory) while the Gaussian model is solely replica symmetric, the disappearance of the soft contribution makes dense networks different w.r.t. the Hopfield reference.  A subtle point is that, in the 1-RSB picture  for neural networks,  the ziqqurat  prescription introduced in \cite{Ziguli1,Ziguli2} (that naturally generalizes Parisi's ansatz to the case) breaks the permutational-invariance both for $q_{12}$ as well as for $p_{12}$: the breaking of self-averaging of $p_{12}$ is not a propriety of the pairwise Gaussian spin-glass per se, but -rather- a consequence of the interactions among these spin glasses (interactions that  shines in the integral representation of the partition function of these Hebbian neural networks, see e.g. eq. (\ref{ducojoni2}), hence for the soft overlap the transition is not spontaneous, but driven by the hard one \cite{Driven} in the shallow limit.
\newline
The inspection of replica symmetry breaking phenomenon in dense networks in the {\em high resolution regime} \cite{PRLNN} -rather than in the {\em high storage regime}-is entirely missing at present: we plan to report in a separate paper investigations in that regime.

\appendix

%

\section{Proof of Theorem One}  

In this appendix we provide the explicit calculations behind the proof of Theorem \ref{cor_carmassimoRS}.
\begin{proof}
Since the potential $V_N(t, \bm r)$ vanishes in the thermodynamical limit, we solve the following transport equation 
\begin{equation}
	\begin{array}{lll}
	     \pder{\mathcal{A}^{(P)}_{\textrm{RS}}}{t}-{\beta '\q^{P/2}} \left(\pder{\mathcal{A}^{(P)}_{\textrm{RS}}}{x}\right)-\dfrac{P}{2}{\beta '\gamma\,\p\q^{P/2-1}} \left(\pder{\mathcal{A}^{(P)}_{\textrm{RS}}}{y}\right)- \beta'(1-\q^{P/2})\left(\pder{\mathcal{A}^{(P)}_{\textrm{RS}}}{z} \right)+
	     \\\\-\dfrac{P}{2}\beta '\m^{P-1} \left(\pder{\mathcal{A}^{(P)}_{\textrm{RS}}}{w}\right)=-\dfrac{P-1}{2}\beta ' \m^P-\dfrac{P\, \beta '\gamma }{4}{\p\q^{^{P/2-1}}(1-\q)},
	\end{array}
	\label{hop_GuerraAction_RSDE}
	\end{equation}
We compute the solution using the characteristic method on the transport equation: 
\begin{align}
    \mathcal{A}^{(P)}_{RS}(t, \bm r) = \mathcal{A}^{(P)}_{RS}(0, \bm r-\dot{\boldsymbol{r}}t) + S(t, \bm r)t.
\end{align}
where $\dot{\boldsymbol{r}}=(\dot{x},\dot{y},\dot{z},\dot{w})$. Along the characteristics, the fictitious motion in the $(t,\boldsymbol{r})$ time-space is linear and returns
\begin{equation}
    \begin{array}{lll}
         x=x_0-\beta '  \q^{P/2}t && y=y_0-\dfrac{P}{2}\beta ' \gamma \p\q^{P/2-1} t   
         \\\\
         z=z_0-\beta' (1-\q^{P/2})t && w=w_0-\dfrac{P}{2}\beta '\m^{P-1}t
    \end{array}
\end{equation}
where $\boldsymbol{r}_0=(x_0,y_0,z_0,w_0)=(x(t=0),y(t=0),z(t=0),w(t=0))$. The Cauchy condition at $t=0$ is given by a direct computation at finite $N$ as
\footnotesize
\begin{equation}
\begin{array}{lll}
     & &\mathcal A^{(P)}(0,\boldsymbol{r}-\dot{\boldsymbol{r}}t)=\mathcal A^{(P)}(0,\boldsymbol{r}_0)
     \\\\
     & =&\dfrac{1}{N}\mathbb{E}\Bigg\lbrace\sommaSigma{\boldsymbol \sigma} \displaystyle\int \mathcal{D} \bm \tau\exp{}\Bigg[w_0 N\psi\,m(\boldsymbol \sigma)+\sqrt{x_0 N^{1-P/2}}\SOMMA{\mu>1}{K} \tilde{J}_{\mu}\tau_{\mu}
     +\sqrt{y_0}\SOMMA{i=1}{N} J_i\sigma_i+\dfrac{z_0 N^{1-P/2}}{2}\SOMMA{\mu>1}{K}\,{\tau^2_{\mu}}-\dfrac{\beta'\gamma}{2}N^{a+1-P/2}\Bigg]\Bigg\rbrace
     \\\\
     &=& \dfrac{1}{N}\mathbb{E}\ln{}\Bigg\lbrace\sum\limits_{\lbrace\sigma\rbrace} \exp{}\Big[\sum_i(w_0 +B J_i)\sigma_i\Big]\Bigg\rbrace+\dfrac{1}{N}\mathbb{E}\ln{}\Bigg\lbrace\prod\limits_{\mu>1}^{K}\displaystyle\int \dfrac{d\tau_{\mu}}{\sqrt{2\pi}}\,e^{(1-z_0N^{1-P/2})\tau^{\2}_{\mu}/2+\sqrt{x_0N^{1-P/2}}\tilde{J\,}_{\mu}\tau_{\mu}}\Bigg\rbrace -\dfrac{\beta '\gamma}{2}N^{a-P/2}
         \\\\
    & =&\ln{2}-\dfrac{\beta'\gamma}{2}N^{a-P/2}+\left\l\ln{\cosh{\left[\omega_0+J\sqrt{y_0}\right]}}\right\r_{J}-\dfrac{K}{2N}\ln{\left(1-z_0N^{1-P/2}\right)}+\dfrac{K}{2N}\dfrac{x_0N^{1-P/2}}{1-z_0N^{1-P/2}}\,.
\end{array}
\end{equation}
\normalsize

Giving the suitable values of parameters, namely $t=1$ and $\bm r=0$ we have the following 

\begin{equation}
\begin{array}{lll}
      \mathcal{A}^{(P)}(\gamma, \beta) & \coloneqq &\ln{2} -\dfrac{\beta' \gamma}{2}N^{a-P/2}+\left\langle\ln{\cosh{\left[\dfrac{P}{2}\beta '  \m^{P-1}+Y\sqrt{\beta ' \gamma  \dfrac{P}{2}\, \p \q^{^{P/2-1}}} \right]}}\right\rangle_Y+
      \\\\
      & & -\dfrac{P-1}{2} \beta ' \m^P -\dfrac{\beta ' \gamma}{4} P\,\p \q^{P/2-1}(1-\q)-\dfrac{\gamma N^{^{a-1}}}{2}\ln{\left(1-\beta 'N^{^{1-P/2}}\left(1-\q^{^{P/2}}\right)\right)}+
       \\\\
       && +\dfrac{\gamma N^{a-P/2}}{2}\dfrac{\beta '\q^{^{P/2}}}{1-\beta 'N^{^{1-P/2}}\left(1-\q^{^{P/2}}\right)}\,.
\end{array}
\label{eq:pressure_GuerraRS_noAPP}
\end{equation}

Now, expanding the two last member of \eqref{eq:pressure_GuerraRS_noAPP} for large value of $N$, (remembering the conditions $a\geq 1$ and $P\geq 2$ given in Definition \ref{def:pspinham}) it is possible to write the expression 
 \begin{equation}
 \begin{array}{lll}
       \mathcal{A}^{(P)}(\gamma, \beta) & \coloneqq & \ln{2}+\left\langle\ln{\cosh{\left[\dfrac{P}{2}\beta '  \m^{P-1}+Y\sqrt{\beta ' \gamma  \dfrac{P}{2}\,\p \q^{^{P/2-1}}} \right]}}\right\rangle_Y+
      \\\\
       & & -\dfrac{P-1}{2} \beta ' \m^P -\beta ' \gamma\dfrac{P}{4} \,\p \q^{P/2-1}(1-\q)+\dfrac{1}{4} N^{a+1-P}\gamma\beta '^2\left(1 - \q^P\right)+\mathcal{O}\left(N^{a-P}\right)\,.
 \end{array}
 \end{equation}
Thus, since we need that, in the thermodynamic limit, the quenched statistical pressure must be intensive in $N$ (we want that the terms in $\mathcal{O}(N^{a-P})$ vanish), it is necessary to ensure that
\begin{equation}
    a\leq P-1
\end{equation}
Moreover, it is easy to check that the only no trivial case is $a=P-1$, otherwise our model turns into a ferromagnet \cite{fachechi} with polynomial interaction of degree $P$. 

So, if we consider $P \geq 4$ (since we have rescaled $N^{P/2 -1} \p$ in $\p$), the previous expression reads as
\begin{equation}
\begin{array}{lll}
      A^{(P)} (\gamma, \beta) & \coloneqq &\ln{2}+\left\langle\ln{\cosh{\left[\dfrac{P}{2}\beta '  \m^{P-1}+Y\sqrt{\beta ' \gamma \dfrac{P}{2}\,   \p \q^{^{P/2-1}}} \right]}}\right\rangle_Y+
     \\\\
      & & -\dfrac{P-1}{2} \beta ' \m^P -\beta ' \gamma\dfrac{P}{4} \, \p \q^{P/2-1}(1-\q)+\dfrac{1}{4}\gamma\beta '^2\left(1 - \q^P\right)+\mathcal{O}\left(N^{-1}\right)\,;
\end{array}
\end{equation}
in the thermodynamics limit ($N\to\infty$) the correction terms will vanish.
\end{proof}

\section{Proof of Proposition \ref{prop:9}}\label{Appendix-Sua1}

Hereafter we give all the details regarding the proof of Proposition \ref{prop:9}.

\begin{proof}
Similar to the case RS (\ref{potential_m}-\ref{potential_pq}) and we have for $a=1,2$
\begin{align}
    \langle p_{12}q_{12}^{P/2}\rangle_a =& \sum_{k=1}^{P/2} \begin{pmatrix}\frac{P}{2}\\ k\end{pmatrix}\bar{q}_a^{P/2-k} \langle (\Delta p_a)(\Delta q_a)^k \rangle_a + \sum_{k=2}^{P/2} \begin{pmatrix}\frac{P}{2}\\ k\end{pmatrix} \bar{q}_a^{P/2-k} \bar{p}_a\langle (\Delta q_a)^k \rangle_a \notag \\
    &+\bar q_a^{P/2}\l\pp\r_a+\dfrac{P}{2}\bar q_a^{P/2-1}\bar p_a \l\qq\r_a-\dfrac{P}{2}\bar q_a^{P/2}\bar p_a
\end{align}
\normalsize
where we use $\Delta X_a=X-\bar{X}_a$. Now, starting to evaluate explicitly $\dt \mathcal{A}^{(P)}_N$ by using (\ref{eqn:partialx1A} - \ref{eqn:partialwA}) we write
\footnotesize
\begin{align}
\label{A_1RSB_thermo}
\dt \mathcal{A}^{(P)}_N=& \frac{\beta}{2} \left[ \sum_{k=2}^P \binom{P}{K} \langle (m_1-\bar{m})^k \rangle \bar{m}^{P-k} + \bar{m}^P (1-P) + P\bar{m}^{P-1}\langle m_1 \rangle \right] + \frac{\beta^{'}K}{2N^{P/2}} \Bigg\{ \langle p_{11} \rangle   \notag \\
\textcolor{white}{\dt \mathcal{A}^{(P)}_N=}&+ (\theta-1)\left[ \sum_{K=1}^{P/2} \binom{P/2}{K}\bar{q}_2^{P/2-K} \langle (p_{12}-\bar{p}_2)(q_{12} - \bar{q}_2)^K \rangle_2 + \sum_{K=2}^{P/2} \binom{P/2}{K} \bar{q}_2^{P/2-K} \bar{p}_2\langle (q_{12} - \bar{q}_2)^K \rangle_2\right.\notag \\
\textcolor{white}{\dt \mathcal{A}^{(P)}_N=}&\left.+\bar q_2^{P/2}\l\pp\r_2+\dfrac{P}{2}\bar q_2^{P/2-1}\bar p_2 \l\qq\r_2-\dfrac{P}{2}\bar q_2^{P/2}\bar p_2\right] - \theta \left[ \sum_{K=1}^{P/2} \binom{P/2}{K}\bar{q}_1^{P/2-K} \langle (p_{12}-\bar{p}_1)(q_{12} - \bar{q}_1)^K \rangle_1 \right.\notag \\
\textcolor{white}{\dt \mathcal{A}^{(P)}_N=}&\left.+ \sum_{K=2}^{P/2} \binom{P/2}{K} \bar{q}_1^{P/2-K} \bar{p}_1\langle (q_{12} - \bar{q}_1)^K \rangle_1+\bar q_1^{P/2}\l\pp\r_1+\dfrac{P}{2}\bar q_1^{P/2-1}\bar p_1 \l\qq\r_1-\dfrac{P}{2}\bar q_1^{P/2}\bar p_1\right]\Bigg\} = \notag \\
=&V_N(t, \bm r) + \frac{\beta^{'} \bar m^P (1-P)}{2}-\frac{\beta^{'}K (\theta-1)}{2N^{P/2}}\frac{P}{2}\bar{p}_2 \bar{q}_2^{P/2} + \frac{\beta^{'}K \theta}{2N^{P/2}}\frac{P}{2} \bar{p}_1 \bar{q}_1^{P/2} + \frac{\beta^{'}P}{2}\bar{m}^{P-1} \partial_w \mathcal{A}^{(P)}_N \notag \\
&+ \frac{\beta^{'}K}{2N^{P/2}} \langle p_{11} \rangle + \frac{\beta^{'}K}{2N^{P/2}} (\theta-1) \bar{q}_2^{P/2} \langle p_{12} \rangle_2 + \frac{\beta^{'}K}{2N^{P/2}} (\theta -1) \frac{P}{2} \bar{p}_2 \bar{q}_2^{P/2-1} \langle q_{12} \rangle_2 \notag \\
&+ {\beta^{'}\bar{q}_1^{P/2}} \left[ \partial_{x^{(1)}} \mathcal{A}^{(P)}_N - \frac{K}{2N^{P/2}} \langle p_{11} \rangle - \frac{K}{2N^{P/2}} (\theta-1) \langle p_{12} \rangle_2 \right] \notag \\
&+ \frac{\beta^{'} K P \bar{p}_1 \bar{q}_1^{P/2-1}}{2N^{P/2}} \left[\partial_{y^{(1)}} \mathcal{A}^{(P)}_N - \frac{1}{2} - \frac{1}{2} (\theta-1) \langle q_{12} \rangle_2 \right]   \notag \\
=& V_N(t, \bm r) + S(t, \bm r) + \frac{\beta^{'}P}{2}\bar{m}^{P-1} \partial_w \mathcal{A}^{(P)}_N + {\beta^{'} \bar{q}_1^{P/2}}\partial_{x^{(1)}} \mathcal{A}^{(P)}_N + {\beta^{'} (\bar{q}_2^{P/2}-\bar{q}_1^{P/2})}\partial_{x^{(2)}} \mathcal{A}^{(P)}_N \notag \\
&+ \frac{\beta^{'} KP }{2 N^{P/2}}\bar{p}_1 \bar{q}_1^{P/2-1}\partial_{y^{(1)}}\mathcal{A}^{(P)}_N+\frac{\beta^{'} KP }{2 N^{P/2}}(\bar{p}_2\bar{q}_2^{P/2-1} - \bar{p}_1 \bar{q}_1^{P/2-1})\partial_{y^{(2)}}\mathcal{A}^{(P)}_N + {\beta^{'}}(1-\bar q_2^{P/2})\partial_z \mathcal{A}^{(P)}_N
\end{align}
\normalsize
Thus, by placing
\begin{align}
 \label{eqn:dotx1}
\dot{x}^{(1)}&= -\beta^{'} \bar{q}_1^{P/2} \\
\dot{x}^{(2)}&= - \beta^{'} (\bar{q}_2^{P/2}-\bar{q}_1^{P/2})\\
\dot{y}^{(1)}&= -  \frac{\beta^{'} KP }{2 N^{P/2}}\bar{p}_1 \bar{q}_1^{P/2-1} \\
\dot{y}^{(2)}&= - \frac{\beta^{'} KP }{2 N^{P/2}}(\bar{p}_2\bar{q}_2^{P/2-1} - \bar{p}_1 \bar{q}_1^{P/2-1}) \\
\dot{z}&= -\beta^{'}(1-\bar q_2^{P/2})\\
\dot{w} &= - \frac{\beta^{'}P}{2}\bar{m}^{P-1} \label{eq:dotw}
\end{align}
and $N^{a-P/2}\p_2 $, $N^{a-P/2}\p_1 $ with $\p_2$, $\p_1$ we reach the thesis.

\end{proof}

\section{Proof of Theorem \ref{Susy1}}\label{SusyBreak}
In this section we provide all the details regarding the proof of Theorem \ref{Susy1}.
\begin{proof}
Since the potential $V_N(t, \bm r)$ vanishes in the thermodynamical limit, we can apply Remark \ref{r:above} and solve the following equation
\begin{align}
&\partial_t \mathcal A^{(P)}+\dot x^{(1)}\partial_{x_1} \mathcal A^{(P)} +\dot x^{(2)}\partial_{x_2} \mathcal A^{(P)} +\dot y^{(1)} \partial_{y_1} \mathcal A^{(P)} +\dot y^{(2)} \partial_{y_2} \mathcal A^{(P)}+\dot z \partial_{z} \mathcal A^{(P)}+\dot w \partial_{w} \mathcal A^{(P)}  \notag \\
&\textcolor{white}{+\dot z \partial_{z} \mathcal A^{(P)}+\dot w \partial_{w} \mathcal A^{(P)} }= \frac{\beta^{'} \bar m^P (1-P)}{2}-{\beta^{'}\gamma (\theta-1)}\frac{P}{2}\bar{p}_2 \bar{q}_2^{P/2} +{\beta^{'}\gamma \theta}\frac{P}{2} \bar{p}_1 \bar{q}_1^{P/2} - {\beta^{'}\gamma }\frac{P}{2}\bar{p}_2 \bar{q}_2^{P/2-1}.
\end{align}
We use the characteristic method to solve it and, after one body computation in the similar way as RS assumption, we find the explicit solution 
\footnotesize
\begin{align}
\label{pressfinaleGuerraRSB}
    &\mathcal{A}^{(P)} = \frac{1}{\theta} \mathbb{E}_1 \left\{ \ln \mathbb{E}_2  \cosh^\theta\left( \beta^{'}\frac{P}{2} \bar{m} ^{P-1}+  \sqrt{\beta^{'}\gamma P \p_1 \q_1^{P/2-1}} J^{(1)}+  \sqrt{\beta^{'}\gamma P (\p_2\q_2^{P/2-1} - \p_1 \q_1^{P/2-1})} J^{(2)} \right) \right\}\notag \\
&+\frac{ \gamma \beta^{'}N^{a-P/2}\q_1^{P/2} }{2 \left[1-\beta^{'}N^{1-P/2}(1-\q_2^{P/2})-\theta \beta^{'}N^{1-P/2}(\q_2^{P/2} - \q_1^{P/2})\right]}+ \ln 2 +\frac{\gamma N^{a-1}}{2}\ln (1-\beta^{'}N^{1-P/2}(1-\bar{q}_2^{P/2}))    \notag \\
&+ \frac{\gamma N^{a-1}}{2\theta} \ln\left(\frac{1-\beta^{'}N^{1-P/2}(1-\bar{q}_2^{P/2})}{1-\beta^{'}N^{1-P/2}(1-\bar{q}_2^{P/2})-\theta \beta^{'}N^{1-P/2}(\q_2^{P/2} - \q_1^{P/2}))}\right)-\dfrac{\beta '\gamma}{2}N^{a-P/2}\notag \\
&+ \frac{\beta^{'}}{2}\bar{m}^P (1-P)  -{\beta^{'} \gamma}(\theta-1) \dfrac{P}{2}\q_2^{P/2}\p_2  +{\beta^{'} \gamma}\theta \dfrac{P}{2}\q_1^{P/2}\p_1 .
\end{align}
\normalsize

Expanding some factors of \eqref{pressfinaleGuerraRSB} for large value of $N$, (remembering the conditions $a\geq 1$ and $P\geq 2$) it is possible to write the following expression for the quenched  pressure
\begin{align}
\mathcal{A}^{(P)} =& \ln 2 +\beta^{'}\gamma N^{a-P/2} + \frac{1}{\theta} \mathbb{E}_1 \ln \mathbb{E}_2 \cosh^\theta g( \bm J, \bar{m}) -  N^{a+1-P}\frac{\beta^{'}}{4}\gamma P\bar{p}_2\bar{q}_2^{P/2-1} + \frac{\beta^{'}}{2}\bar{m}^P (1-P) \notag \\
&-\theta(P-1)\dfrac{\b\gamma}{4}(\q_2^{P/2}\p_2-\q_1^{P/2}\p_1) + \frac{1}{4} {\beta^{'}}^2 \gamma.
\end{align}
Similar to RS assumption, a must satisfy the condition $a=P-1$ and
the previous expression for even $P \geq 4$ reads as
\begin{align}
 &\mathcal{A}^{(P)}=\ln 2 + \frac{1}{\theta}\mathbb{E}_1 \ln \mathbb{E}_2 \cosh^\theta g (\bm J, \bar{m}) -\dfrac{\gamma\b}{4}\q_2^{P/2-1}\p_2\Big(P-(P-1)\q_2\Big) \notag \\
    &+ \frac{\beta^{'}}{2}\bar{m}^P (1-P) -\theta(P-1)\dfrac{\b\gamma}{4}(\q_2^{P/2}\p_2-\q_1^{P/2}\p_1) + \frac{1}{4} {\beta^{'}}^2 \gamma+ \mathcal{O}(N^{-1})
    \label{eq:correct_1RSB_Trasp}
\end{align}
in the thermodynamics limit ($N\to\infty$) the correction terms will vanish.
\end{proof}

\section{Proof of Lemma \ref{lemma:4}} \label{app2}
We prove only (\ref{eqn:partialtA}) regarding Lemma \ref{lemma:4}, being the proofs for the others obtained in the same way. First of all, using (\ref{eqn:partialrA}) we see that
\begin{align}
\label{eqn:partialtAproof1}
\partial_t \mathcal{A}_N =\frac{\beta^{'}}{2}\langle m_1^P \rangle+\sqrt{\frac{\beta^{'}}{N^{P-1}}}\frac{1}{2N \sqrt{t}}\mathbb{E}_0\mathbb{E}_1\mathbb{E}_2 \left[ \mathcal{W}_2\sum_{ \bm i,\mu}\xi_i^\mu\omega( \sigma_{i_1} \hdots \sigma_{i_{P/2}}\tau_\mu) \right]
\end{align}
Now, using Stein's lemma (\ref{eqn:gaussianrelation2}), we may rewrite the second member of (\ref{eqn:partialtAproof1}) as
\begin{align}
\label{eqn:partialtAproof2}
\sqrt{\frac{\beta^{'}}{N^{P-1}}}\frac{1}{2N \sqrt{t}}\sum_{\bm i,\mu}\mathbb{E}_0\mathbb{E}_1 \mathbb{E}_2 \left[\partial_{\eta_{\bm i}^\mu}\bigg(\mathcal{W}_2\omega( \sigma_{i_1} \hdots \sigma_{i_{P/2}}\tau_\mu) \bigg)\right] =D_1+D_2+D_3
\end{align}
Let's investigate those three terms:
\footnotesize
\begin{align}
\label{eqn:D1}
D_1 =&\sqrt{\frac{\beta^{'}}{N^{P-1}}}\frac{1}{2N \sqrt{t}}\sum_{ \bm i,\mu}\mathbb{E}_0 \mathbb{E}_1  \mathbb{E}_2 \bigg[\mathcal{W}_2\partial_{\xi_i^\mu}\omega( \sigma_{i_1} \hdots \sigma_{i_{P/2}}\tau_\mu)\bigg] = \notag \\
=&\frac{\beta^{'}}{2N^P}\sum_{\bm i,\mu}\mathbb{E}_0 \mathbb{E}_1  \mathbb{E}_2 \left[\mathcal{W}_2\omega( (\sigma_{i_1} \hdots \sigma_{i_{P/2}}\tau_\mu)^2) \right]  -\frac{\beta^{'}}{2N^P}\sum_{\bm i,\mu}\mathbb{E}_0 \mathbb{E}_1 \mathbb{E}_2 \left[\mathcal{W}_2\omega(\sigma_{i_1} \hdots \sigma_{i_{P/2}}\tau_\mu)^2 \right]=\nonumber \\
=&\frac{\beta^{'}K}{2N^{P/2}} \left[ \langle p_{11} \rangle-\langle p_{12}q_{12}^{P/2} \rangle_2 \right] \\
D_2=&\sqrt{\frac{\beta^{'}}{N^{P-1}}}\frac{1}{2N \sqrt{t}}\sum_{\bm i,\mu}\mathbb{E}_0 \mathbb{E}_1  \mathbb{E}_2 \left[\omega( \sigma_{i_1} \hdots \sigma_{i_{P/2}}\tau_\mu)\frac{\partial_{\eta_{\bm i}^\mu}\mathcal Z_2^\theta}{\mathbb{E}_2 \left(\mathcal Z_2^\theta\right)} \right]= \notag \\
=&\frac{\beta^{'} \theta}{2N^P}\sum_{\bm i,\mu}\mathbb{E}_0 \left\{ \mathbb{E}_1 \left[ \mathbb{E}_2 \left(\mathcal{W}_2\omega( \sigma_{i_1} \hdots \sigma_{i_{P/2}}\tau_\mu)^2 \right) \right] \right\}=  \frac{\beta^{'}K}{2N^{P/2}}\theta \langle p_{12}q_{12}^{P/2} \rangle_2  \label{eqn:D2} \\
D_3=&\sqrt{\frac{\beta^{'}}{N^{P-1}}}\frac{1}{2N \sqrt{t}}\sum_{\bm i,\mu}\mathbb{E}_0 \mathbb{E}_1 \mathbb{E}_2 \left[\omega( \sigma_{i_1} \hdots \sigma_{i_{P/2}}\tau_\mu){\mathcal Z_2^{(P)}}^\theta\partial_{\eta_{\bm i}^\mu}\frac{1}{\mathbb{E}_2 \left({\mathcal Z_2^{(P)}}^\theta\right)} \right]= \notag \\
=&-\frac{\beta^{'}\theta}{2N^P}\sum_{\bm i,\mu}\mathbb{E}_0 \mathbb{E}_1 \mathbb{E}_2 \left[\omega( \sigma_i\tau_\mu)\mathcal{W}_2\mathbb{E}_2 \left(\mathcal{W}_2\frac{\partial_{\eta_{\bm i}^\mu}\mathcal Z_2^{(P)}}{\mathcal Z_2^{(P)}}\right)\right] = \nonumber \\
=&-\frac{\beta^{'}\theta}{2N^P}\sum_{i,\mu}\mathbb{E}_0\mathbb{E}_1 \mathbb{E}_2 \left[\omega( \sigma_{i_1} \hdots \sigma_{i_{P/2}}\tau_\mu)\mathcal{W}_2\mathbb{E}_2 \left(\omega(\sigma_{i_1} \hdots \sigma_{i_{P/2}}\tau_\mu)\mathcal{W}_2\right) \right]=-\frac{\beta^{'}K}{2N^{P/2}}\theta\langle p_{12}q_{12}^{P/2} \rangle_1
\label{eqn:D3}
\end{align}
\normalsize
Putting (\ref{eqn:D1}),  (\ref{eqn:D2}) and (\ref{eqn:D3}) inside (\ref{eqn:partialtAproof2}), and (\ref{eqn:partialtAproof2}) inside (\ref{eqn:partialtAproof1}) we find (\ref{eqn:partialtA}).

\section*{Acknowledgments}
The Authors acknowledge  INFN (FIELDTURB) and INdAM (GNFM) for providing computational facilities and the grants  by MUR (PRIN 2017, Project
no. 2017JFFHSH, {\em Stochastic Models for Complex Systems}) and by MEACI (Project ''BULBUL'': {\em Scientific, technological and industrial cooperation between Italy and Israel}).

\end{document}